\title{Two-Sided Weak Submodularity for Matroid Constrained Optimization and Regression} 
\author{Theophile Thiery\thanks{School of Mathematical Sciences, Queen Mary University of London, London, United Kingdom (\href{mailto:t.f.thiery@qmul.ac.uk}{t.f.thiery@qmul.ac.uk},
    \href{mailto:justin.ward@qmul.ac.uk}{justin.ward@qmul.ac.uk}). This work was supported by the Engineering and Physical Sciences Research Council [EP/T006781/1].} \and Justin Ward\footnotemark[1]}
\date{\today}
\renewcommand*\backref[1]{\ifx#1\relax \else ($\uparrow$ #1) \fi}
\pgfplotsset{compat=1.14}
\newcommand{\reals}{\mathbb{R}}
\newcommand{\posreals}{\reals_+}
\DeclareMathOperator*{\expect}{\mathbb{E}}
\DeclareMathOperator*{\cov}{\mathrm{Cov}}
\DeclareMathOperator*{\var}{\mathrm{Var}}
\DeclareMathOperator*{\res}{\mathrm{Res}}
\DeclareMathOperator*{\tr}{\mathrm{tr}}
\newcommand{\e}{\varepsilon}
\newcommand{\transp}{^{T}}
\renewcommand{\vec}{\mathbf}
\newcommand{\vb}{\vec{b}}
\newcommand{\ve}{\vec{e}}
\newcommand{\vy}{\vec{y}}
\newcommand{\vu}{\vec{u}}
\newcommand{\vv}{\vec{v}}
\newcommand{\vw}{\vec{w}}
\newcommand{\vx}{\vec{x}}
\newcommand{\vtheta}{\bm{\theta}}
\newcommand{\veps}{\bm{\varepsilon}}
\newcommand{\cA}{\mathcal{A}}
\newcommand{\cB}{\mathcal{B}}
\newcommand{\cD}{\mathcal{D}}
\newcommand{\cE}{\mathcal{E}}
\newcommand{\cI}{\mathcal{I}}
\newcommand{\cM}{\mathcal{M}}
\newcommand{\cO}{\mathcal{O}}
\newcommand{\calS}{\mathcal{S}}
\newcommand{\calL}{\mathcal{L}}
\newcommand{\cT}{\mathcal{T}}
\newcommand{\cX}{\mathcal{X}}
\newcommand{\hcT}{\hat{\cT}}
\newcommand{\hC}{\hat{C}}
\newcommand{\hb}{\hat{\vb}}
\newcommand{\hX}{\hat{X}}
\newcommand{\mc}[3]{m^{(#1)}_{#2,#3}}
\newcommand{\RR}{\mathbb{R}}
\newcommand{\lmin}{\lambda_{\min}}
\newcommand{\game}{\gamma_{\textrm{e}}}
\renewcommand{\epsilon}{\varepsilon}
\newcommand{\lc}{\!\left\{}
\newcommand{\rc}{\right\}}
\newcommand{\ld}{\!\left[}
\newcommand{\rd}{\right]}
\newcommand{\lb}{\!\left(}
\newcommand{\rb}{\right)}
\DeclarePairedDelimiterX{\inp}[2]{\langle}{\rangle}{#1; #2}
\DeclareMathOperator*{\argmax}{arg\,max}
\newcommand{\card}[1]{\left\vert #1\right\vert}
\newcommand{\bb}{\backslash}
\newcommand{\bigT}[1]{\Theta\lb #1 \rb}
\newcommand{\opt}{O}
\newcommand{\fopt}{f\lb \opt \rb}
\newcommand{\esp}[1]{\mathbb{E}\ld #1 \rd}
\newcommand{\prob}[1]{\mathbb{P} \ld #1 \rd}
\newtheorem{theorem}{Theorem}[section]
\newtheorem*{theorem*}{Theorem}
\newtheorem{lemma}[theorem]{Lemma}
\newtheorem{proposition}[theorem]{Proposition}
\theoremstyle{definition}
\theoremstyle{remark}
\newtheorem{example}[theorem]{Example}
\begin{document}
\maketitle
\begin{abstract}
    We study the following problem: Given a variable of interest, we would like to find a best linear predictor for it by choosing a subset of $k$ relevant variables obeying a matroid constraint. This problem is a natural generalization of subset selection problems where it is necessary to spread observations amongst multiple different classes.
    We derive new, strengthened guarantees for this problem by improving the analysis of the residual random greedy algorithm and by developing a novel distorted local-search algorithm.
    To quantify our approximation guarantees, we refine the definition of weak submodularity by  \cite{Das:2011:Submodular} and introduce the notion of an \emph{upper submodularity ratio}, which we connect to the minimum $k$-sparse eigenvalue of the covariance matrix.
    More generally, we look at the problem of maximizing a set function $f$ with lower and upper submodularity ratio $\gamma$ and $\beta$ under a matroid constraint. For this problem, our algorithms have asymptotic approximation guarantee $\frac{1}{2}$ and $1 - e^{-1}$ as the function is closer to being submodular.
    As a second application, we show that the Bayesian A-optimal design objective falls into our framework, leading to new guarantees for this problem as well.
\end{abstract}

\section{Introduction}
In the subset selection problem for linear regression, we are given a collection $\mathcal{X}$ of predictor variables and a target variable $Z$, as well as known covariances between each pair of variables. The goal is to find a small collection $\mathcal{S} \subseteq{\mathcal{X}}$ of at most $k$ predictor variables that gives the best linear predictor for $Z$. When $|\mathcal{X}|$ is very large, the \emph{forward regression} algorithm is commonly employed as a heuristic. It constructs $\mathcal{S}$ iteratively, at each step adding a variable that greedily maximizes the squared multiple correlation objective.
To explain the success of this approach in practice,  \cite{Das:2008:Algorithms} connected subset selection problems with submodular optimization. They showed that the squared multiple correlation objective function, also known as the $R^2$ objective, is \emph{submodular} in the absence of suppressor variables. Intuitively, a variable $X \in \mathcal{X}$ is a suppressor if there is some other variable $Y \in \mathcal{X}$ such that observing $X$ \emph{increases} the (conditional) correlation between $Y$ and the target variable $Z$. We give an example of such a situation in Example~\ref{ex:suppressor}. Even in the presence of suppressors, \cite{Das:2011:Submodular,DBLP:journals/jmlr/DasK18} showed that a weaker property they deemed \emph{weak submodularity} can be used to derive meaningful guarantees. The \emph{submodularity ratio} $\gamma$ measures how far a function deviates from submodularity when considering the aggregate effect of adding elements.
By treating the forward regression algorithm as a variant of the standard greedy algorithm, they showed that it has approximation guarantee of $(1 - e^{-\gamma})$, where $\gamma \in [0,1]$ can be lower bounded by the smallest $2k$-sparse eigenvalue $\lmin(C_{\mathcal{X}},2k)$ of the covariance matrix for $\mathcal{X}$. 

Here, we consider a natural generalization of this problem in which we must select a subset $\mathcal{S}$ that is independent in a general matroid constraint. Such constraints naturally capture settings in which some observations are mutually exclusive (for example, sensors that may be placed in different configurations) or in which it is desirable or necessary to spread observations amongst multiple different classes (for example by time or location).
In contrast to cardinality constraints, the best known guarantee for maximizing the $R^2$ objective in a general matroid is a randomized $1/(1+\gamma^{-1})^2$-approximation via the \textsc{ResidualRandomGreedy} algorithm due to~\cite{Chen:2018:Weakly}.
However, as $\gamma$ tends to $1$ (i.e.\ as the function becomes closer to a submodular function) this bound tends to only $1/4$, while both \textsc{ResidualRandomGreedy} and the standard greedy algorithm are known to provide a $1/2$ approximation for submodular objectives under a matroid constraint. The state of the art in this setting is a $(1-1/e)$-approximation~\cite{Calinescu:2011ju,Filmus:2014}, which is known to be tight~\cite{Nemhauser:1978dm,Feige:1998gx}.

A key difficulty is that the definition of weak submodularity considers only the effect of \emph{adding} elements to the current solution. In contrast, the analysis of \textsc{ResidualRandomGreedy} as well other state-of-the art procedures for submodular optimization in a matroid requires bounding the losses when elements are \emph{removed} or \emph{swapped} from some solution. Stronger ``element-wise'' notions of weak submodularity have been proposed that allow the adaptation of such algorithms but in general, these notions may give weaker bounds than those obtained when the submodularity ratio $\gamma$ can be utilized instead.

 \subsection{Our results}
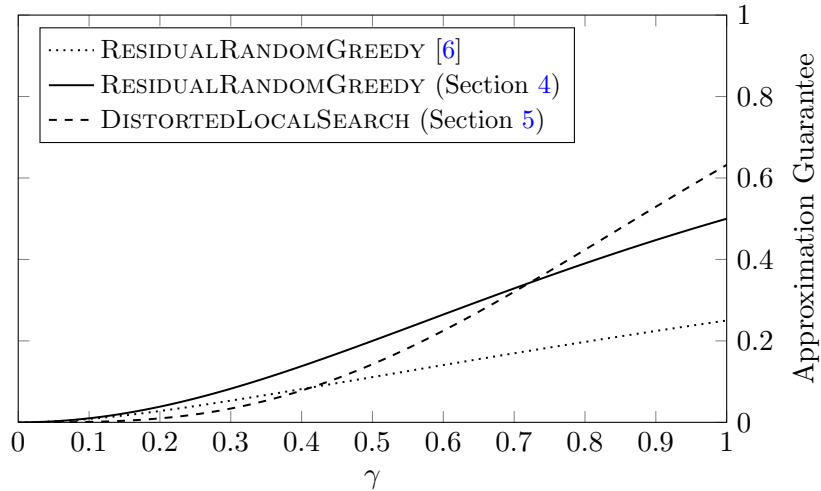
\begin{figure}[t]
\centering
\pgfplotsset{every axis plot/.append style={line width=0.75pt}}
\begin{tikzpicture}
\begin{axis}[
  xlabel={$\gamma$},
  ylabel={Approximation Guarantee},
  xmin=0,
  xmax=1,
  ymin = 0,
  ymax = 1,
  legend cell align=left,
  legend pos=north west,
  ylabel  near ticks,
  yticklabel pos = right,
  width = 11cm,
  height = 7cm
]
  \addplot[domain=0:1,samples=201,dotted] {x^2/(1+x)^2};
  \addplot[domain=0:1,samples=201] {x/(x+1/x)};
  \addplot[domain=0.05:1,samples=201,dashed] {(x^3)*(1-exp(1-1/x-x^2))/(1-x+x^3)};
\legend{\small \textsc{ResidualRandomGreedy}~\cite{Chen:2018:Weakly},\small\textsc{ResidualRandomGreedy} (Section~\ref{sec:impr-analys-rrg}), \small \textsc{DistortedLocalSearch} (Section~\ref{sec:non-oblivious-local})}
\end{axis}
\end{tikzpicture}
\caption{Guarantees for $(\gamma,1/\gamma)$-weakly submodular maximization under a matroid constraint.}
\label{fig:results}
\end{figure}
Here, we give a natural extension of the submodularity ratio $\gamma$ by considering an \emph{upper submodularity ratio} $\beta > 0 $, that bounds how far a function deviates from submodularity when considering the effect of \emph{removing} elements. Intuitively, the parameter $\beta$ compares the loss by removing an entire set compared to the aggregate individual losses for each element.
We show that, as with the submodularity ratio $\gamma$, our $\beta$ can be bounded by spectral quantities in the setting of regression. Specifically, we show that the $R^2$ objective satisfies $\beta \leq \lmin^{-1}(C_\cX,k) \leq \gamma^{-1}$. Since $\beta$ is defined in terms of removing elements from the solution, this requires a different spectral argument than that used to bound $\gamma$ in~\cite{DBLP:journals/jmlr/DasK18}. While their bound for $\gamma$ follows directly by considering an appropriate Rayleigh quotient, here we must relate the average value of the quadratic forms obtained from the inverses of all rank $k-1$ principle submatrices of a matrix $C$ to that obtained from $C^{-1}$.

We also consider the problem of Bayesian A-optimal design, which has been previously studied via weak submodularity by~\cite{Bian:2017:Guarantees,DBLP:conf/icml/HarshawFWK19,Hashemi:2019:Submodular}. In Appendix~\ref{sec:optimal-design}, we show that our parameter $\beta$ can be bounded by $\gamma^{-1}$ for this problem, as well.

Using the connection between subset selection problem and submodular maximization, we consider the more general problem of maximizing a \emph{$(\gamma,\beta)$-weakly submodular} set function for which both ratios are bounded.
We derive improved guarantees for matroid constrained maximization problems via \textsc{ResidualRandomGreedy}. Our guarantee is $\gamma/(\gamma + \beta)$, which approaches $1/2$ as the function $f$ becomes closer to submodular (i.e.\ as $\gamma,\beta \to 1$). It is natural, then, to ask whether it is possible to obtain an algorithm with guarantee approaching the optimal result of $1 - e^{-1}$ for submodular functions. We answer this affirmatively by giving a local search algorithm guided by a distorted potential, and show it achieves a guarantee approaching $(1-e^{-1} - \cO(\epsilon))$ for $(\gamma,\beta)$-weakly submodular functions as $\gamma,\beta \to 1$, where $\epsilon > 0$ is a constant parameter that can be chosen independently of $\gamma,\beta$. Combining these with our spectral bounds, our first analysis improves on the current state of the art for matroid constrained subset selection and A-optimal design problems for all $\gamma > 0$, and our second gives further improvements whenever $\gamma > 0.7217$ (see Figure~\ref{fig:results}).

Our distorted local-search algorithm builds upon similar techniques from the submodular case presented in~\cite{Filmus:2014}. There submodularity of $f$ implies submodularity of the potential $g$, which is used to derive the bounds on $g$ necessary for convergence and sampling, as well the crucial bound linking the local optimality of $g$ to the value of $f$. Here, however, since $f$ is only approximately submodular, these techniques will not work, and so we require a more delicate analysis for each of these components.
A further complication in our setting is that the correct potential $g$ depends on the values of $\gamma$ and $\beta$, which may not be known a priori. We give an approach that is based on guessing the value of a joint parameter in $\gamma$ and $\beta$. Each such guess gives a different distorted potential.  Inspired (broadly) by simulated annealing, we show that if each such new potential is initialized by the local optimum of the previous potential, then the overall running time can be amortized over all guesses.  We present a simplified version of the algorithm and potential function in Section~\ref{sec:non-oblivious-local}, and defer the more technical details to Appendices~\ref{sec:properties-g_phi} and \ref{sec:our-final-algorithm}.

Finally, given the relationship between $\gamma$ and $\beta$ in both problems we consider, it is natural to conjecture that $\beta$ may be bounded in terms of $\gamma$ in some generic fashion for \emph{every} weakly submodular function. However, we show that this is not the case by exhibiting (in Appendix~\ref{sec:how-large-can}) a function on a ground set of size $k$ for which $\beta$ must be $\bigT{k^{1-\gamma}}$.

\subsection{Additional Related Work}

The submodularity ratio $\gamma$ and the corresponding notion of weak submodularity was first introduced to analyze the forward regression and orthogonal matching pursuit algorithms for linear regression by~\cite{DBLP:journals/jmlr/DasK18}. It was latter related to restricted strong convexity by~\cite{Elenberg:2018:Strong}, leading to similar guarantees for generalized linear model (GLM) likelihood, graphical model learning objectives, or an arbitrary M-estimator.  The submodularity ratio has also been applied to the analysis greedy algorithms in other modes of computation~\cite{Khanna:2017:Scalable,Elenberg:2017:StreamWeak}. Together with related algorithmic techniques, it has also lead to algorithms for sensor placement problems \cite{Hashemi:2020:Randomized}, experimental design \cite{DBLP:conf/icml/HarshawFWK19,Bian:2017:Guarantees}, low rank optimization \cite{Khanna:2017:Approximation}, document summarization \cite{Chen:2018:Weakly}, and interpretation of neural networks \cite{Elenberg:2017:StreamWeak}

One motivation for the study of weak submodular functions is to bridge the gap between worst-case theory and the performance of algorithms practice. At the other extreme, one can consider the deviation of a submodular function from linearity. This leads to the notion of \emph{curvature} which can be used to strengthen approximation bounds for both for various submodular optimization problems~\cite{Conforti:1984ig,VondrakCurvature,Sviridenko:2015ur,yoshida_2018,Friedrich_Gobel_Neumann_Quinzan_Rothenberger_2019}, as well as in combination with weak submodularity~\cite{Bian:2017:Guarantees}. Inspired by insights from continuous optimization, \cite{Pokutta:2020:Sharpness} have recently introduced a new notion of \emph{sharpness}, which provides further explanation for the empirically good performance of the greedy algorithm on submodular objectives.


There have been various approaches based on considering \emph{element-wise} bounds on the deviation of a function from submodularity~\cite{DBLP:conf/aistats/BogunovicZC18,Nong:2019:Maximize,Gong:2019:Parametric} including generalizations to functions over the integer lattice~\cite{DBLP:conf/aaai/QianZT018,DBLP:conf/icml/KuhnleSCT18}. These approaches all involve relaxing the notion of decreasing marginal returns by requiring that a function $f$ satisfy $f(A \cup \{e\}) - f(A) \geq \game\cdot (f(B \cup \{e\}) - f(B))$ for all $e \not\in B$ and $A \subseteq B$, where $\game \in [0,1]$ has been variously dubbed the \emph{inverse curvature}~\cite{DBLP:conf/aistats/BogunovicZC18}, \emph{DR ratio}~\cite{DBLP:conf/icml/KuhnleSCT18}, or \emph{generic submodularity ratio}~\cite{Nong:2019:Maximize}. For such functions, it is easy to show that our parameters satisfy $\gamma \geq \game$ and $\beta \leq 1/\game$. Unfortunately, as we show in Section 3, the resulting inequalities may be very far from tight in our setting. In particular, analyses relying on $\game$ may fail to give any non-trivial approximation bounds for regression problems, even in situations when $\lmin(C_\cX)$ and the submodularity ratio $\gamma$ are positive. This observation has motivated our consideration of the more general parameter $\beta$, which allows spectral bounds to be utilized.

Finally, we note that all the definitions introduced here assume that the objective is monotone. Recently, \cite{Santiago:2020:Weakly} have proposed a notion of approximate submodularity that extends to the non-monotone case, as well.

\section{Preliminaries and Key Definitions}
\label{sec:weak-subm-from}
Throughout the remainder of the paper, all set functions $f : 2^X\rightarrow \RR_{\geq 0}$ that we consider will be \emph{monotone}, satisfying $f(B) \geq f(A)$ for all $A \subseteq B$. Because we are focusing on maximization problems, we will further assume without loss of generality that our objective functions are \emph{normalized}, so $f(\emptyset) = 0$. When there is no risk of confusion, we will use the shorthands $A+e$ for $A \cup \{e\}$ and $A - e$ for $A \setminus \{e\}$. We use the notation $f(e | A) \triangleq f(A + e) - f(A)$ for the marginal gain obtained in $f$ when adding an element $e \not \in A$ to $A$.

The problems that we consider will be constrained by an arbitrary matroid $\cM=(X,\cI)$. Here $\cI \subseteq 2^X$ is a family of \emph{independent sets}, satisfying $\emptyset \in \cI$, $A \subseteq B \in \cI \Rightarrow A \in \cI$, and for all $A,B \in \cI$ with $|A| < |B|$, there exists $e \in B \setminus A$ such that $A+e \in \cI$. The maximal independent sets of $\cI$ are called \emph{bases} of $\cM$, and the last condition implies that they all have the same size, called the \emph{rank} of $\cM$, which we typically denote by $k$. Our goal will be to find some $S \in \cI$ that maximizes the objective $f$. Since $f$ is monotone, we can assume without loss of generality that $S$ and the maximizer of $f$ is a base of $\cM$. Throughout, we will make use of the following standard result (in fact, this is the only property of matroids that we will use for our analyses):
\begin{proposition}\label{prop:exchange}
Let $\cM=(X,\cI)$ be a matroid. Then for any pair of
bases $A,B$ of $\cM$, there exists a bijection $\pi : A \to B$ so that $A - a + \pi(a) \in \cI$ for all $a \in A$.
\end{proposition}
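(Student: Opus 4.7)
The plan is to deduce this from the basic matroid augmentation axiom via Hall's marriage theorem. First, I would form the bipartite ``exchange graph'' $G$ with vertex classes $A$ and $B$ (treated as disjoint copies, even when $A \cap B \neq \emptyset$) in which $a \in A$ is adjacent to $b \in B$ precisely when $A - a + b \in \cI$. Any perfect matching in $G$ yields the desired bijection $\pi$, so the task reduces to verifying Hall's condition $|N_G(S)| \geq |S|$ for every $S \subseteq A$.

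To prove Hall's condition for a fixed $S \subseteq A$, I would extend the independent set $A \setminus S$ to a maximal independent subset of $(A \setminus S) \cup B$. Since $(A \setminus S) \cup B \supseteq B$ has rank $k$, this produces a basis of $\cM$ of the form $A' = (A \setminus S) \cup T$ with $T \subseteq B \setminus (A \setminus S)$ and $|T| = k - |A \setminus S| = |S|$. The remaining goal is to show $T \subseteq N_G(S)$, which immediately gives $|N_G(S)| \geq |T| = |S|$.

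Given $t \in T$, I would split into cases. If $t \in A$, then disjointness of $T$ from $A \setminus S$ forces $t \in A \cap S$, and $t$ is self-matched via the trivially valid $A - t + t = A \in \cI$. If $t \notin A$, then $A + t$ contains a unique fundamental circuit $C$, with $t \in C$ and $C - t \subseteq A$, and $A - s + t \in \cI$ holds iff $s \in C$. If $C \cap S$ were empty, we would have $C - t \subseteq A \setminus S$, hence $C \subseteq (A \setminus S) + t \subseteq A'$, contradicting independence of $A'$. Therefore some $s \in C \cap S$ witnesses $A - s + t \in \cI$, placing $t \in N_G(S)$.

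The main subtlety to get right is the treatment of $A \cap B$: these elements typically must be mapped to themselves to fill out a bijection, so the exchange graph must explicitly admit the self-edges $(a,a)$ for $a \in A \cap B$, and one must be careful in the above argument that $T$ is disjoint from $A \setminus S$ but may intersect $A \cap S$. Once this is set up, the argument is uniform, and the only nontrivial ingredients are the standard facts about fundamental circuits and Hall's theorem.
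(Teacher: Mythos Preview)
Your argument is correct: the bipartite exchange graph plus Hall's theorem, with Hall's condition verified by extending $A\setminus S$ to a basis inside $(A\setminus S)\cup B$ and using fundamental circuits, is exactly the standard route to this one-sided bijective exchange property (essentially Brualdi's proof). Your handling of the overlap $A\cap B$ via self-edges is also the right fix.

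There is nothing to compare against in the paper: the proposition is stated there as a standard matroid fact and is not proved. The authors simply cite it as ``the only property of matroids'' they need and move on. So your write-up supplies a proof the paper deliberately omits; if you want to match the paper's treatment, a pointer to a standard reference (e.g., Brualdi's 1969 exchange theorem, or any matroid text covering base exchange bijections) would suffice in place of the full argument.
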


A set function $f : 2^X \to \posreals$ is \emph{submodular} if and only if $f(e | A) \geq f(e | B)$ for all $A \subseteq B \subseteq X$ and $e \not \in B$. It can be shown that this is equivalent to requiring that $f(B) - f(A) \leq \sum_{e \in B \setminus A}f(e | A)$
for any $A \subseteq B \subseteq X$. The definition of \emph{weak-submodularity}\footnote{We note that the definition given here, which is also used in \cite{Bian:2017:Guarantees,Chen:2018:Weakly,Elenberg:2017:StreamWeak,DBLP:conf/icml/HarshawFWK19,Santiago:2020:Weakly}, is slightly adapted from the original definition given in \cite{Das:2011:Submodular}.} relaxes this inequality by requiring:
\begin{equation}
\label{eq:gamma-submod}
\gamma \cdot \left(f(B)  - f(A)\right) \leq \sum_{e \in B \setminus A}f(e | A)\,,
\end{equation}
for any set $A \subseteq B$, where $\gamma \in [0,1]$ is called the submodularity ratio of $f$. Note that when $\gamma \geq 1$, \eqref{eq:gamma-submod}  iff the function $f$ is submodular. 
Submodularity can equivalently be characterized by
$f(B) - f(A) \geq \sum_{e \in B \setminus A} f(e | B - e)$
for all $A \subseteq B \subseteq X$. Thus, another natural approach is to consider functions that satisfy:
\begin{equation}
\label{eq:beta-submod}
\beta \cdot \left(f(B) - f(A)\right) \geq \sum_{e \in B \setminus A}f(e | B - e)
\end{equation}
for some $\beta \geq 1$. We call this property \emph{$\beta$-weak submodularity from above} to distinguish it from~\eqref{eq:gamma-submod}, which we will now refer to as \emph{$\gamma$-weak submodularity from below}.
Here, $\beta \leq 1$ if and only if $f$ is submodular.
Note that by monotonicity, $f(B) - f(A) \geq f(B) - f(B - e) \geq 0$ for any $e \in B \setminus A$ and so every monotone function $f$ satisfies~\eqref{eq:beta-submod} for $\beta = \card{B}$. We say that a set function $f$ is \emph{$(\gamma,\beta)$}\emph{-weakly submodular} if it is $\gamma$-weakly submodular from below and $\beta$-weakly submodular from above (i.e.\ it satisfies both \eqref{eq:gamma-submod} and \eqref{eq:beta-submod}).

\section{Subset Selection}
\label{sec:R2}
We now turn to the subset selection problem. Let $\cM = (X, \cI)$ be a matroid. Let $Z$ be a target random variable we wish to predict, and let $\cX = \{X_1,\ldots,X_n\}$ be a set of $n$ predictor variables (where here and throughout this section we use calligraphic letters to denote sets of random variables to avoid confusion). We suppose that $Z$ and all $X_i$ have been normalized to have mean 0 and variance 1, and let $C_{\cX}$ be the  $n \times n$ covariance matrix for the variables $X_i$. Our goal is to find a set $\calS \subseteq \cX$, that is independent in some given matroid over $\cX$ and gives the best linear predictor for $Z$. In other words, we want to solve the following optimization problem:
$$ \argmax_{\calS \in \cI} R^2_{Z, \calS} = \argmax_{\calS \in \cI} (\var(Z) - \esp{(Z - Z_{\calS})^2})/\var(Z), $$
where $R^2$ is a measure of fitness of the linear predictor using the \emph{squared multiple correlation}, and   $Z_{\calS} = \sum_{X_i \in \calS} \alpha_i X_i$ is the linear predictor over $\calS$ which optimally minimizes the mean square prediction error for $Z$.
The coefficients of this best linear predictor are given by $\bm{\alpha} = C_{\calS}^{-1}\vb_{Z,\calS}$, where $C_{\calS}$ is the principle submatrix of $C_{\cX}$ corresponding to variables in $\calS$, and $\vb_{Z,\calS}$ is a vector of covariances between $Z$, and each $X_i \in \calS$, i.e. $(C_{\calS})_{i,j} = \cov(X_i, X_j)$ and $(\vb_{Z,\calS})_{i} = \cov(X_i,Z)$. Therefore, if we let $X_{\calS}$ denote the corresponding vector of random variables in $\calS$, the best linear predictor can be written as: $Z_{\calS} = X_{\calS}\transp C_{\calS}^{-1}\vb_{Z,\calS}$.
Because $Z$ has unit variance, the objective simplifies to $R^2_{Z, \calS} = 1 - \esp{(Z-Z_{\calS})^2}$, and so
the $R^2$ objective can be regarded as a measure of the fraction of variance of $Z$ that is explained by $\calS$.
In addition, we can define the \emph{residual} of $Z$ with respect to this predictor as the random variable $\res(Z,\calS) = Z - Z_{\calS} = Z - X_{\calS}\transp C_{\calS}^{-1}\vb_{Z,\calS}$. Therefore, $R^2_{Z, \calS} = 1 - \var(\res(Z,\calS)) = \vb^T_{Z, \calS} C_{\calS}^{-1}\vb_{Z, \calS}$.

Das and Kempe \cite{DBLP:journals/jmlr/DasK18} show that the $R^2$ objective satisfies~\eqref{eq:gamma-submod} for all $\cA \subseteq \cB \subseteq \cX$ with $\gamma \geq \lambda_{\min}(C_{\cX}, \card{\cB}) \geq  \lmin(C_{\cX})$, where $\lmin(C_{\cX})$ is the smallest eigenvalue of $C_{\cX}$ and $\lmin(C_{\cX}, \card{\cB})$ is the smallest \emph{$\card{\cB}$-sparse eigenvalue of $C_{\cX}$}. In this section we derive an analogous result.
\begin{theorem}\label{thm:beta-R2}
    For any $\cB \subseteq \cX$, the $R^2$ objective satisfies \eqref{eq:beta-submod} with $\beta \leq \frac{1}{\lmin(C_{\cX}, \card{\cB})} \leq \frac{1}{\lmin(C_{\cX})}$.
    \end{theorem}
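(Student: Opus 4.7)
The plan is to work directly with the algebraic form $f(\cB) = \vb_{Z,\cB}^T C_\cB^{-1} \vb_{Z,\cB}$ and reduce both sides of~\eqref{eq:beta-submod} to expressions in the single vector $\vx := C_\cB^{-1}\vb_{Z,\cB}$, so that the bound follows from a Rayleigh-quotient argument. The main obstacle is that~\eqref{eq:beta-submod} involves marginals $f(e\mid \cB-e)$ and a telescoping gain $f(\cB)-f(\cA)$ for a \emph{general} $\cA\subseteq\cB$, so controlling it purely via $\lmin(C_\cX,|\cB|)$ requires showing that the relevant Schur-complement-indexed quadratic form is bounded in terms of the same sparse eigenvalue as for the full matrix.

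First I would compute each individual marginal using the block matrix inversion formula. Writing $C_\cB$ with $e$ pulled out as the last row/column, with Schur complement $s_e = 1 - \vc_e^T C_{\cB-e}^{-1}\vc_e\in(0,1]$ (the conditional variance of $X_e$ given $\cB-e$, at most $1$ since $X_e$ has unit variance), and $\tilde b_e = b_e - \vc_e^T C_{\cB-e}^{-1}\vb_{\cB-e}$, the standard $R^2$ identity gives
\[
f(\cB) - f(\cB - e) \;=\; \tilde b_e^{\,2}/s_e \;=\; \vx_e^{\,2}\, s_e \;\le\; \vx_e^{\,2},
\]
where I am using that $\vx_e$, the $e$-th coordinate of $\vx = C_\cB^{-1}\vb_{Z,\cB}$, equals $\tilde b_e/s_e$ by the block-inverse formula. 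Summing over $e\in\cB\setminus\cA$ gives
\[
\sum_{e\in\cB\setminus\cA}\!\bigl(f(\cB) - f(\cB-e)\bigr) \;\le\; \|\vx_{\cB\setminus\cA}\|^2.
\]

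Second I would compute the difference $f(\cB) - f(\cA)$ using the same block structure, now with $\cA$ and $\cB\setminus\cA$ as the two blocks. Let $\tilde C := C_{\cB\setminus\cA} - C_{\cB\setminus\cA,\cA}\, C_\cA^{-1}\, C_{\cA,\cB\setminus\cA}$ be the Schur complement, and $\vy$ the residualized covariance vector for $\cB\setminus\cA$. The standard residual-variance identity gives $f(\cB)-f(\cA)=\vy^T \tilde C^{-1}\vy$, and the block inverse formula yields $\vx_{\cB\setminus\cA} = \tilde C^{-1}\vy$, hence
\[
f(\cB) - f(\cA) \;=\; \vx_{\cB\setminus\cA}^T\,\tilde C\,\vx_{\cB\setminus\cA}.
\]

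Third, I would apply the Rayleigh bound $\vx_{\cB\setminus\cA}^T \tilde C\, \vx_{\cB\setminus\cA} \ge \lmin(\tilde C)\,\|\vx_{\cB\setminus\cA}\|^2$ and show that $\lmin(\tilde C)\ge \lmin(C_\cX,|\cB|)$. For the latter, the key point is that $\tilde C^{-1}$ is exactly the principal submatrix of $C_\cB^{-1}$ indexed by $\cB\setminus\cA$, so $\lmax(\tilde C^{-1})\le \lmax(C_\cB^{-1})$, giving $\lmin(\tilde C)\ge \lmin(C_\cB)\ge \lmin(C_\cX,|\cB|)$. Chaining the three steps gives
\[
\sum_{e\in\cB\setminus\cA}\!\bigl(f(\cB)-f(\cB-e)\bigr)\;\le\;\|\vx_{\cB\setminus\cA}\|^2 \;\le\; \frac{f(\cB)-f(\cA)}{\lmin(C_\cX,|\cB|)},
\]
which is exactly~\eqref{eq:beta-submod} with $\beta \le 1/\lmin(C_\cX,|\cB|)\le 1/\lmin(C_\cX)$. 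The delicate part is really the second step: unlike the bound for $\gamma$ in Das--Kempe, where one can use a single Rayleigh quotient on $C_\cX$ directly, here one must identify the right Schur complement and exploit the non-obvious fact that its inverse sits as a principal submatrix of $C_\cB^{-1}$, which is what ultimately allows the bound to be stated in terms of the same $|\cB|$-sparse eigenvalue.
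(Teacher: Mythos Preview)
Your argument is correct, and it takes a genuinely more direct route than the paper's.

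The paper first residualizes against $\cA$ and normalizes, obtaining the covariance matrix $\hC$ of the normalized residuals $\hX_i = \res(X_i,\cA)$. It then computes each marginal $R^2_{Z,\cB}-R^2_{Z,\cB-X_i}$ as $\hb^T H_i\hb/s_i$ for an explicit rank-one matrix $H_i$, expands $\hb$ in an eigenbasis of $\hC^{-1}$, and proves a separate lemma showing $\vv^T H_i\vw = \lambda\mu s_i^2 v_iw_i$ for eigenpairs $(\lambda,\vv),(\mu,\vw)$. Summing and using orthonormality collapses the double sum to $\sum_i y_i^2\lambda_i^2 \le \lambda_{\max}(\hC^{-1})\sum_i y_i^2\lambda_i$, after which $\lambda_{\min}(\hC)\ge \lmin(C_\cX,|\cB|)$ is invoked via the Das--Kempe eigenvalue lemma about normalized residual covariance matrices.

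Your approach short-circuits all of this. By working directly with $C_\cB$ and the single vector $\vx = C_\cB^{-1}\vb_{Z,\cB}$, you recognize via block inversion that $f(\cB)-f(\cB-e) = x_e^2 s_e \le x_e^2$ and that $f(\cB)-f(\cA) = \vx_{\cB\setminus\cA}^T\tilde C\,\vx_{\cB\setminus\cA}$, so the whole inequality reduces to a clean Rayleigh bound on the Schur complement $\tilde C$. Your spectral step, observing that $\tilde C^{-1}$ is a principal submatrix of $C_\cB^{-1}$ and hence $\lmin(\tilde C)\ge\lmin(C_\cB)$, replaces the paper's appeal to the Das--Kempe lemma with a one-line interlacing argument. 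In effect, the paper's eigenbasis calculation is a disguised proof that $\hb^T H_i\hb/s_i^2 = (\hC^{-1}\hb)_i^2$; you obtain the analogous identity $x_e = \tilde b_e/s_e$ directly from the block-inverse formula, which is both shorter and more transparent. The paper's route has the minor conceptual advantage of isolating the residualized problem, but your argument is more elementary, self-contained, and, notably, shows that the bound \emph{can} be obtained from a straightforward Rayleigh-quotient argument once the right coordinates are identified---contrary to what the paper's introductory discussion suggests.
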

    Combined with existing bounds for $\gamma$, it shows that the $R^2$ objective is $(\gamma,1/\gamma)$-weakly submodular for $\gamma = \lmin(C_{\cX})$. To analyze the greedy algorithm, it suffices to let $\cB \setminus \cA$ and $\cA$ both be subsets containing at most $k$ variables, so $\card{\cB} = 2k$, which can lead to tighter bounds on $\gamma$. In fact, for both the algorithms we consider in the next two sections, it suffices to consider sets $B$ of size $k$ in \eqref{eq:beta-submod} and so in practice the tighter bound of $\beta \leq 1/\lmin(C_{\cX},k)$ holds.

    First, we consider the following small example that illustrates that the \emph{element-wise} bounds on $\game$ (inverse curvature, DR ratio, or generic submodularity ratio) are in general not bounded by $\lmin(C_\cX)$. In fact, we may have $\game = 0$ (and so approximation bounds based on $\game$ fail) even when $\lmin(C_\cX)$ is bounded away from 0.
\begin{example}
  Let $Z$,$X_1$,$X_2$ be random variables with unit variance and zero mean. Suppose that $X_1$ is uncorrelated with $Z$, and $X_2 = (Z+X_1)/\sqrt{2}$. Then, $\cov(X_1,Z) = 0$ and $\cov(X_1,X_2) = \cov(X_2,Z) = 1/\sqrt{2}$. Let $f(\mathcal{S}) = R^2_{Z,\mathcal{S}}$. Then, it can be verified that $f(X_1 | \emptyset) = 0$ and $f(X_1 | \{X_2\}) = 1/2$. Thus $f(X_1 | \emptyset) \geq \game \cdot f(X_1 | \{X_2\})$ is satisfied only for $\game = 0$. However, $\lmin(C_{\cX}^{-1}) = 1 - 1/\sqrt{2}$ and, in fact, explicitly computing $\gamma$ gives $\gamma = 1/2$.
    \label{ex:suppressor}
 \end{example}

Next, we turn to the proof of Theorem \ref{thm:beta-R2}. In order to prove our bounds, we will use the following facts stated in~\cite{DBLP:journals/jmlr/DasK18}:
\begin{lemma}
    Given two sets of random variables $\calS = \{X_1, \ldots, X_n\},$ and $ \cA$, and a random variable $Z$ we have: $\res(Z, \cA \cup \calS) = \res(\res(Z, \cA), \{\res(X_i, \cA)\}_{X_i \in \calS})$.
    \label{lem:2.3-das}
\end{lemma}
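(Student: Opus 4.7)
The plan is to reinterpret residuals as orthogonal projections in the $L^2$ Hilbert space of zero-mean, finite-variance random variables (with inner product $\langle X, Y\rangle = \esp{XY} = \cov(X,Y)$) and then exploit an orthogonal decomposition of the relevant subspace. Concretely, by the optimality of the best linear predictor, for any set $\cT$ of predictors the random variable $Z_{\cT} = X_{\cT}^{T}C_{\cT}^{-1}\vb_{Z,\cT}$ is exactly the orthogonal projection $P_{\cT}Z$ of $Z$ onto the linear span of $\cT$, and hence $\res(Z,\cT) = Z - P_{\cT}Z$ lies in the orthogonal complement of $\mathrm{span}(\cT)$.

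With this in hand, let $W = \mathrm{span}(\cA)$, $V = \mathrm{span}(\cA \cup \calS)$, and $U = \mathrm{span}(\{\res(X_i,\cA)\}_{X_i \in \calS})$. I would then establish three facts: (i) $U \perp W$, directly from the definition of residual ($\res(X_i,\cA) = X_i - P_W X_i$ is orthogonal to $W$); (ii) $U \subseteq V$, since each $\res(X_i,\cA) = X_i - P_W X_i$ is a linear combination of elements of $\cA \cup \calS$; (iii) $W + U = V$, because $X_i = P_W X_i + \res(X_i,\cA) \in W + U$ for each $X_i \in \calS$ and clearly $\cA \subseteq W$. Together, (i)--(iii) give the orthogonal direct-sum decomposition $V = W \oplus U$, and hence the projection identity $P_V = P_W + P_U$.

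Applying this to $Z$ yields $P_V Z = P_W Z + P_U Z$. Writing $Z = P_W Z + \res(Z,\cA)$ and using $P_W Z \in W \perp U$, we get $P_U Z = P_U \res(Z,\cA)$. Substituting,
\begin{equation*}
\res(Z,\cA \cup \calS) \;=\; Z - P_V Z \;=\; (Z - P_W Z) - P_U Z \;=\; \res(Z,\cA) - P_U\bigl(\res(Z,\cA)\bigr),
\end{equation*}
and the right-hand side is by definition $\res\bigl(\res(Z,\cA),\{\res(X_i,\cA)\}_{X_i \in \calS}\bigr)$, since $U = \mathrm{span}(\{\res(X_i,\cA)\}_{X_i \in \calS})$.

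The main obstacle is a minor technical one: the formula $Z_{\cT} = X_{\cT}^{T}C_{\cT}^{-1}\vb_{Z,\cT}$ presupposes that $C_{\cT}$ is invertible, i.e.\ that the variables in $\cT$ are linearly independent in $L^2$. I would handle this either by assuming (as is standard in the subset selection literature, and consistent with the spectral hypothesis $\lmin(C_\cX) > 0$ used throughout the paper) that $C_{\cX}$ is positive definite so that every subsystem is invertible, or alternatively by replacing the inverse with the $L^2$ orthogonal projection onto $\mathrm{span}(\cT)$, which is well defined regardless of rank; either choice makes the projection-based argument above go through verbatim.
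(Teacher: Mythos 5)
Your argument is correct: identifying the best linear predictor with the orthogonal projection in the $L^2$ inner-product space and using the orthogonal decomposition $\mathrm{span}(\cA \cup \calS) = \mathrm{span}(\cA) \oplus \mathrm{span}(\{\res(X_i,\cA)\})$ is exactly the standard route to this identity. The paper itself gives no proof---it imports the lemma directly from Das and Kempe---and your projection argument is the same one used there, with the invertibility caveat handled appropriately.
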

\begin{lemma}
    Given two sets of random variables $\calS = \{X_1, \ldots, X_n\},$ and $ \cA$, and a random variable $Z$ we have:
    $R^2_{Z, \cA \cup \calS} = R^2_{Z, \cA} + R^2_{Z, \{\res(X_i, \cA)\}_{X_i \in \calS}}$.
    \label{lem:2.4-das}
\end{lemma}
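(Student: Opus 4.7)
The plan is to reduce Lemma~\ref{lem:2.4-das} directly to Lemma~\ref{lem:2.3-das} by a short orthogonality argument. Introduce the shorthands $Y \triangleq \res(Z,\cA)$ and $\mathcal{W} \triangleq \{\res(X_i,\cA)\}_{X_i \in \calS}$. Lemma~\ref{lem:2.3-das} gives $\res(Z,\cA\cup\calS) = \res(Y,\mathcal{W})$, and since $\var(Z)=1$ the paper's identity $R^2_{Z,\mathcal{T}} = 1 - \var(\res(Z,\mathcal{T}))$ lets me convert the target equality $R^2_{Z,\cA\cup\calS} = R^2_{Z,\cA} + R^2_{Z,\mathcal{W}}$, after using $\var(Y) = 1 - R^2_{Z,\cA}$, into the equivalent goal
\begin{equation*}
\var(Y) - \var(\res(Y,\mathcal{W})) = R^2_{Z,\mathcal{W}}.
\end{equation*}

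The left-hand side is the explained variance of $Y$ by the regressors $\mathcal{W}$. By the defining orthogonality of a residual to its projection, $\var(Y) - \var(\res(Y,\mathcal{W})) = \esp{Y_{\mathcal{W}}^2}$, and the closed form for the projection coefficients yields $\esp{Y_{\mathcal{W}}^2} = \vb_{Y,\mathcal{W}}\transp C_{\mathcal{W}}^{-1}\vb_{Y,\mathcal{W}}$. The corresponding explicit expression for the right-hand side (as already recorded in the paper) is $R^2_{Z,\mathcal{W}} = \vb_{Z,\mathcal{W}}\transp C_{\mathcal{W}}^{-1}\vb_{Z,\mathcal{W}}$. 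Hence the whole statement reduces to the covariance-vector identity $\vb_{Y,\mathcal{W}} = \vb_{Z,\mathcal{W}}$.

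This is where the only real content sits, and it is short. For each $X_i \in \calS$, the residual $\res(X_i,\cA)$ is orthogonal in $L^2$ to every variable in $\cA$, hence to the linear combination $Z_\cA$; writing $Z = Z_\cA + Y$ and applying linearity of covariance gives $\cov(Z,\res(X_i,\cA)) = \cov(Z_\cA,\res(X_i,\cA)) + \cov(Y,\res(X_i,\cA)) = \cov(Y,\res(X_i,\cA))$, and collecting components over $X_i \in \calS$ yields the vector equality $\vb_{Y,\mathcal{W}} = \vb_{Z,\mathcal{W}}$. The main bookkeeping hazard is that $R^2_{Z,\mathcal{W}}$ and the explained variance of $Y$ by $\mathcal{W}$ are defined relative to different target variances ($\var(Z)=1$ versus $\var(Y) = 1 - R^2_{Z,\cA}$), but once the covariance vectors coincide, both quantities collapse to the same quadratic form $\vb_{Z,\mathcal{W}}\transp C_{\mathcal{W}}^{-1}\vb_{Z,\mathcal{W}}$, and the identity snaps into place without any further normalization subtlety.
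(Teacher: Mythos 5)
Your proof is correct; note that the paper itself does not prove this lemma at all---it imports both Lemma~\ref{lem:2.3-das} and Lemma~\ref{lem:2.4-das} as ``facts stated in'' Das and Kempe, so there is no in-paper argument to compare against. Your reduction is sound on every step I can check: the conversion of the target identity to $\var(Y) - \var(\res(Y,\mathcal{W})) = R^2_{Z,\mathcal{W}}$ via Lemma~\ref{lem:2.3-das} and $\var(Z)=1$ is right; the Pythagorean decomposition $\var(Y) = \esp{Y_{\mathcal{W}}^2} + \var(\res(Y,\mathcal{W}))$ follows from the normal equations since all variables (including the residual $Y$) are zero-mean; and the key observation $\vb_{Y,\mathcal{W}} = \vb_{Z,\mathcal{W}}$ is exactly the right place to locate the content---each $\res(X_i,\cA)$ is orthogonal to the span of $\cA$, hence to $Z_\cA$, so $\cov(Z,\res(X_i,\cA)) = \cov(Y,\res(X_i,\cA))$. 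You also correctly flag and dispose of the normalization hazard: you never claim the explained variance of $Y$ equals $R^2_{Y,\mathcal{W}}$ (which would carry a factor $\var(Y)^{-1}$), but instead compare both sides as the same quadratic form $\vb_{Z,\mathcal{W}}\transp C_{\mathcal{W}}^{-1}\vb_{Z,\mathcal{W}}$. The only caveat, shared with the paper's own usage throughout Section~\ref{sec:R2}, is that you implicitly assume $C_{\mathcal{W}}$ is invertible (i.e., the residuals are linearly independent); in the degenerate case one would replace inverses by pseudo-inverses, but this is not a gap relative to the paper's conventions.
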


We define following quantities, which we use for the rest of the section. Let $\cA, \cB$ be some fixed sets of random variables with $\cA \subseteq \cB$. Let $\cT = \cB \setminus \cA$ and suppose without loss of generality that $\cT = \{X_1, \ldots, X_t\}$. For each $X_i \in \cT$, let $\hX_i = \res(X_i,\cA)$ and suppose further that each $\hX_i$ has been renormalized to have unit variance. Let $\hcT = \{\hX_i,\ldots,\hX_t\}$, $\hC$  to be the covariance matrix for $\hcT$, and $\hb$ to be the vector of covariances between $Z$ and each $\hX_i \in \hcT$.
We fix a single random variable $X_i$. For ease of notation, in the next two Lemmas we assume without loss of generality that $\hC$ and $\hb$ have been permuted so that $X_i$ corresponds to the last row and column of $\hC$. Then, we define $\cT_{-i} = \cT \setminus \{X_i\}$, $\hcT_{-i} = \hcT \setminus \{\hX_i\}$, and let $\hX_{-i}$ denote the vector containing the variables of $\hcT_{-i}$ (ordered as in $\hC$ and $\hb$). Similarly, let $\hC_{-i}$ be the principle submatrix of $\hC$ obtained by excluding the row and column corresponding to $\hX_i$ (i.e., the last row and column), and $\hb_{-i}$ be the vector obtained from $\hb$ by excluding the entry for $\hX_i$ (i.e., the last entry). Finally, we let $\vu_i$ be the vector of covariances between $\hX_i$ and each $\hX_j \in \hcT_{-i}$. Note that $\vu_i$ corresponds to the last column of $\hC$ with its last entry (corresponding to $\var(\hX_i)$) removed.
We begin by computing the loss in $R^2_{Z,\cB}$ when removing $X_i$ from $\cB$:
\begin{lemma}
$R^2_{Z, \cB} - R^2_{Z, \cB \setminus \{X_i\}} = \cov(Z,\res(\hX_i,\hcT_{-i}))^2/\var(\res(\hX_i,\hcT_{-i})) = \hb^TH_i\hb / s_i$, where $H_i = \begin{pmatrix}
\hC_{-i}^{-1} \vu_i \vu_i^T \hC_{-i}^{-1} & -\hC_{-i}^{-1} \vu_i \\
            - \vu_i^T \hC_{-i}^{-1} & 1
        \end{pmatrix}$ and $s_i = 1 - \vu_i^T \hC_{-i}^{-1} \vu_i$.
    \label{lem:R2-num1}
\end{lemma}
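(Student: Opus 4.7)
The plan is to telescope $R^2_{Z,\cB} - R^2_{Z,\cB \setminus \{X_i\}}$ down to the $R^2$ of a single random variable via repeated applications of Lemma~\ref{lem:2.4-das}. First, applying that lemma with base set $\cA$ to $\cB = \cA \cup \cT$ and to $\cB \setminus \{X_i\} = \cA \cup \cT_{-i}$ respectively yields $R^2_{Z,\cB} = R^2_{Z,\cA} + R^2_{Z,\hcT}$ and $R^2_{Z,\cB \setminus \{X_i\}} = R^2_{Z,\cA} + R^2_{Z,\hcT_{-i}}$. Here I use that rescaling predictor variables does not change $R^2$, so the unit-variance renormalization baked into the definition of $\hcT$ and $\hcT_{-i}$ is harmless. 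Subtracting eliminates $R^2_{Z,\cA}$, leaving $R^2_{Z,\hcT} - R^2_{Z,\hcT_{-i}}$. A final application of Lemma~\ref{lem:2.4-das}, this time with base set $\hcT_{-i}$ and $\hcT = \hcT_{-i} \cup \{\hX_i\}$, collapses this difference to $R^2_{Z, W}$ where $W \triangleq \res(\hX_i, \hcT_{-i})$ is a single random variable.

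Since $W$ is a single variable and $\var(Z) = 1$, the $R^2$ has the elementary closed form $R^2_{Z,W} = \cov(Z,W)^2 / \var(W)$, which gives the first claimed equality. To obtain the second equality, I would compute $W$, $\var(W)$, and $\cov(Z,W)$ in coordinates. The optimal linear predictor of $\hX_i$ from the variables in $\hcT_{-i}$ has coefficient vector $\hC_{-i}^{-1} \vu_i$, so $W = \hX_i - \vu_i^T \hC_{-i}^{-1} \hX_{-i}$. Since $\hX_i$ has unit variance after renormalization, a direct computation gives $\var(W) = 1 - \vu_i^T \hC_{-i}^{-1} \vu_i = s_i$ and $\cov(Z, W) = \hb_i - \vu_i^T \hC_{-i}^{-1} \hb_{-i}$, where $\hb_i$ denotes the last entry of $\hb$ (corresponding to $\hX_i$).

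Finally, I would verify by a direct block-matrix expansion that $\hb^T H_i \hb = (\hb_i - \vu_i^T \hC_{-i}^{-1} \hb_{-i})^2$, which exactly matches $\cov(Z,W)^2$ and so completes the proof. This last step is the only nontrivial calculation and is the one place where the specific structure of $H_i$ matters; it drops out cleanly because the blocks $\hC_{-i}^{-1} \vu_i \vu_i^T \hC_{-i}^{-1}$, $-\hC_{-i}^{-1}\vu_i$, and $1$ of $H_i$ are precisely the coefficients one obtains when squaring the scalar $\hb_i - \vu_i^T \hC_{-i}^{-1} \hb_{-i}$ and regrouping by the block partition of $\hb = (\hb_{-i}^T, \hb_i)^T$. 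I expect no serious obstacle here: the conceptual content of the lemma is the telescoping argument via Lemma~\ref{lem:2.4-das}, while the remaining algebra is routine.
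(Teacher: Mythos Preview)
Your proposal is correct and follows essentially the same approach as the paper. The only minor difference is in the telescoping step: the paper applies Lemma~\ref{lem:2.4-das} once (with base set $\cB\setminus\{X_i\}$) and then invokes Lemma~\ref{lem:2.3-das} to unfold the nested residual, whereas you apply Lemma~\ref{lem:2.4-das} three times and avoid Lemma~\ref{lem:2.3-das} entirely; the subsequent coordinate computations of $\var(W)$, $\cov(Z,W)$, and $\hb^T H_i \hb$ are identical to the paper's.
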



\begin{proof}[Proof of Lemma \ref{lem:R2-num1}]
    Note that $\cB \setminus \{X_i\} = \cA \cup \cT_{-i}$. Thus, by Lemma \ref{lem:2.4-das} and Lemma \ref{lem:2.3-das}, respectively:
\begin{equation}
        R^2_{Z, \cB} - R^2_{Z, \cB \setminus \{X_i\}} = R^2_{Z, \res(X_i,\cA\cup\cT_{-i})} = R^2_{Z, \res(\res(X_i, \cA), \{\res(X_j,\cA)\}_{X_j \in \cT_{-i}}}.
\label{eq:eq:r2-basic}
     \end{equation}
Recall that each $\hX_j$ is obtained from $\res(X_j,\cA)$ by renormalization and that $\hcT_{-i} = \hcT \setminus \{\hX_i\} = \{\hX_j\}_{X_j \in \cT_{-i}}$. Thus, $\res(\hX_i,\hcT_{-i})$ is a rescaling of $\res(\res(X_i, \cA), \{\res(X_j,\cA)\}_{X_j \in \cT_{-i}})$. Since the $R^2$ objective is invariant under scaling of the predictor variables, \eqref{eq:eq:r2-basic} then implies that
    \begin{equation}
        R^2_{Z,\cB} - R^2_{Z, \cB \setminus \{X_i\}} = R^2_{Z,\res(\hX_i, \hcT_{-i})} = \cov(Z,\res(\hX_i,\hcT_{-i}))^2/\var(\res(\hX_i,\hcT_{-i}))\,,
        \label{eq:r2-cov-var}
    \end{equation}
where the last line follows directly from the definition of the $R^2$ objective.
It remains to express \eqref{eq:r2-cov-var} in terms of $\hC$, $\hb$ and $\vu$. By definition, $\res(\hX_i, \hcT_{-i}) = \hX_i - \hX_{-i}^T \hC_{-i}^{-1}\vu_{i}$. Hence,
    \begin{multline*}
        \var(\res(\hX_i, \hcT_{-i}))  = \var(\hX_i -  \hX_{-i}^T\hC_{-i}^{-1}\vu_i) \\
         = \expect[\hX_i^2] - 2 \expect[\hX_i\hX^T_{-i}]\hC_{-i}^{-1}\vu_i + \vu_i^T \hC_{-i}^{-1} \expect[\hX_{-i} \hX_{-i}^T]\hC_{-i}^{-1} \vu_i
         = 1 - \vu_i^T \hC_{-i}^{-1} \vu_i,
\end{multline*}
where the last equality follows from normalization of $\hX_i$, $\expect[\hX_{i}\hX_{-i}^T] = \vu_i^T$ and $\expect[\hX_{-i}\hX_{-i}^T] = \hC_{-i}$. Furthermore,
    \begin{multline*}
        \cov(Z, \res(\hX_i, \hcT_{-i}))^2  = \cov( Z, \hX_i -  \hX_{-i}^T\hC_{-i}^{-1}\vu_i)^2 = \lb \cov(Z, \hX_i) - \cov(Z, \hX_{-i}^T\hC_{-i}^{-1}\vu_i) \rb^2 \\
         = \lb \hat{b}_i - \hb_{-i}^T C_{-i}^{-1}\vu_i \rb^2
         = \hb^T\!
        \begin{pmatrix}
            \hC_{-i}^{-1} \vu_i \vu_i^T \hC_{-i}^{-1} & -\hC_{-i}^{-1} \vu_i \\
            - \vu_i^T \hC_{-i}^{-1} & 1
        \end{pmatrix}\!
        \hb.
    \end{multline*}
Substituting the above 2 expressions into \eqref{eq:r2-cov-var} completes the proof.
\end{proof}
In the next lemma we show that the previous lemma can be simplified for eigenvectors of $\hC^{-1}$.
\begin{lemma}
Let $(\lambda, \vv), (\mu, \vw)$ be any 2 eigenpairs of $\hC^{-1}$.
Then, $\vv^TH_i \vw = \lambda \mu s_i^2 v_i w_i$, where $H_i$ and $s_i$ are as defined in the statement of Lemma~\ref{lem:R2-num1}.
\label{lem:R2-eig}
\end{lemma}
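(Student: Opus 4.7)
The plan is to exploit a rank-one factorization of $H_i$ and a block-matrix inversion identity that identifies the relevant factor with the $i$th column of $\hat{C}^{-1}$.

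First I would observe that $H_i$ has the factorization $H_i = \vq_i \vq_i^T$, where
\[
\vq_i \;=\; \begin{pmatrix} -\hat{C}_{-i}^{-1}\vu_i \\ 1 \end{pmatrix}.
\]
This is immediate from the block definition of $H_i$ in Lemma~\ref{lem:R2-num1}. Consequently,
\[
\vv^T H_i \vw \;=\; (\vv^T \vq_i)(\vq_i^T \vw),
\]
so the task reduces to computing $\vv^T \vq_i$ and $\vq_i^T \vw$.

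Second, I would connect $\vq_i$ to $\hat{C}^{-1}$ via Schur complements. Because the $\hat{X}_i$ have been renormalized to unit variance, we have the block form
\[
\hat{C} \;=\; \begin{pmatrix} \hat{C}_{-i} & \vu_i \\ \vu_i^T & 1 \end{pmatrix},
\]
whose Schur complement with respect to $\hat{C}_{-i}$ is exactly $s_i = 1 - \vu_i^T \hat{C}_{-i}^{-1}\vu_i$. The standard block inversion formula then gives
\[
\hat{C}^{-1} \;=\; \begin{pmatrix} \hat{C}_{-i}^{-1} + s_i^{-1}\hat{C}_{-i}^{-1}\vu_i\vu_i^T\hat{C}_{-i}^{-1} & -s_i^{-1}\hat{C}_{-i}^{-1}\vu_i \\ -s_i^{-1}\vu_i^T\hat{C}_{-i}^{-1} & s_i^{-1} \end{pmatrix},
\]
whose last column is precisely $s_i^{-1}\vq_i$. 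Equivalently, $\vq_i = s_i \hat{C}^{-1}\ve_i$, where $\ve_i$ is the standard basis vector corresponding to $\hat{X}_i$ (which we permuted to the last coordinate).

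Third, I would apply the eigenvector hypotheses. Since $\hat{C}^{-1}$ is symmetric, $\hat{C}^{-1}\vv = \lambda\vv$ yields $\vv^T\hat{C}^{-1} = \lambda\vv^T$, and so
\[
\vv^T \vq_i \;=\; s_i\,\vv^T \hat{C}^{-1}\ve_i \;=\; s_i\lambda\, v_i,
\]
and analogously $\vq_i^T\vw = s_i\mu\, w_i$. Substituting back into $\vv^T H_i \vw = (\vv^T\vq_i)(\vq_i^T\vw)$ produces the claimed equality $\vv^T H_i \vw = \lambda \mu s_i^2 v_i w_i$.

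The proof is essentially bookkeeping once one spots the rank-one structure of $H_i$; the only mildly delicate point is remembering that unit-variance normalization of $\hat{X}_i$ is what makes the bottom-right block of $\hat{C}$ equal to $1$ and so makes $s_i$ coincide with the Schur complement appearing in the $(n,n)$ entry of $\hat{C}^{-1}$. Once that is used, no further nontrivial estimate is needed.
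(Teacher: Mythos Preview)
Your proof is correct. Both your argument and the paper's rest on the same block-inversion identity for $\hat{C}^{-1}$ and the eigenpair hypotheses, but you organize the computation more cleanly. The paper reads off the last row of the equation $\hat{C}^{-1}\vw = \mu\vw$ to obtain the scalar identity $\vu_i^T\hat{C}_{-i}^{-1}\vw_{-i} = w_i(1-\mu s_i)$ (and the analogous one for $\vv$), then expands $\vv^T H_i \vw$ into four block terms and simplifies algebraically. Your key observation that $H_i = \vq_i\vq_i^T$ is rank one, together with the identification $\vq_i = s_i\hat{C}^{-1}\ve_i$ from the last column of the block inverse, collapses that four-term expansion into a single product $(\vv^T\vq_i)(\vq_i^T\vw)$ and lets the eigenvector property act directly. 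The two routes are equivalent at the level of identities used (your $\vq_i^T\vw = s_i\mu w_i$ is exactly the paper's scalar identity rewritten), but the rank-one packaging avoids the final algebraic bookkeeping and makes the structure more transparent.
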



\begin{proof}[Proof of Lemma \ref{lem:R2-eig}]
Applying the formula for block matrix inversion (Lemma~\ref{thm:block-inverse}) to $\hC^{-1}$, we have
    \begin{equation}
        \hC^{-1} =
        \begin{pmatrix}
            \hC_{-i} & \vu_i \\
            \vu_i^T & 1
        \end{pmatrix}^{-1}
        =
        \begin{pmatrix}
            \hC_{-i}^{-1} & 0 \\
            0 & 0
        \end{pmatrix}
         + \frac{1}{1 - \vu_i^T \hC_{-i}^{-1}\vu_i}
         \begin{pmatrix}
             \hC_{-i}^{-1} \vu_i \vu_i^T \hC_{-i}^{-1} & -\hC_{-i}^{-1} \vu_i \\
             - \vu_i^T \hC_{-i}^{-1} & 1
         \end{pmatrix}.
\label{eq:block-inverse-1}
    \end{equation}
Now, because $(\mu, \vw)$ is an eigenpair of $\hC^{-1}$, we must have $(\hC^{-1}\vw)_i = \mu w_i$. By~\eqref{eq:block-inverse-1}, this is equivalent to $(-\vu_i^T\hC_{-i}^{-1}\vw_{-i} + w_i)/s_i = \mu w_i$ (where, as usual, we let $\vw_{-i}$ be the vector obtained from $\vw$ by discarding its $i^{\textrm{th}}$ entry). Rearranging this equation gives $\vu_i^T\hC_{-i}^{-1}\vw_{-i} = w_i(1-\mu s_i)$. Since $\hC^{-1}$ is symmetric, the same argument implies that $\vv_{-i}^T \hC_{-i}^{-1} \vu_i = v_i(1-\lambda s_i)$. Thus,
\begin{align*}
\vv^T H_i \vw
            & = \vv_{-i}^T \hC_{-i}^{-1} \vu_i \vu_i^T \hC_{-i}^{-1} \vw_{-i} - w_i (\vv_{-i}^T \hC_{-i}^{-1} \vu_i) - v_i(\vu_i^T \hC_{-i}^{-1} \vw_{-i}
) + v_i w_i \\
             &  =  v_i w_i ( 1 - \lambda s_i)(1 - \mu s_i)\! - v_i w_i(1 - \lambda s_i)\! - v_i w_i(1 - \mu s_i)\! +\! v_i w_i \\
             & = v_i w_i \left((1- \lambda s_i)(1- \mu s_i) - (1- \lambda s_i) - (1 - \mu s_i) + 1\right) = \lambda \mu s_i^2 v_i w_i\,,
    \end{align*}
as claimed.
\end{proof}

We can now complete the proof of our main result from this section (Theorem~\ref{thm:beta-R2}).
\begin{proof}[Proof of Theorem \ref{thm:beta-R2}]
Let $\{\vv_1,\ldots,\vv_t\}$ be an eigenbasis of $\hC^{-1}$ with corresponding eigenvalues $\lambda_1, \ldots, \lambda_t$. Let $V$ be a matrix with columns given by these $\vv_i$. Since $\hC^{-1}$ is a symmetric positive semidefinite matrix, the matrix $V$ is orthonormal. Hence, we can write $\hb = V\vy$ for some vector $\vy$. By Lemma~\ref{lem:R2-num1}, $\hb^TH_i\hb = \cov(Z,\res(\hX_i,\hcT_{-i}))^2 \geq 0$ and $s_i = \var(\res(\hX_i,\hcT_{-i})) \leq 1$, for each $i = 1,\ldots,t$ and $R^2_{Z, \cB} - R^2_{Z, \cB \setminus \{X_i\}} = \hb^T H_i \hb/s_i \leq \hb^T H_i \hb / s_i^2 = \vy^T V^TH_iV \vy/s_i^2$.
Finally, by Lemma~\ref{lem:R2-eig}, $(V^T H_i V)_{\ell,m} = \lambda_\ell\lambda_m s_i^2 (\vv_\ell)_i(\vv_m)_i$. Thus, summing over all $i$ we have:
\begin{multline}
\sum_{i = 1}^tR^2_{Z, \cB} - R^2_{Z,\cB \setminus \{X_i\}}
\leq \sum_{i = 1}^t\sum_{\ell,m = 1}^t (y_\ell y_m\lambda_\ell\lambda_m)(\vv_\ell)_i(\vv_m)_i\,.
\\
= \sum_{\ell,m = 1}^t (y_\ell y_m\lambda_\ell\lambda_m)\sum_{i = 1}^t(\vv_\ell)_i(\vv_m)_i\,
= \sum_{i = 1}^t y_i^2 \lambda_i^2 \leq \lambda_{\max}(\hC^{-1})\sum_{i = 1}^t y_i^2\lambda_i\,,
\label{eq:R2-num-final}
\end{multline}
where the last equation follows from the orthonormality of the eigenvectors $\vv_i$. Moreover, by Lemma~\ref{lem:2.4-das}
    \begin{equation}
R^2_{Z,\cB} - R^2_{Z,\cA} = R^2_{Z,\hcT} = \hb^T\hC^{-1}\hb = \sum_{i=1}^t y_i^2 \lambda_i\,.
        \label{eq:R2-den-final}
\end{equation}
Combining \eqref{eq:R2-den-final} and \eqref{eq:R2-num-final}, we have $\sum_{i \in \calS}R^2_{Z,\cB} - R^2_{Z,\cB \setminus \{X_i\}} \leq \lambda_{\max}(\hC^{-1})[R^2_{Z,\cB} - R^2_{Z,\cA}]$ and so
inequality~\eqref{eq:beta-submod} is satisfied for $\beta = \lambda_{\max}(\hC^{-1}) = 1/\lambda_{\min}(\hC)$. It remains to bound $1/\lambda_{\min}(\hC)$ in terms of the eigenvalues of $C_{\cX}$. Recall that $\hC$ is a normalized covariance matrix for the random variables $\{\res(X_i,\cB \setminus \cA)\}_{X_i \in \cA}$. As shown in~\cite{DBLP:journals/jmlr/DasK18} (see Lemma~\ref{lem:das-kempe-eigenvalues} in Appendix~\ref{sec:basic-results-from} for a formal statement), this implies that $\lambda_{\min}(\hC) \geq \lambda_{\min}(C_{(\cB \setminus \cA) \cup \cA}) \geq \lambda_{\min}(C_{\cX}, \card{\cB}) \geq \lmin(C_{\cX})$. The claimed bound on $\beta$ then follows.
\end{proof}


\section{Improved Analysis of \textsc{ResidualRandomGreedy}}
\label{sec:impr-analys-rrg}
In this section, we show that we can derive stronger approximation guarantees for $(\gamma,\beta)$-weakly submodular functions by using the \textsc{ResidualRandomGreedy} algorithm considered in \cite{DBLP:conf/soda/BuchbinderFNS14,Chen:2018:Weakly}. Combined with the results from the previous section, this gives improved approximation bounds for the subset selection problem with an arbitrary matroid constraint $\cM$. The algorithm $\textsc{ResidualRandomGreedy}$ (shown in Algorithm \ref{alg:RRG}) proceeds over $k$ iterations. In iteration $i$, it greedily extends the current solution $S_{i-1}$ to a base $S_{i-1} \cup M_i$ of $\cM$ by selecting a set $M_i$ of the $k - |S_{i-1}| = k - i + 1$ elements with the largest marginal contribution with respect to the $S_{i-1}$. Then, it chooses an element $s_i$ uniformly at random from $M_i$ which is added to $S_{i-1}$ to obtain a new solution $S_{i}$. After $k$ iterations, the final set $S_k$ is returned.


\SetAlgoVlined
\begin{algorithm2e}
$S_0 \gets \emptyset$\;
\For{$i = 1, 2, \ldots, k$ }{
  $M_i \gets \argmax \lc \sum_{e \in T} f(e \mid S) : T \subseteq X, S \cup T \textrm{ is a base of $\cM$} \rc $\;
  $s_i \gets $ an element of $M_i$ chosen uniformly at random\;
  $S_{i} \gets S_{i-1} \cup \{s_i\}$\;
}
\Return{$S_k$}\;
\caption{$\textsc{ResidualRandomGreedy}(\cM,X,f)$}
\label{alg:RRG}
\end{algorithm2e}

	\begin{theorem}Suppose that $f : 2^X$ is $(\alpha,\beta)$-weakly submodular and $\cM = (X,\cI)$ is a matroid and let $\opt = \arg\max_{A \in \cI}f(A)$. Then, $\textsc{ResidualRandomGreedy}(\cM,X,f)$ returns a solution $S \in \cI$ that satisfies $\esp{f(S)} \geq \frac{\gamma}{\gamma+\beta}\cdot \fopt$.
		\label{thm:RRG}
	\end{theorem}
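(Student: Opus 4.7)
The plan is to follow a coupling argument of the Buchbinder–Feldman–Naor–Schwartz / Chen–Feldman–Karbasi type, where we track both the algorithm's partial solution $S_i$ and an evolving ``virtual optimum'' $O_i$ that we gradually morph into $S_k$. Concretely, I will set $O_0 = \opt$ and, at iteration $i$, apply Proposition~\ref{prop:exchange} to the matroid contracted by $S_{i-1}$ to obtain a bijection $\pi_i : O_{i-1}\setminus S_{i-1} \to M_i$ such that $O_{i-1} - \pi_i^{-1}(e) + e$ remains a base for every $e \in M_i$; then set $O_i = O_{i-1} - \pi_i^{-1}(s_i) + s_i$. By construction $S_{i} \subseteq O_i$ at every step, $O_i$ is always a base of $\cM$, and $O_k = S_k$.

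The second step is to lower-bound the algorithm's expected gain. Since $M_i$ maximizes $\sum_{e \in T} f(e \mid S_{i-1})$ over feasible extensions $T$, and $O_{i-1}\setminus S_{i-1}$ is such an extension, we have $\sum_{e\in M_i} f(e\mid S_{i-1}) \geq \sum_{e \in O_{i-1}\setminus S_{i-1}} f(e\mid S_{i-1})$. Combining this with $\gamma$-weak submodularity from below (applied to $A = S_{i-1}$, $B = S_{i-1}\cup O_{i-1}$) and monotonicity gives $\sum_{e\in M_i} f(e\mid S_{i-1}) \geq \gamma(f(O_{i-1}) - f(S_{i-1}))$. Since $s_i$ is uniform over $M_i$ of size $k-i+1$, taking conditional expectations yields
\[
\esp{f(S_i) - f(S_{i-1}) \mid S_{i-1},O_{i-1}} \;\geq\; \frac{\gamma}{k-i+1}\bigl(f(O_{i-1}) - f(S_{i-1})\bigr).
\]

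The third step symmetrically upper-bounds the expected loss in the virtual optimum. By monotonicity, $f(O_{i-1}) - f(O_i) \leq f(O_{i-1}) - f(O_{i-1} - \pi_i^{-1}(s_i))$. Since $\pi_i$ is a bijection and $s_i$ is uniform on $M_i$, the conditional expectation of the right-hand side equals $\frac{1}{k-i+1}\sum_{e \in O_{i-1}\setminus S_{i-1}} \bigl(f(O_{i-1}) - f(O_{i-1}-e)\bigr)$. Applying $\beta$-weak submodularity from above with $A = S_{i-1} \subseteq O_{i-1} = B$ gives
\[
\esp{f(O_{i-1}) - f(O_i) \mid S_{i-1},O_{i-1}} \;\leq\; \frac{\beta}{k-i+1}\bigl(f(O_{i-1}) - f(S_{i-1})\bigr).
\]

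Finally, multiplying the first inequality by $\beta$ and the second by $\gamma$, we get $\beta\,\esp{f(S_i)-f(S_{i-1})} \geq \gamma\,\esp{f(O_{i-1}) - f(O_i)}$ after taking a full expectation. Summing over $i = 1,\ldots,k$, the left-hand side telescopes to $\beta\,\esp{f(S_k)}$ (since $f(S_0)=0$) and the right-hand side telescopes to $\gamma\bigl(\fopt - \esp{f(S_k)}\bigr)$ (since $O_0 = \opt$ and $O_k = S_k$). Rearranging yields $\esp{f(S_k)} \geq \frac{\gamma}{\gamma+\beta}\fopt$, as claimed. The main subtlety — and the only place care is required — is the third step: one must verify that the bijection $\pi_i$ can be constructed so that the \emph{unconditional} distribution of the removed element is uniform over $O_{i-1}\setminus S_{i-1}$ given $O_{i-1}$, so that $\beta$-weak submodularity from above can be invoked against the correct set $O_{i-1}$ rather than a random one.
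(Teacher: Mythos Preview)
Your proposal is correct and follows essentially the same coupling argument as the paper: your evolving base $O_i$ is exactly the paper's $S_i \cup \opt_i$, and the two key inequalities you derive (the $\gamma$-lower bound on the gain and the $\beta$-upper bound on the loss) coincide with the paper's \eqref{eq:S_i - S_{i-1}}--\eqref{eq:apply_beta} and \eqref{eq:telescop_RRG}. The subtlety you flag---that $\pi_i$ must be fixed before $s_i$ is sampled so that $\pi_i^{-1}(s_i)$ is uniform over $O_{i-1}\setminus S_{i-1}$ conditional on $O_{i-1}$---is precisely the point the paper handles by noting ``the choice of $\pi_{i}$ is independent of the random choice $s_i$.''
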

\begin{proof}[Proof of Theorem \ref{thm:RRG}] We begin by introducing some auxiliary sets used in the analysis. For each $i = 0, 1, \cdots, k$, we let $\opt_i$ to be a subset of $\opt$ of size $k - i$ such that $S_i \cup \opt_i$ is a base of $\cM$, as follows. Let $\opt_0 = \opt$. For each $i \geq 1$, suppose that $S_{i-1} \cup \opt_{i-1}$ is a base and consider the bijection $\pi_i: S_{i-1} \cup M_i \rightarrow S_{i-1} \cup \opt_{i-1}$ guaranteed by Proposition \ref{prop:exchange}. We set $\opt_i = \opt_{i-1} - \pi_i(s_i)$. Then, $S_i \cup \opt_i = S_{i-1} \cup \opt_{i-1} + s_i - \pi_i(s_i)$ is a base, as required.  Moreover, note the choice of $\pi_{i}$ is independent of the random choice $s_i$, which implies that $\pi_i(s_i)$ is an element of $\opt_{i-1}$ chosen uniformly at random. 
Let $\cE$ be the event which fixes the random decisions of the algorithm up to iteration $i-1$. Conditioned on $\cE$, we have:
		\begin{multline}
			\esp{f(S_i) - f(S_{i-1})} = \frac{1}{\card{M_i}} \sum_{e \in M_i} f(e | S_{i-1})
			= \frac{1}{k-i+1} \sum_{e \in M_i} f(e | S_{i-1}) \\
			 \geq \frac{1}{k-i+1} \sum_{e \in \opt_{i-1}} f(e | S_{i-1})
			\geq \frac{\gamma}{k-i+1} \lb f(\opt_{i-1} \cup S_{i-1}) - f(S_{i-1}) \rb \label{eq:S_i - S_{i-1}}.
		\end{multline}
		Here, the third inequality follows the fact that $S_{i-1} \cup \opt_{i-1}$ is a base and so $\opt_{i-1}$ is a candidate for $M_i$. The fourth inequality follows from \eqref{eq:gamma-submod} since $f$ is $(\gamma,\beta)$-weakly submodular. Similarly, \eqref{eq:beta-submod} together with the fact that $\pi_i(s_i)$ is a uniformly random element of $\opt_{i-1}$ implies
		\begin{align}
			\frac{1}{k-i+1} \lb f(\opt_{i-1} \cup S_{i-1}) - f(S_{i-1}) \rb & \geq \frac{\beta^{-1}}{k-i+1}  \sum_{e \in \opt_{i-1}} f(e | \opt_{i-1} \cup S_{i-1} - e), \nonumber\\
			& = \beta^{-1} \cdot \esp{f(\pi_i(s_i) | \opt_{i-1} \cup S_{i-1} - \pi_i(s_i)} \label{eq:apply_beta}.
		\end{align}
		We can bound the expected decrease in $f(\opt_i \cup S_i)$ in iteration $i$ as:
		\begin{align}
\expect[&f(\opt_i \cup S_i) - f(\opt_{i-1} \cup S_{i-1})] \nonumber \\ 
			&= \esp{f(\opt_{i-1} \cup S_{i-1} + s_i - \pi_i(s_i)) - f(\opt_{i-1} \cup S_{i-1})} \nonumber \\
			&= \esp{ f(s_i | \opt_{i-1} \cup S_{i-1} - \pi_i(s_i)) - f(\pi_i(s_i) | \opt_{i-1} \cup S_{i-1} - \pi_i(s_i))} \nonumber\\
			&\geq - \esp{f(\pi_i(s_i) | \opt_{i-1} \cup S_{i-1} - \pi_i(s_i))} \label{eq:telescop_RRG},
		\end{align}
		where the inequality follows by monotonicity of $f$. Thus
		\begin{align}
			\esp{f(S_i) - f(S_{i-1})} & \geq \frac{\gamma}{\beta} \esp{f(\pi_i(s_i) | \opt_{i-1} \cup S_{i-1} - \pi_i(s_i))} \nonumber \\
			& \geq \frac{\gamma}{\beta} \esp{ f(\opt_{i-1} \cup S_{i-1}) - f(\opt_i \cup S_i) },
\label{eq:RRG-final}
		\end{align}
		where the first inequality follows by combining \eqref{eq:S_i - S_{i-1}} and \eqref{eq:apply_beta} and the second by \eqref{eq:telescop_RRG}.
		
Removing the conditioning on $\cE$ and summing the inequalities \eqref{eq:RRG-final} for $i = 1, \cdots, k$, gives $\expect[f(S_k) - f(S_0)] \geq \frac{\gamma}{\beta}\expect[f(S_0 \cup \opt_0) - f(S_k \cup \opt_k)]$. The claim then follows by observing that $S_0 = \emptyset$, $S_k = S$,  $\opt_0 \cup S_0 = \opt$ and $\opt_{k} \cup S_k = S_k$ and so $(1 + \frac{\beta}{\gamma})\expect[f(S)] \geq f(\opt)$.
\end{proof}


\section{Distorted Local Search}
\label{sec:non-oblivious-local}
Here, we present an algorithm for $(\gamma,\beta)$-weakly submodular functions with a guarantee that smoothly approaches the optimal value of $(1-1/e)$ as $\gamma,\beta \to 1$. The algorithm (Algorithm~\ref{alg:dist-local-search-simple}), is a local search routine that attempts to swap a single element into the current solution if and only if it improves the following auxiliary potential function parameterized by $\phi \in \posreals$, which we will set appropriately depending on $\gamma$ and $\beta$:
\begin{equation*}
\label{eq:g-def}
g_\phi(A) =  \int_0^1\!\! \frac{\phi e^{\phi p}}{e^{\phi} - 1}\sum_{B \subseteq A}p^{|B|-1}(1-p)^{|A|-|B|}f(B)\,dp =\sum_{B \subseteq A}\mc{\phi}{|A|-1}{|B|-1},
f(B)
\end{equation*}
where we define
\begin{equation*}
\mc{\phi}{a}{b} \triangleq \int_0^1 \phi e^{\phi p}p^{b}(1-p)^{a - b}/(e^{\phi} - 1)\,dp.
\end{equation*}
\SetAlgoVlined
\begin{algorithm2e}[t]
      Suppose that $f$ is $(\gamma,\beta)$-weakly submodular and let $\phi = \gamma^2 + \beta(1-\gamma)$\;
      $A \gets $ an arbitrary base of $\cM$\;
        \While{$\exists a \in S, b \in X \setminus S$ with $S - a + b \in \cI$ and $g_\phi(A - a + b) > g_\phi(A)$}
        {
          $A \gets S - a + b$\;
        }
        \Return{$A$}\;
    \caption{$\textsc{DistortedLocalSearch}(\cM, X, f)$\label{alg:dist-local-search-simple}}
\end{algorithm2e}
In the analysis of~\cite{Filmus:2014}, it is shown that if $f$ is submodular, its associated potential $g$ is as well, and this plays a crucial role in the analysis. Here, however, $f$ is only \emph{weakly} submodular, which means we must carry out an alternative analysis to bound the quality of a local optimum for $g_\phi$. Our analysis will rely on the following properties of the coefficients $\mc{\phi}{a}{b}$ (see Appendix~\ref{sec:prop-coeff-mcph} for a full proof of each):
\begin{restatable}{mylemma}{coeffprops}\label{lem:coeff-props}
For any $\phi > 0$, the coefficients $\mc{\phi}{a}{b}$ satisfy the following:
\begin{enumerate}
\item $g_\phi(e|A) = \sum_{B \subseteq A}\mc{\phi}{|A|}{|B|}f(e|B)$, for any $A \subseteq X$ and $e \not \in A$.
\item $\sum_{B \subseteq A} \mc{\phi}{|A|}{|B|} = 1$, for all $A \subseteq X$.
\item $\mc{\phi}{a}{b} = \mc{\phi}{a+1}{b+1} + \mc{\phi}{a+1}{b}$ for all $0
 \leq b \leq a$.
\item $\phi \mc{\phi}{a}{b} = -b \mc{\phi}{a-1}{b-1} + (a-b) \mc{\phi}{a-1}{b} + (\phi/(e^{\phi - 1})) \bm{1}_{b = 0} + (\phi e^\phi/(e^{\phi} - 1)) \bm{1}_{b = a}$,\\ for all $a > 0$ and $0 \leq b \leq a$.
\end{enumerate}
\end{restatable}

In order to analyze the performance of Algorithm~\ref{alg:dist-local-search-simple}, we consider now two arbitrary bases $A$ and $O$ of the given matroid $\cM$. We index the elements $a_i \in A$ and $o_i \in O$ according to the bijection $\pi : A \to O$ guaranteed by Proposition~\ref{prop:exchange} so that $A - a_i + o_i$ is a base for all $1 \leq i \leq |A|$. Our main theorem is the following:
\begin{theorem}
\label{thm:loc-opt-main}
Suppose that $f$ is $(\gamma,\beta)$-weakly submodular and let $\phi = \phi(\gamma,\beta) \triangleq \gamma^2 + \beta(1 - \gamma)$. Then, for any bases $A,O$ of a matroid $\cM$, $\frac{\phi e^\phi}{e^{\phi} - 1}f(A) \geq \gamma^2 f(O) + \sum_{i = 1}^{|A|}[g_{\phi}(A) - g_{\phi}(A - a_i + o_i)]$.
\end{theorem}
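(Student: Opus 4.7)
The plan is to rewrite the theorem in a form amenable to the coefficient recurrences of Lemma~\ref{lem:coeff-props}. Using $g_\phi(A) - g_\phi(A - a_i + o_i) = g_\phi(a_i \mid A - a_i) - g_\phi(o_i \mid A - a_i)$, the claim is equivalent to
\[
\frac{\phi e^\phi}{e^\phi - 1}\,f(A) + \sum_{i=1}^n g_\phi(o_i \mid A - a_i) \;\geq\; \gamma^2 f(O) + \sum_{i=1}^n g_\phi(a_i \mid A - a_i),
\]
where $n = |A|$. My strategy is to expand each $g_\phi(e \mid A - a_i)$ via Lemma~\ref{lem:coeff-props}(1), re-index so the outer sum runs over subsets $B \subseteq A$ of the full base $A$ (rather than $A - a_i$), and then apply $(\gamma,\beta)$-weak submodularity inside each $B$.

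The first step uses Lemma~\ref{lem:coeff-props}(3) in the form $\mc{\phi}{n-1}{b} = \mc{\phi}{n}{b} + \mc{\phi}{n}{b+1}$, followed by a substitution $B' = B \cup \{a_i\}$ in the $\mc{\phi}{n}{b+1}$ summand, to obtain the identity
\[
g_\phi(e \mid A - a_i) \;=\; \sum_{B \subseteq A} \mc{\phi}{n}{|B|}\, f(e \mid B - a_i),
\]
where $B - a_i$ equals $B$ when $a_i \notin B$. After swapping summation orders, both $\sum_i g_\phi(a_i \mid A - a_i)$ and $\sum_i g_\phi(o_i \mid A - a_i)$ take the form $\sum_{B \subseteq A}\mc{\phi}{n}{|B|}\, S_B$, where $S_B = \sum_i f(e_i \mid B - a_i)$ runs over $e_i = a_i$ or $e_i = o_i = \pi(a_i)$.

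Next I would bound each inner sum using $(\gamma,\beta)$-weak submodularity. For the $o_i$-side, splitting according to whether $a_i \in B$ and applying $\gamma$-weak submodularity from below to the part with $a_i \notin B$ yields a lower bound involving $\gamma(f(B \cup O_B) - f(B))$, where $O_B = \pi(A \setminus B)$. A second application of $\gamma$-weak submodularity, using $B \cup O_B \cup \pi(B) \supseteq O$ together with monotonicity, is what introduces the factor $\gamma^2$ against $f(O)$. For the $a_i$-side, I would use $\beta$-weak submodularity (with lower set $\emptyset$) to bound $\sum_{a \in B} f(a \mid B - a) \leq \beta f(B)$, and handle the remaining terms $\sum_{a \in A \setminus B} f(a \mid B)$ in tandem with the matching $\gamma$-contribution from the $o_i$-side so that the ``$f(A) - f(B)$'' pieces telescope.

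The final step applies the recurrence in Lemma~\ref{lem:coeff-props}(4), which expresses $\phi \mc{\phi}{n}{b}$ in terms of $\mc{\phi}{n-1}{b-1}$ and $\mc{\phi}{n-1}{b}$, plus boundary contributions at $b = 0$ and $b = n$. The $b = 0$ contribution multiplies $f(\emptyset) = 0$ and vanishes, while the $b = n$ contribution produces exactly the coefficient $\phi e^\phi/(e^\phi - 1)$ in front of $f(A)$, matching the prefactor on the LHS of the claim. The hard part will be verifying that the interior coefficients of $f(B)$, for $\emptyset \subsetneq B \subsetneq A$, cancel exactly when $\phi = \gamma^2 + \beta(1-\gamma)$; I expect this identity to emerge as the unique convex combination $\phi = \gamma \cdot \gamma + \beta \cdot (1-\gamma)$ that balances the $\gamma$-type contribution from the $o$-marginals against the $\beta$-type contribution from the $a$-marginals, simultaneously closing both the ``$-b$'' and ``$(n-b)$'' summands in Lemma~\ref{lem:coeff-props}(4).
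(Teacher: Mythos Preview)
Your overall architecture---reindex via Lemma~\ref{lem:coeff-props}(1),(3) so that both $\sum_i g_\phi(o_i\mid A-a_i)$ and $\sum_i g_\phi(a_i\mid A-a_i)$ become $\sum_{B\subseteq A}\mc{\phi}{n}{|B|}(\cdots)$, then invoke the recurrence (4) to collapse interior coefficients---is exactly the paper's. But the two inner bounds you describe do not work as written.

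\textbf{The $o$-side gap.} You apply $\gamma$-weak submodularity only to the indices with $a_i\notin B$, obtaining $\gamma\bigl(f(B\cup O_B)-f(B)\bigr)$ with $O_B=\pi(A\setminus B)$. The trouble is that $B\cup O_B$ is missing $\pi(B)\setminus B$, so in general $B\cup O_B\not\supseteq O$ and monotonicity gives no lower bound of the form $f(B\cup O_B)\ge f(O)$. Your ``second application of $\gamma$-weak submodularity, using $B\cup O_B\cup\pi(B)\supseteq O$'' does not recover this: the remaining terms $\sum_{a_i\in B}f(o_i\mid B-a_i)$ live at pairwise different base sets $B-a_i$, so you cannot aggregate them by a single application of \eqref{eq:gamma-submod}. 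What the paper does instead is apply, for every $B\not\ni a_i$, the \emph{per-pair} consequence of $\gamma$-weak submodularity
\[
f(o_i\mid B)\;\ge\;\gamma\,f(o_i\mid B+a_i)\;-\;(1-\gamma)\,f(a_i\mid B),
\]
which lifts each $f(o_i\mid B-a_i)$ (your $a_i\in B$ case) up to $f(o_i\mid B)$. Only after this lift do all $o_i$-marginals sit at the common set $B$, and one can then apply $\gamma$-weak submodularity over the full $O$ to get $\gamma\sum_i f(o_i\mid B)\ge \gamma^2\bigl(f(B\cup O)-f(B)\bigr)\ge\gamma^2\bigl(f(O)-f(B)\bigr)$.

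\textbf{Where $\beta$ actually enters.} You place the $\beta$-bound on the $a$-side, writing $\sum_{a\in B}f(a\mid B-a)\le\beta f(B)$. In the paper, the $a$-side $\sum_i g_\phi(a_i\mid A-a_i)$ is kept \emph{exact}: its expansion furnishes precisely the coefficients $|B|\,\mc{\phi}{n-1}{|B|-1}-(n-|B|)\,\mc{\phi}{n-1}{|B|}$ that recurrence (4) needs. The $\beta$-bound is instead applied on the $o$-side, to the residual $(1-\gamma)\sum_{a_i\in B}f(a_i\mid B-a_i)$ that the lift produces; this is what yields the single term $\phi\,\mc{\phi}{n}{|B|}f(B)$ with $\phi=\gamma^2+\beta(1-\gamma)$. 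If you bound the $a$-side as you propose, the resulting coefficients involve $\mc{\phi}{n}{\cdot}$ rather than $\mc{\phi}{n-1}{\cdot}$, and the identity in (4) no longer closes for interior $B$.

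In short: the missing idea is the two-element lift $f(o_i\mid B)\ge\gamma f(o_i\mid B+a_i)-(1-\gamma)f(a_i\mid B)$, and the correct bookkeeping is to keep $\sum_i g_\phi(a_i\mid A-a_i)$ exact while applying both the $\gamma$- and $\beta$-bounds on the $o$-side.
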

Note that the base $A$ ultimately returned by Algorithm~\ref{alg:dist-local-search-simple}, necessarily has $g_\phi(A) - g_\phi(A - a_i + o_i) \leq 0$ for all $i$, so Theorem~\ref{thm:loc-opt-main} immediately implies that $f(A) \geq \gamma^2\frac{(1- e^{-\phi})}{\phi}f(O)$, where $\phi = \gamma^2 + \beta(1-\gamma)$.

To prove Theorem~\ref{thm:loc-opt-main}, we first note that $g_{\phi}(A - a_i + o_i) - g(A) = g(o_i | A - a_i) - g(a_i | A - a_i)$ and so 
\begin{equation}
\label{eq:g-loc-opt-1}
\sum_{i = 1}^{|A|}g(a_i | A - a_i) = \sum_{i = 1}^{|A|}[g(A) - g(A - a_i + o_i)] + \sum_{i = 1}^{|A|}g(o_i | A - a_i).
\end{equation}
The final term in \eqref{eq:g-loc-opt-1} can now be bounded as follows:
\begin{lemma}\label{lem:g-local-opt-1}
Suppose that $f$ is $(\gamma, \beta)$-weakly submodular, and let $A,O \subseteq X$ with $A = \{a_1,\ldots,a_{|A|}\}$ and $O = \{o_1,\ldots,o_{|A|}\}$ (so $|A| = |O|$). Then,
\begin{equation*}
\sum_{i=1}^{|A|}g(o_i | A - a_i) \geq \gamma^2 f(O) - \left(\gamma^2 + \beta(1 - \gamma)\right)\sum_{B \subseteq A}\mc{\phi}{|A|}{|B|}f(B).
\end{equation*}
\end{lemma}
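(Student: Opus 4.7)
The plan is to rewrite both sides of the claim using the coefficient identities from Lemma~\ref{lem:coeff-props} to reduce it to a pointwise inequality over $B \subseteq A$, and then to prove this pointwise inequality by combining $\gamma$-weak submodularity (applied twice), $\beta$-weak submodularity, and monotonicity.

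\textbf{Reduction to a pointwise inequality.} First I would expand $g_\phi(o_i | A - a_i)$ via Property~1 of Lemma~\ref{lem:coeff-props}, swap the order of summation between $i$ and $B$, and apply the recurrence $\mc{\phi}{|A|-1}{|B|} = \mc{\phi}{|A|}{|B|+1} + \mc{\phi}{|A|}{|B|}$ (Property~3) to reindex according to whether $a_i \in B$. A direct calculation yields
\begin{equation*}
\sum_{i=1}^{|A|} g_\phi(o_i \mid A - a_i) + \phi \sum_{B \subseteq A} \mc{\phi}{|A|}{|B|} f(B) = \sum_{B \subseteq A} \mc{\phi}{|A|}{|B|}\, Q(B),
\end{equation*}
where $Q(B) \triangleq \sum_{i=1}^{|A|} \bigl(f(B - a_i + o_i) - f(B - a_i)\bigr) + \phi f(B)$. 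Since $\sum_{B \subseteq A} \mc{\phi}{|A|}{|B|} = 1$ by Property~2, it suffices to prove that $Q(B) \geq \gamma^2 f(O)$ for every $B \subseteq A$.

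\textbf{Pointwise bound.} Let $S_1 = \sum_{i : a_i \in B} f(o_i \mid B)$ and $S_2 = \sum_{i : a_i \notin B} f(o_i \mid B)$, so the sum over $i$ in $Q(B)$ equals $S_2 + \sum_{i : a_i \in B} f(o_i \mid B - a_i)$. Applying $\gamma$-weak submodularity to the pair $B - a_i \subseteq B + o_i$ (whose set-difference is $\{a_i, o_i\}$) and using $f(B + o_i) - f(B - a_i) = f(o_i \mid B) + f(a_i \mid B - a_i)$ yields, for every $i$ with $a_i \in B$,
\begin{equation*}
f(o_i \mid B - a_i) \geq \gamma\, f(o_i \mid B) - (1 - \gamma)\, f(a_i \mid B - a_i).
\end{equation*}
Summing this over $i$ with $a_i \in B$, combining with $S_2 \geq \gamma S_2$ (since $\gamma \in [0,1]$, $S_2 \geq 0$), and applying $\gamma$-weak submodularity to $B \subseteq B \cup O$ (giving $S_1 + S_2 \geq \gamma[f(B \cup O) - f(B)]$) gives
\begin{equation*}
\sum_{i=1}^{|A|} \bigl(f(B - a_i + o_i) - f(B - a_i)\bigr) \geq \gamma^2 \bigl[f(B \cup O) - f(B)\bigr] - (1 - \gamma) \sum_{a \in B} f(a \mid B - a).
\end{equation*}
Adding $\phi f(B)$ and using the algebraic identity $\phi - \gamma^2 = \beta(1 - \gamma)$, I obtain
\begin{equation*}
Q(B) \geq \gamma^2 f(B \cup O) + (1 - \gamma)\Bigl[\beta f(B) - \sum_{a \in B} f(a \mid B - a)\Bigr].
\end{equation*}
The bracket is nonnegative by $\beta$-weak submodularity applied to the pair $\emptyset \subseteq B$, and $\gamma^2 f(B \cup O) \geq \gamma^2 f(O)$ by monotonicity, completing the pointwise inequality.

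The main obstacle is discovering this decomposition. It is not immediately apparent that $\gamma$-weak submodularity should be applied \emph{twice}---once on the swap $B - a_i \subseteq B + o_i$ in order to lower-bound $f(o_i \mid B - a_i)$ in terms of $f(o_i \mid B)$, and once on $B \subseteq B \cup O$ to introduce $f(B \cup O)$---to produce the factor $\gamma^2$ attached to $f(O)$. The resulting error term $(1 - \gamma)\sum_{a \in B} f(a \mid B - a)$ is then absorbed precisely by the $\beta(1 - \gamma)$ portion of $\phi f(B)$ using the $\beta$-weak submodular estimate $\sum_{a \in B} f(a \mid B - a) \leq \beta f(B)$, which is the key reason the parameter $\phi = \gamma^2 + \beta(1 - \gamma)$ is the right choice.
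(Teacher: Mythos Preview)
Your proof is correct and uses the same core ingredients as the paper's: two applications of $\gamma$-weak submodularity (once on the two-element swap $\{a_i,o_i\}$, once on $B\subseteq B\cup O$), one application of $\beta$-weak submodularity to control $\sum_{a\in B}f(a\mid B-a)$, and monotonicity to pass from $f(B\cup O)$ to $f(O)$. The organization differs: you first collapse the double sum to the identity $\sum_i g_\phi(o_i\mid A-a_i)=\sum_{B\subseteq A}\mc{\phi}{|A|}{|B|}\sum_i f(o_i\mid B-a_i)$ and then prove a single \emph{pointwise} inequality $Q(B)\ge \gamma^2 f(O)$, whereas the paper keeps the coefficient sums throughout, applies the swap bound only to the $\mc{\phi}{|A|}{|B|+1}$ portion, and bounds the resulting pieces $P$ and $Q$ separately at the end. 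Your pointwise reduction is a clean repackaging of the same argument, and arguably makes the role of $\phi=\gamma^2+\beta(1-\gamma)$ more transparent.

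One small caveat: the lemma as stated in the paper holds for arbitrary $\phi>0$ (the right-hand side carries $\gamma^2+\beta(1-\gamma)$ explicitly, not $\phi$), and the paper's proof does not use the identity $\phi=\gamma^2+\beta(1-\gamma)$. Your write-up invokes that identity. This is harmless in context since the lemma is only ever applied with that choice, and your argument adapts immediately to general $\phi$ by putting $(\gamma^2+\beta(1-\gamma))f(B)$ rather than $\phi f(B)$ into the definition of $Q(B)$.
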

\begin{proof}[Proof of Lemma \ref{lem:g-local-opt-1}]
By parts 1 and 3 of Lemma~\ref{lem:coeff-props}, we have
\begin{equation}
g_{\phi}(o_i | A - a_i) = \sum_{\mathclap{B \subseteq A-a_i}} \mc{\phi}{|A|-1}{|B|}f(o_i | B) =\sum_{\mathclap{B \subseteq A - a_i}}[\mc{\phi}{|A|}{|B|+1}f(o_i | B) + \mc{\phi}{|A|}{|B|}f(o_i | B)].
\label{eq:1}
\end{equation}
Since $f$ is $\gamma$-weakly submodular from below
\[f(o_i | B) + f(a_i | B) \geq \gamma f(B \cup \{o_i, a_i\}) - \gamma f(B)
= \gamma f(o_i | B + a_i) + \gamma f(a_i | B),\]
and so $f(o_i | B) \geq \gamma f(o_i | B+a_i) - (1-\gamma)f(a_i | B)$. Thus, the right-hand side of~\eqref{eq:1} is at least
\begin{equation}
\label{eq:loc-upper-bound-1}
\sum_{\mathclap{B \subseteq A - a_i}}\mc{\phi}{|A|}{|B|+1}\left[\gamma f(o_i | B+a_i) - (1-\gamma)f(a_i | B)\right] + \mc{\phi}{|A|}{|B|}f(o_i | B) = P + Q,
\end{equation}
where
\begin{align*}
P &= \gamma \sum_{\mathclap{B \subseteq A-a_i}}\left[\mc{\phi}{|A|}{|B|+1} f(o_i | B+a_i) + \mc{\phi}{|A|}{|B|}f(o_i | B)\right] = \gamma \sum_{B \subseteq A}\mc{\phi}{|A|}{|B|} f(o_i | B) \\
Q &= (1-\gamma)\sum_{\mathclap{B \subseteq A-a_i}}\left[\mc{\phi}{|A|}{|B|}f(o_i | B) - \mc{\phi}{|A|}{|B|+1}f(a_i | B)\right]\! \geq\! -(1-\gamma)\sum_{\mathclap{B \subseteq A-a_i}} \mc{\phi}{|A|}{|B|+1}f(a_i | B).
\end{align*}
In the first equation, we have used that for each set $T \subseteq A$, $f(o_i|T)$ appears in the right-hand summation exactly once: if $a_i \in T$ it appears as $T = B+a_i$ with coefficient $\mc{\phi}{|A|}{|B|+1} = \mc{\phi}{|A|}{|T|}$ and if $a_i \not\in T$ it appears as $T=B$ with coefficient $\mc{\phi}{|A|}{|B|} = \mc{\phi}{|A|}{|T|}$. Summing \eqref{eq:loc-upper-bound-1} over each $a_i \in A$ we then have
\begin{equation}
\label{eq:loc-bound-1}
\sum_{i = 1}^{|A|} \ g_{\phi}(o_i | A-a_i) \geq
\gamma \sum_{i = 1}^{|A|} \sum_{B \subseteq A} \mc{\phi}{|A|}{|B|}f(o_i | B) - (1-\gamma)\sum_{i = 1}^{|A|} \sum_{B \subseteq A - a_i} \mc{\phi}{|A|}{|B|+1}f(a_i | B) .
\end{equation}
Since $f$ is $\gamma$-weakly submodular from below and monotone,
\begin{multline*}
\gamma \sum_{i = 1}^{|A|} \sum_{B \subseteq A} \mc{\phi}{|A|}{|B|} f(o_i | B)
= \gamma \sum_{B \subseteq A} \sum_{i = 1}^{|A|} \mc{\phi}{|A|}{|B|} f(o_i | B)
\geq
\gamma^2 \sum_{B \subseteq A} \mc{\phi}{|A|}{|B|}[f(O \cup B) - f(B)] \\
\geq
\gamma^2 \sum_{B \subseteq A} \mc{\phi}{|A|}{|B|}[f(O) - f(B)]
= \gamma^2 f(O) - \gamma^2 \sum_{B \subseteq A}\mc{\phi}{|A|}{|B|}f(B),
\end{multline*}
where the last equation follows from part 2 of Lemma~\ref{lem:coeff-props}. Similarly, since $f$ is $\beta$-weakly submodular from above:
\begin{multline*}
(1-\gamma)\sum_{i = 1}^{|A|}\sum_{B \subseteq A-a_i}\!\!\!\mc{\phi}{|A|}{|B|+1}(f(B+a_i) - f(B))
= (1-\gamma)\sum_{T \subseteq A}\sum_{i = 1}^{|A|}\mc{\phi}{|A|}{|T|}(f(T) - f(T-a_i)) \\
\leq \beta(1-\gamma)\sum_{T \subseteq A} \mc{\phi}{|A|}{|T|}[f(T) - f(\emptyset)] =
\beta(1-\gamma)\sum_{B \subseteq A} \mc{\phi}{|A|}{|B|}f(B),
\end{multline*}
where the first equation can be verified by substituting $B = T - a_i$ for each $a_i \in T$ and noting that $|T| = |B|+1$, and the last equation simply follows from $f(\emptyset) = 0$ and renaming $T$ to $B$. Using the two previous inequalities to bound the right-hand side of~\eqref{eq:loc-bound-1}, then gives the claimed result.
\end{proof}

\begin{proof}[Proof of Theorem \ref{thm:loc-opt-main}]
Applying Lemma~\ref{lem:g-local-opt-1} to the last term in~\eqref{eq:g-loc-opt-1} and rearranging gives:
\begin{equation}
\sum_{i = 1}^{|A|} g_{\phi}(a_i| A - a_i) + \left(\gamma^2 + \beta(1 - \gamma)\right)\sum_{\mathclap{B \subseteq A}} \mc{\phi}{|A|}{|B|}f(B) \geq \gamma^2 f(O) + \sum_{i = 1}^{|A|}[g_{\phi}(A) - g_{\phi}(A - a_i + o_i)].
\label{eq:g-loc-opt-2}
\end{equation}
From part 1 of Lemma~\ref{lem:coeff-props},
\begin{align*}
\sum_{i = 1}^{|A|} g_{\phi}(a_i | A - a_i)
&= \sum_{i = 1}^{|A|} \sum_{B \subseteq A - a_i} \mc{\phi}{|A|-1}{|B|}(f(B+a_i) - f(B)) \\
&= \sum_{T \subseteq A} |T|\mc{\phi}{|A|-1}{|T|-1}f(T) - (|A| - |T|)\mc{\phi}{|A|-1}{|T|}f(T),
\end{align*}
where the last equation follows from the fact that each $T \subseteq A$ appears once as $T = B + a_i$ for each $a_i \in T$ (in which case it has coefficient $\mc{\phi}{|A|-1}{|B|}=\mc{\phi}{|A| - 1}{ |T|-1}$) and once as $T = B$ for each $a_i \not\in T$ (in which case it has coefficient $\mc{\phi}{|A|-1}{|B|} = \mc{\phi}{|A|-1}{|T|}$). Thus, we can rewrite~\eqref{eq:g-loc-opt-2} as:
\begin{multline}
\label{eq:g-loc-opt-main}
\sum_{B \subseteq A}\left(|B|\mc{\phi}{|A|-1}{|B|-1} - (|A| - |B|)\mc{\phi}{|A|-1}{|B|} + \left(\gamma^2 + \beta(1-\gamma)\right)\mc{\phi}{|A|}{|B|}\right)f(B) \\ \geq \gamma^2 f(O) + \sum_{i = 1}^{|A|}[g_{\phi}(A) - g_{\phi}(A - a_i + o_i)].
\end{multline}
Since $\phi = \gamma^2 + \beta(1-\gamma)$, the recurrence in part 4 of Lemma~\ref{lem:coeff-props} implies that the left-hand side vanishes for all $B$ except $B = \emptyset$, in which case it is $\frac{\phi}{e^{\phi}-1}f(\emptyset) = 0$ or $B = A$, in which case it is $\frac{\phi e^\phi}{e^{\phi} - 1}f(A)$. The theorem then follows.
\end{proof}
There are several further issues that must be addressed in order to convert Algorithm~\ref{alg:dist-local-search-simple} to a general, polynomial-time algorithm. First, we cannot compute $g_\phi(A)$ directly, as it depends on the values $f(A)$ for all subsets of $A$. In Appendix~\ref{sec:effic-estim-g_phi} we show that we can efficiently estimate $g_\phi$ via simple sampling procedure.
To bound the number of improvements made, we can instead require that each improvement makes a $(1+\epsilon)$ increase in $g_\phi$. Then at termination, we will instead have $\sum_{i = 1}^{|A|} g_\phi(A) - g_\phi(A - a_i + o_i)] \leq |A|\epsilon g_{\phi}(A)$. In order to bound the resulting loss in our guarantee we must bound the value $g_\phi(A)$ in terms of $f(A)$, which we accomplish in Appendix~\ref{sec:bounding-value-g}. Finally, we address the fact that $\gamma$ and $\beta$ may not be known and so we cannot set $\phi$ a priori. We show that by initializing the algorithm with a solution produced by \textsc{ResidualRandomGreedy}, we can bound the range of values for $\phi$ that must be considered to obtain our guarantee. It then suffices to enumerate guesses for $\phi$ from this range. In Appendix~\ref{sec:bound-sens-g_phi} we show that small changes in $\phi$ result in small changes to $g_\phi(A)$, and so by initializing the run for each subsequent guess of $\phi$ with the solution produced for the previous guess, we can amortize the total number of improvements (and work) required across all guesses. The final algorithm, presented in Appendix~\ref{sec:our-final-algorithm}, has the same guarantee as Algorithm~\ref{alg:dist-local-search-simple} minus a small $\cO(\epsilon)$ term, and requires $\tilde{\cO}(nk^4\epsilon^{-3})$ evaluations of $f$. Thus, we have the following:
\begin{restatable}{theorem}{dlsmain}\label{thm:distorted-ls-main}
Let $\cM=(X,\cI)$ be a matroid, $f : 2^X \to \posreals$ be a $(\gamma,\beta)$-weakly submodular function, and $\epsilon > 0$. Then, there is a randomized algorithm that with  probability $1 - o(1)$ returns a set $S$ satisfying $f(S) \geq \left(\frac{\gamma^2 (1 - e^{-\phi(\gamma,\beta)})}{\phi(\gamma,\beta)} - \cO(\epsilon)\right)f(O)$ for any solution $O \in \cI$, where $\phi(\gamma,\beta) = \gamma^2 + \beta(1-\gamma)$. The algorithm runs in time $\tilde{\cO}(nk^4\e^{-3})$.
\end{restatable}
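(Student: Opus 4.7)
The plan is to convert the idealized Algorithm~\ref{alg:dist-local-search-simple} into a polynomial-time randomized procedure whose guarantee degrades by only an additive $\cO(\epsilon)$ relative to the tight bound implicit in Theorem~\ref{thm:loc-opt-main}. Four obstacles must be overcome: (i) $g_\phi(A)$ cannot be computed exactly because it is a sum over $2^{|A|}$ subsets; (ii) exact local optimality may require a super-polynomial number of swap steps; (iii) the target value $\phi = \gamma^2 + \beta(1-\gamma)$ depends on the unknown parameters $\gamma$ and $\beta$; and (iv) the natural remedy of enumerating guesses for $\phi$ would blow up the running time unless the work can be amortized across guesses.

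For (i), I would build an unbiased estimator of $g_\phi(A)$ by sampling $p \in [0,1]$ from the density $\phi e^{\phi p}/(e^\phi-1)$ and then sampling $B \subseteq A$ by including each element independently with probability $p$; a rescaling by $1/p$ cancels the $p^{|B|-1}$ factor in the integral definition of $g_\phi$, making $f(B)/p$ an unbiased estimate. Monotonicity gives $0 \leq f(B) \leq f(A)$, and I can derive an a priori bound $g_\phi(A) \leq k\, f(A)$ by observing that $\sum_{B \subseteq A, |B| \geq 1} p^{|B|-1}(1-p)^{|A|-|B|} = (1-(1-p)^{|A|})/p \leq |A|$ and that $f(\emptyset)=0$. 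Hoeffding's inequality then produces a $(1 \pm \epsilon)$-multiplicative estimate using $\mathrm{poly}(k,\epsilon^{-1},\log(1/\delta))$ samples, and a union bound over the polynomial number of estimator calls keeps every query accurate throughout the algorithm with probability $1 - o(1)$.

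For (ii), I would accept a swap only when the estimated value of $g_\phi$ increases by a factor of at least $1 + \Theta(\epsilon/k^2)$. Because each accepted swap multiplies $g_\phi(A)$ by that factor while monotonicity and the coefficient normalization from part~2 of Lemma~\ref{lem:coeff-props} restrict the dynamic range of $g_\phi$ to a $\mathrm{poly}(k)$ factor, the total number of improvements on a single run is $\tilde{\cO}(k^2/\epsilon)$. At termination, the residual slack satisfies $\sum_{i=1}^{|A|} [g_\phi(A) - g_\phi(A - a_i + o_i)] \leq \cO(\epsilon)\, g_\phi(A) \leq \cO(\epsilon)\, k f(A)$, and combining with Theorem~\ref{thm:loc-opt-main} (after rebalancing the threshold) converts this into the additive $\cO(\epsilon)f(O)$ loss in the approximation ratio.

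For (iii) and (iv), which together constitute the main obstacle, I would first initialize the whole procedure with the base $A_0$ returned by \textsc{ResidualRandomGreedy}, which by Theorem~\ref{thm:RRG} achieves $\expect[f(A_0)] \geq \frac{\gamma}{\gamma+\beta} f(O)$. This serves both as a fallback lower bound and, crucially, as a tool for pruning: any $\phi$ whose analytic guarantee $\gamma^2(1 - e^{-\phi})/\phi$ falls below the RRG ratio can be discarded, which confines the useful range of $\phi$ to a bounded interval. I would discretize this interval into $\tilde{\cO}(1/\epsilon)$ candidates $\phi_1 < \phi_2 < \cdots$ at geometric spacing and process them in order. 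The key lemma to prove here is a Lipschitz-type sensitivity bound of the form $|g_{\phi'}(A) - g_\phi(A)| \leq |\phi' - \phi| \cdot \mathrm{poly}(k) \cdot f(A)$, obtained by differentiating the integrand of $g_\phi$ with respect to $\phi$ and using monotonicity. This bound implies that the local optimum $A_j$ found for $\phi_j$ is already an $\cO(\epsilon)$-approximate local optimum for $\phi_{j+1}$, so warm-starting the search for $\phi_{j+1}$ at $A_j$ bounds the total number of improvement steps across all guesses by a single $\tilde{\cO}(k^2/\epsilon)$ budget rather than a per-guess budget. Each iteration examines $nk$ swap candidates, each evaluated with $\tilde{\cO}(k^2/\epsilon^2)$ samples, yielding the claimed $\tilde{\cO}(nk^4\epsilon^{-3})$ oracle queries. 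Returning the best of $A_0$ and all $A_j$, together with the union bound from step (i), completes the proof.
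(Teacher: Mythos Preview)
Your overall scaffold mirrors the paper's: sample to estimate the potential, accept only sufficiently improving swaps, guess $\phi$ geometrically over a bounded range determined by comparison with \textsc{ResidualRandomGreedy}, and warm-start successive guesses. But step (i) as written does not go through, and this is the crux of the argument.

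Your estimator $f(B)/p$ is unbiased for $g_\phi(A)$, but it is \emph{not} bounded: while $f(B)\leq f(A)$, the factor $1/p$ blows up as $p\to 0^+$. In fact its variance is infinite: the contribution of any singleton $B$ to the second moment is $f(B)^2\int_0^1 \cD_\phi(p)\,(1-p)^{|A|-1}/p\,dp$, and since $\cD_\phi(0)=\phi/(e^\phi-1)>0$ this integral diverges like $\int_0 1/p\,dp$. Hence neither Hoeffding nor Chebyshev applies, and no polynomial number of samples will concentrate this estimator. The paper sidesteps the issue by never estimating $g_\phi(A)$ itself. It estimates only the \emph{marginals} $g_\phi(e\mid A)$, which by part~1 of Lemma~\ref{lem:coeff-props} equal $\sum_{B\subseteq A}\mc{\phi}{|A|}{|B|}f(e\mid B)$; sampling $B$ by the same two-step procedure (with no $1/p$ rescaling) gives the unbiased, \emph{bounded} estimator $f(e\mid B)\in[0,f(A+e)]$, to which Hoeffding does apply (Lemma~\ref{lem:g-sampling}). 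Since the local-search test compares $g_\phi(b\mid A-a)$ to $g_\phi(a\mid A-a)$, marginals are all that is needed.

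Two smaller points. First, your amortization claim in (iv)---that $A_j$ being approximately locally optimal for $\phi_{j+1}$ caps the total number of improvements---is not an immediate implication; the paper instead tracks $g_{\phi_j}(S)$ globally, using that each accepted swap multiplies it by $(1+\delta)$ while each increment of $j$ divides it by at most $e^{4\epsilon}$ (Lemma~\ref{lem:g-change-ineq}), so the total number of improvements is $\log_{1+\delta}$ of the overall dynamic range. Second, your loose bound $g_\phi(A)\leq k\,f(A)$ and threshold $\Theta(\epsilon/k^2)$ lead to $\tilde{\cO}(k^2/\epsilon)$ improvements and hence $\tilde{\cO}(nk^5\epsilon^{-3})$ total queries, one factor of $k$ worse than claimed; the paper's sharper bound $g_\phi(A)\leq h(\phi)H_{|A|}f(A)$ (Lemma~\ref{lem:g-bounds}) together with the threshold $\Delta=\epsilon/k$ on marginals recovers $\tilde{\cO}(nk^4\epsilon^{-3})$.
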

In both the subset selection problem and Bayesian $A$-optimal design (considered in Appendix~\ref{sec:optimal-design}), we can derive an upper bound for $\beta$ matching existing spectral bounds on $\gamma^{-1}$. Then we have $\phi(\gamma,\beta) = \gamma^2 + \frac{1}{\gamma} - 1$ and so for both problems we obtain a guarantee of $\gamma^2 \cdot \frac{1 -e^{-(\gamma^2+\gamma^{-1}-1)}}{\gamma^2 + \gamma^{-1} - 1} - \cO(\e)$. In particular, as $\gamma$ tends to 1 (and so $f$ becomes closer to submodular) our guarantee approaches $1 - e^{-1} - \cO(\e)$, matching (up to $\cO(\e)$) the optimal guarantee for submodular functions. 
Surprisingly, we show that this relationship between $\gamma^{-1}$ and $\beta$ does not hold in general. In Appendix~\ref{sec:how-large-can}, we prove the following:
\begin{restatable}{theorem}{betalarge}\label{lem:worst-case-beta}
    For any $\gamma > 0$ and $k > 0$ there exists a function on a ground set of size $k$ that is $\gamma$-weakly submodular from below but not $\beta$-weakly submodular from above for any $\beta <  \binom{k  - \gamma}{k-1} = \bigT{k^{1-\gamma}}$.
\end{restatable}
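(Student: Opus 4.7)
The plan is to exhibit a symmetric function $f(S) = h(|S|)$ on a ground set $X$ of size $k$, choosing $h : \{0, 1, \ldots, k\} \to \posreals$ so that the $\gamma$-weak submodularity inequality \eqref{eq:gamma-submod} is simultaneously \emph{tight} for $B = X$ and every $A$ with $|A| \in \{0, 1, \ldots, k - 2\}$. Intuitively, saturating all of these lower inequalities forces the final marginal $d_k := h(k) - h(k-1)$ to be unusually large compared with the average marginal $h(k)/k$, which in turn forces $\beta$ to be large in \eqref{eq:beta-submod} via the single instance $(A, B) = (\emptyset, X)$.

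Concretely I normalize $h(k) = 1$, set $u_i := h(k) - h(i)$, and translate the tight inequality at $(a, X)$ into the recurrence
\begin{equation*}
u_i = \frac{k - i + 1 - \gamma}{k - i + 1}\, u_{i - 1}, \qquad i = 1, \ldots, k-1, \quad u_0 = 1,
\end{equation*}
whose unique solution is $u_i = \prod_{m = k - i + 1}^{k}(m - \gamma)/m$. A key subtlety is that I must \emph{not} impose tightness at $(a, b) = (k-1, k)$: doing so would force $d_k = 0$ and collapse the construction; instead $d_k = u_{k-1}$ is read off from the recurrence, and the trivial bound $\gamma \leq 1$ handles that one remaining pair. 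For $\gamma \in (0, 1)$ each factor $(m - \gamma)/m$ lies in $(0, 1)$ and $d_{i+1}/d_i = (k - i + 1 - \gamma)/(k - i) > 1$, so $h$ is strictly increasing and convex.

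With $h$ convex, the secant slope $(h(b) - h(a))/(b - a)$ is nondecreasing in $b > a$, so the worst case of \eqref{eq:gamma-submod} is $b = k$, which is tight for $a \leq k - 2$ by construction and trivial for $a = k - 1$; thus $f$ is genuinely $\gamma$-weakly submodular from below. For the $\beta$ lower bound, instantiating \eqref{eq:beta-submod} with $A = \emptyset$, $B = X$ and using symmetry (so $f(e \mid X - e) = d_k$ for each $e$) forces
\begin{equation*}
\beta \geq \frac{k\,d_k}{h(k)} = k\prod_{m = 2}^{k}\frac{m - \gamma}{m} = \frac{\prod_{m = 2}^{k}(m - \gamma)}{(k - 1)!} = \binom{k - \gamma}{k - 1},
\end{equation*}
and the asymptotic $\binom{k - \gamma}{k - 1} = \Theta(k^{1 - \gamma})$ then comes from $\Gamma(k + 1 - \gamma)/\Gamma(k) = \Theta(k^{1 - \gamma})$ via Stirling. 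The only real obstacle is bookkeeping: selecting exactly the right family of tight constraints so that the tightness used to force $d_k$ to be large does not collide with a constraint that would force $d_k = 0$. Once this is set up, the remaining verifications are elementary.
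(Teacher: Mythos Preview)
Your construction is exactly the one in the paper: a symmetric function with $h(0)=0$, $h(k)=1$, and the recurrence $u_i = \frac{k-i+1-\gamma}{k-i+1}\,u_{i-1}$ (equivalently the paper's $1-x_{i+1} = (1-\gamma/(k-i))(1-x_i)$); the computation of $\beta \geq k\,d_k = k\,u_{k-1} = \binom{k-\gamma}{k-1}$ at $(A,B)=(\emptyset,X)$ is identical.

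The one noteworthy difference is in verifying $\gamma$-weak submodularity from below for \emph{all} pairs $(a,b)$. The paper handles general $j\leq k-1$ directly: it writes $x_j-x_i = (1-x_i)\bigl(1-\prod_{\ell=i}^{j-1}(1-\gamma/(k-\ell))\bigr)$ and bounds the product via the generalized Bernoulli inequality $(1-1/n)^\gamma \leq 1-\gamma/n$. Your route is cleaner: once $h$ is convex (the differences $d_i$ nondecreasing), the secant slope $(h(b)-h(a))/(b-a)$ is nondecreasing in $b$, so for each fixed $a$ the binding constraint is $b=k$, which is tight by design for $a\leq k-2$ and trivial for $a=k-1$. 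This sidesteps any inequality work for intermediate $b$. One small point to tidy: your stated ratio $d_{i+1}/d_i = (k-i+1-\gamma)/(k-i)$ comes from the recurrence for $u$, so it only covers $i\leq k-2$; the last step $d_k \geq d_{k-1}$ must be checked separately, i.e.\ $u_{k-1} \geq u_{k-2}-u_{k-1}$, which reduces to $2-\gamma \geq 1$ and holds for $\gamma\leq 1$.
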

Note that as $\gamma \rightarrow 0$ (respectively $\gamma \rightarrow 1$), we have $\beta \rightarrow k$ (respectively $\beta \rightarrow 1$). Hence, we recover the trivial upper bound for monotone set functions when $\gamma \rightarrow 0$ and as $\gamma \to 1$, our lower bound on $\beta$ approaches 1, corresponding to submodularity. We conjecture that this bound on $\beta$ is in fact the tightest achievable.


\section{Conclusion}
In this paper, we introduce the definition of upper submodularity ratio $\beta$ which complements the definition of \cite{Das:2011:Submodular}. We show that for two sparse subset selection problems: \emph{Sparse Regression} and \emph{Bayesian A-Optimal Design}, this ratio is bounded by spectral quantities. For functions with bounded upper and lower submodularity ratio, we give two algorithms with asymptotic performance $\frac{1}{2}$ and $1 - e^{-1}$, respectively. These algorithms yield state-of-the-art performance guarantees for the two applications we consider.
As open questions, \begin{itemize}
\item Can we characterize functions with bounded upper submodularity ratio. Elenberg et al. \cite{Elenberg:2018:Strong} showed that RSC implies weak submodularity. Does it imply bounded $\beta$?
\item It is still open to determine the approximation ratio of the greedy algorithm for weakly submodular function under a matroid constraint.
\item Is it possible to round the fractional solution of the multilinear relaxation of $(\gamma, \beta)$-weakly submodular function?
\end{itemize}

\bibliographystyle{plain}
\bibliography{minimal.bib}
\appendix

\section{Properties of $g_\phi$}
\label{sec:properties-g_phi}
Here we give further properties of the potential
\begin{equation*}
\label{eq:g-def}
g_\phi(A) =  \int_0^1\!\! \frac{\phi e^{\phi p}}{e^{\phi} - 1}\sum_{B \subseteq A}p^{|B|-1}(1-p)^{|A|-|B|}f(B)\,dp =\sum_{B \subseteq A}\mc{\phi}{|A|-1}{|B|-1}
f(B)
\end{equation*}
defined in Section~\ref{sec:non-oblivious-local}. We recall that the coefficients $\mc{\phi}{a}{b}$ for $0 \leq b \leq a$ are defined by
\begin{equation*}
\mc{\phi}{a}{b} = \int_{0}^1 \frac{\phi e^{\phi p}}{e^{\phi}-1}p^b(1-p)^{a-b} \,dp.
\end{equation*}
If we consider a continuous distribution $\cD_{\phi}$ on $[0,1]$ with density function:
\[\cD_{\phi}(x) = \frac{\phi e^{\phi x}}{e^\phi - 1}\]
then we can succinctly express these coefficients as $\mc{\phi}{a}{b} = \expect_{p \sim \cD_\phi}[p^a(1-p)^b]$. For convenience, we will define $\mc{\phi}{a}{b} = 0$ if either $a < 0$ or $b < 0$.

Here, and in the following sections, we define the function $h : \reals \to \reals$ by $h(x) = \frac{x e^x}{e^{x} - 1}$. Then, by Bernoulli's inequality $\frac{d h}{d x} = \frac{e^x(e^x - 1 - x)}{(e^x -1)^2} \geq 0$ and so $h$ is an increasing function. Note that our algorithm's claimed guarantee can then be expressed as $\frac{\gamma^2}{h(\phi(\gamma,\beta))}$ where $\phi(\gamma,\beta) = \gamma^2 + \beta(1 - \gamma)$. Finally, for $k \in \mathbb{Z}_+$, we let $H_k$ denote the $k^{\textrm{th}}$ harmonic number $H_k = \sum_{i = 1}^k 1/i = \Theta(\log k)$.

\subsection{Properties of the coefficients $\mc{\phi}{a}{b}$}
\label{sec:prop-coeff-mcph}
Here we provide a proof for the following properties $\mc{\phi}{a}{b}$ used in Section~\ref{sec:non-oblivious-local}, which we restate here for convenience. We give a proof of each claim in turn.

\coeffprops*

\begin{proof}[Proof of Claim 1]
Note that by the definition of $g_\phi$:
\begin{align*}
g_\phi(e | A) &=
\sum_{B \subseteq A + e} \mc{\phi}{|A|}{|B| - 1} f(B) - \sum_{B \subseteq A} \mc{\phi}{|A| - 1}{ |B| - 1}f(B) \\
&= \sum_{B \subseteq A} \left[(\mc{\phi}{|A|}{|B| - 1} - \mc{\phi}{|A|-1}{|B|-1})f(B) + \mc{\phi}{|A|}{|B|}f(B+e)\right].
\end{align*}
It thus suffices to show $(\mc{\phi}{|A|}{|B|-1} - \mc{\phi}{|A|-1}{|B|-1})f(B) = -\mc{\phi}{|A|}{|B|}f(B)$. For $B = \emptyset$, we have $f(\emptyset)  = 0$ and so $(\mc{\phi}{|A|}{-1} - \mc{\phi}{|A|-1}{-1})f(\emptyset) = 0 = -\mc{\phi}{|A|}{0}f(\emptyset)$. When $|B| \geq 1$,
\begin{multline*}
\mc{\phi}{|A|}{|B|-1} - \mc{\phi}{|A|-1}{|B|-1} = \expect_{p \sim \cD_{\phi}}\left[p^{|B|-1}(1-p)^{|A|-|B|+1} - p^{|B|-1}(1-p)^{|A|-|B|}\right] \\
= \expect_{p \sim \cD_\phi}\left[-p^{|B|}(1-p)^{|A|-|B|}\right] = -\mc{\phi}{|A|}{|B|}.
\end{multline*}
\end{proof}

\begin{proof}[Proof of Claim 2] By linearity of expectation:
\begin{equation*}
\sum_{B \subseteq A}\mc{\phi}{|A|}{|B|} = \sum_{b = 0}^{|A|} \binom{|A|}{b}\expect_{p \sim \cD_\phi}\ld p^b(1-p)^{|A|-b}\rd =
\expect_{p \sim \cD_\phi}\ld \sum_{b = 0}^{|A|} \binom{|A|}{b} p^b(1-p)^{|A|-b}\rd =
 1.
\end{equation*}
\end{proof}

\begin{proof}[Proof of Claim 3] When $0 \leq b \leq a$, the definition of $\mc{\phi}{a}{b}$ immediately gives:
\begin{equation*}
\mc{\phi}{a}{b} =\!\! \expect_{p \sim \cD_\phi}p^b(1-p)^{a-b}
=\!\! \expect_{p \sim \cD_\phi}[p^{b}(1-p)^{a-b}p + p^b(1-p)^{a-b}(1-p)]
= \mc{\phi}{a+1}{b+1} + \mc{\phi}{a+1}{b}.
\end{equation*}
\end{proof}

\begin{proof}[Proof of Claim 4] For $a > 0$ and $b \leq a$, noting that $\cD_\phi(p) = \frac{d}{dp} \frac{\cD_\phi(p)}{\phi}$ and applying integration by parts
\begin{multline*}
\mc{\phi}{a}{b} = \int_{0}^1 \cD_\phi(p) \cdot p^b (1-p)^{a-b}\,dp \\
= \left. \frac{\cD_\phi(p)}{\phi}p^b(1-p)^{a-b}\right|_{p = 0}^{p=1} - \int_{0}^1\frac{\cD_\phi(p)}{\phi} \left(b p^{b-1}(1-p)^{a-b} - (a-b) p^b (1-p)^{a-b-1}\right)\,dp\,.
\end{multline*}
Which is equivalent to:
\begin{equation*}
\phi \mc{\phi}{a}{b} = -b \mc{\phi}{a-1}{b-1} + (a-b) \mc{\phi}{a-1}{b} + \cD_\phi(p) p^b(1-p)^{a-b} \Bigr|_{p = 0}^{p=1}\,.
\end{equation*}
This follows immediately from the definition of $\mc{\phi}{a}{b}$ when $b > 0$, and when $b = 0$ it follows from $-b\mc{\phi}{a-1}{b-1} = 0 = bp^{b-1}(1-p)^{a-b}$. 

To complete the claim, we note that $\lim_{p \to 0^+} \cD_\phi(p) p^b(1-p)^{a-b}$ is $\cD_\phi(0) = \phi/(e^{\phi} - 1)$ if $b = 0$ and 0 if $b > 0$, and $\lim_{p \to 1^-} \cD_\phi(p) p^b(1-p)^{a-b}$ is $\cD_\phi(1) = \phi e^{\phi}/(e^{\phi} - 1)$ if $a = b$, and 0 if $0 \leq b < a$.
\end{proof}

\subsection{Bounding the value of $g$}
\label{sec:bounding-value-g}
Here we show that the value of $g_\phi(A)$ can be bounded in terms of $f(A)$ for any set $A$. In the analysis of~\cite{Filmus:2014}, this follows from submodularity of $g$, which is inherited from the submodularity of $f$. Here, we must again adopt a different approach. We begin by proving the following claim. Fix some set $A \subseteq X$ and for all $0 \leq j \leq |A|$ define
$F_j = \sum_{B \in \binom{A}{j}} f(B)$ as the total value of all subsets of $A$ of size $j$. Note that since we suppose $f$ is normalized, $F_0 = f(\emptyset) = 0$.
\begin{lemma}
\label{lem:lower-bound-helper}
If $f$ is $\gamma$-weakly submodular from below, then $F_i \geq \binom{|A|-1}{i-1} \gamma f(A)$ for all $1 \leq i \leq |A|$.
\end{lemma}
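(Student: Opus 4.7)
The plan is to prove the bound by induction on $i$, using $\gamma$-weak submodularity from below applied with $B = A$ and $A'$ ranging over the $(i-1)$-element subsets of $A$.

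First, I would apply inequality~\eqref{eq:gamma-submod} with $B = A$ and some fixed $A' \subseteq A$ of size $i-1$, and then expand each marginal $f(e \mid A') = f(A'+e) - f(A')$ to rearrange the inequality into the form
\[
\gamma f(A) \;\leq\; \sum_{e \in A \setminus A'} f(A' + e) \;+\; \bigl(\gamma - |A| + i - 1\bigr)\, f(A').
\]
Summing both sides over all $A' \in \binom{A}{i-1}$ and observing that each $B \in \binom{A}{i}$ appears in the double sum $\sum_{A'}\sum_{e \in A \setminus A'} f(A' + e)$ exactly $i$ times (once for each choice of $e \in B$, with $A' = B - e$) yields the recurrence
\[
i\, F_i \;\geq\; \binom{|A|}{i-1}\gamma f(A) \;+\; \bigl(|A| - i + 1 - \gamma\bigr)\, F_{i-1},
\]
which is the engine of the induction. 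Note that the coefficient of $F_{i-1}$ is nonnegative whenever $i \leq |A|$ and $\gamma \leq 1$, so the previous-level bound transfers forward in a useful direction.

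Second, I would run the induction. The base case $i = 1$ is the special case $A' = \emptyset$ of~\eqref{eq:gamma-submod}, which gives $\gamma f(A) \leq \sum_{e \in A} f(\{e\}) = F_1$, matching the claim since $\binom{|A|-1}{0} = 1$. For the inductive step, substitute $F_{i-1} \geq \binom{|A|-1}{i-2}\gamma f(A)$ into the recurrence and simplify using the identities $\binom{n}{i-1} = \tfrac{n}{i-1}\binom{n-1}{i-2}$ and $\binom{n-1}{i-1} = \tfrac{n-i+1}{i-1}\binom{n-1}{i-2}$ (valid for $i \geq 2$). After collecting terms and cancelling, the desired inequality $F_i \geq \binom{|A|-1}{i-1}\gamma f(A)$ reduces to $(i-1)(1-\gamma) \geq 0$, which holds for all $i \geq 1$ since $\gamma \in [0,1]$.

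I do not anticipate any genuine obstacle: the argument is essentially a careful double-counting followed by routine binomial algebra. The one pitfall to watch is the multiplicity when exchanging the order of summation — miscounting the number of times each $B \in \binom{A}{i}$ appears (it is $i$, not $|A|-i+1$) would corrupt the recurrence and derail the induction.
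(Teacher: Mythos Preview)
Your proposal is correct and follows essentially the same approach as the paper: apply \eqref{eq:gamma-submod} with the larger set equal to $A$, sum over all subsets of a fixed size to get a recurrence relating $F_i$ to $F_{i-1}$, and then run an induction on $i$ that closes via binomial identities. The only cosmetic difference is that you retain the coefficient $(|A|-i+1-\gamma)$ on $F_{i-1}$, whereas the paper immediately weakens this to $(|A|-i)$ using $\gamma\le 1$; your final reduction to $(i-1)(1-\gamma)\ge 0$ is precisely the step that absorbs this slack.
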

\begin{proof}[Proof of Lemma \ref{lem:lower-bound-helper}]
Let $k = |A|$. Since $f$ is $\gamma$-weakly submodular, for any $B \subseteq A$ we have \[\sum_{e \in A \setminus B}(f(B + e) - f(B)) \geq \gamma(f(A) - f(B)).\] Rearranging this, we have
\begin{equation}
\label{eq:g-lower-1}
\sum_{e \in A \setminus B}f(B+e) \geq \gamma f(A) + (|A| - |B| - \gamma)f(B) \geq
\gamma f(A) + (|A| - |B| - 1)f(B)\,,
\end{equation}
for all $B \subseteq A$. Summing~\eqref{eq:g-lower-1} over all $\binom{k}{j}$ possible subsets $B$ of size $j$, we obtain
\begin{equation}
  \label{eq:g-lower-2}
(j+1)F_{j+1} \geq \gamma \tbinom{k}{j} f(A) + (k - j - 1)F_j,
\end{equation}
since each set $T$ of size $j+1$ appears once as $B + e$ on the left-hand side of~\eqref{eq:g-lower-1} for each of the $j+1$ distinct choices of $e \in T$ with $B = T - e$.

We now show that $F_i \geq \binom{k-1}{i-1} \gamma f(A)$ for all $1 \leq i \leq k$. The proof is by induction on $i$.
For $i = 1$, the claim follows immediately from \eqref{eq:g-lower-2} with $j = 0$, since then $\binom{k}{j} = 1 = \binom{k-1}{i-1}$ and $(k - j - 1)F_j = (k-1)F_0 = 0$. For the induction step, \eqref{eq:g-lower-2} and the induction hypothesis imply:
\begin{align*}
F_{i+1} &\geq \textstyle
\frac{1}{i+1}\left(\gamma \binom{k}{i} f(A) + (k - i-1)F_i\right)
\geq \frac{1}{i+1}\left(\gamma \binom{k}{i} f(A) + (k - i-1)\gamma \binom{k - 1}{i- 1} f(A)\right) \\
&= \textstyle \frac{\gamma}{i+1}\left(\frac{k}{i}\binom{k-1}{i-1} f(A) + (k - i-1) \binom{k - 1}{i- 1} f(A)\right)
= \textstyle \frac{\gamma}{i+1}\left(\frac{k+ ki - i^2-i}{i}\right)\binom{k-1}{i-1} f(A) \\
&= \textstyle \frac{\gamma}{i+1} \frac{k(i+1) - i(i + 1)}{i}\binom{k-1}{i-1} f(A)
= \gamma \frac{k - i}{i}\binom{k-1}{i-1} f(A)
= \gamma \binom{k-1}{i} f(A).
\end{align*}
\end{proof}

Using the above claim, we now bound the value of $g_\phi(A)$ for any set $A$.
\begin{lemma}
  \label{lem:g-bounds}
  If $f$ is $\gamma$-weakly submodular, then for all $A \subseteq X$, $\gamma f(A) \leq g_\phi(A) \leq  h(\phi)H_{|A|} f(A)$.
\end{lemma}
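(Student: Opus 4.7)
I will write $k = |A|$ throughout and, as in Lemma~\ref{lem:lower-bound-helper}, denote $F_i = \sum_{B \in \binom{A}{i}} f(B)$ for $0 \le i \le k$, with $F_0 = f(\emptyset) = 0$. Grouping the sum defining $g_\phi$ by cardinality of $B$ gives the basic identity
\begin{equation*}
g_\phi(A) \;=\; \sum_{i=1}^{k} \mc{\phi}{k-1}{i-1}\, F_i,
\end{equation*}
from which both bounds will follow.

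\textbf{Lower bound.} I will invoke Lemma~\ref{lem:lower-bound-helper}, which gives $F_i \ge \binom{k-1}{i-1}\gamma f(A)$ for every $1 \le i \le k$. Substituting into the identity above and reindexing with $j = i-1$ yields
\begin{equation*}
g_\phi(A) \;\ge\; \gamma f(A) \sum_{j=0}^{k-1} \binom{k-1}{j}\mc{\phi}{k-1}{j}.
\end{equation*}
Here the last factor equals $\sum_{B \subseteq A'}\mc{\phi}{|A'|}{|B|}$ for any set $A'$ of size $k-1$, which by part~2 of Lemma~\ref{lem:coeff-props} is exactly $1$. This gives $g_\phi(A) \ge \gamma f(A)$ (and the degenerate case $k = 0$ is trivial since $f(\emptyset) = 0$).

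\textbf{Upper bound.} I will use the integral representation of $g_\phi$ and bound the integrand term-by-term. By monotonicity, $f(B) \le f(A)$ for every $B \subseteq A$, so
\begin{equation*}
g_\phi(A) \;\le\; f(A) \int_0^1 \frac{\phi e^{\phi p}}{e^\phi - 1}\sum_{\substack{B \subseteq A\\B \ne \emptyset}} p^{|B|-1}(1-p)^{k-|B|}\, dp.
\end{equation*}
The inner sum equals $p^{-1}\sum_{i=1}^{k}\binom{k}{i}p^i(1-p)^{k-i} = (1 - (1-p)^k)/p$ by the binomial theorem (after subtracting the $i=0$ term). Bounding the density $\cD_\phi(p) = \phi e^{\phi p}/(e^\phi - 1)$ by its maximum $\cD_\phi(1) = h(\phi)$ on $[0,1]$ then gives
\begin{equation*}
g_\phi(A) \;\le\; h(\phi)\, f(A) \int_0^1 \frac{1 - (1-p)^k}{p}\, dp.
\end{equation*}
The classical identity $\int_0^1 \frac{1-(1-p)^k}{p}\, dp = \int_0^1 (1 + q + \cdots + q^{k-1})\, dq = H_k$ (via $q = 1-p$) completes the proof.

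\textbf{Main obstacle.} The lower bound is the substantive direction, since it relies on Lemma~\ref{lem:lower-bound-helper} and a fortunate telescoping against the identity of part~2 of Lemma~\ref{lem:coeff-props}. The upper bound is essentially a routine calculation once the loss from replacing $\cD_\phi(p)$ by its supremum $h(\phi)$ is accepted; this is where the $h(\phi)$ factor (rather than something tighter) appears in the final bound.
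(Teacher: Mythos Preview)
Your proposal is correct and follows essentially the same approach as the paper. The only cosmetic difference is that for the lower bound you invoke part~2 of Lemma~\ref{lem:coeff-props} to collapse $\sum_{j=0}^{k-1}\binom{k-1}{j}\mc{\phi}{k-1}{j}$ to $1$, whereas the paper carries out the same computation by writing everything as an expectation over $p\sim\cD_\phi$ and applying the binomial theorem directly inside the expectation; the upper-bound arguments are identical.
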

\begin{proof}[Proof of Lemma \ref{lem:g-bounds}]
Let $k = |A|$. We begin with the lower bound for $g_\phi(A)$. By the definition of the coefficients $\mc{\phi}{a}{b}$ and Lemma~\ref{lem:lower-bound-helper}:
\begin{align*}
g_\phi(A) &= \sum_{B \subseteq A}\expect_{p \sim \cD_\phi}\ld p^{|B|-1}(1-p)^{|A| - |B|}\rd f(B)
= \expect_{p \sim \cD_\phi}\biggl[ \sum_{i = 1}^k p^{i-1}(1-p)^{k - i}F_i\biggl] \\
&\geq \expect_{p \sim \cD_\phi}\biggl[ \sum_{i = 1}^k \gamma p^{i-1}(1-p)^{k - i}\tbinom{k-1}{i-1}f(A) \biggr]
= \expect_{p \sim \cD_\phi}\biggl[\gamma f(A) \sum_{i = 0}^{k-1} \tbinom{k-1}{i}p^{i}(1-p)^{k - i - 1}\biggr] \\
&= \expect_{p \sim \cD_\phi}[ \gamma f(A)] = \gamma f(A).
\end{align*}

For the upper bound, we similarly have:
\begin{align*}
g_\phi(A) &= \expect_{p \sim \cD_\phi}\biggl[ \sum_{i = 1}^k p^{i-1}(1-p)^{k - i}F_i\biggl] \leq \expect_{p \sim \cD_\phi}\biggl[ \sum_{i = 1}^k p^{i-1}(1-p)^{k - i}\binom{k}{i}f(A)\biggl]\\\
&= \int_{0}^1 \frac{\phi e^{\phi p}}{e^\phi - 1}\frac{\sum_{i=1}^{k} \binom{k}{i} p^{i}(1-p)^{k - i}f(A)}{p} \,dp
= \int_{0}^1 \frac{\phi e^{\phi p}}{e^\phi - 1}\frac{1 - (1-p)^{k}}{p} f(A) \,dp
       \\
&\leq \frac{\phi e^{\phi}}{e^\phi - 1}\int_{0}^1 \frac{1-(1-p)^{k}}{p}f(A)\,dp
       = h(\phi)\int_{0}^1 \sum_{j = 0}^{k-1}(1-p)^jf(A)\,dp \\
       &= h(\phi)\sum_{j = 0}^{k-1}\frac{1}{j+1}f(A) = h(\phi)H_{|A|}f(A),
\end{align*}
where the first inequality follows from monotonicity of $f$ and the second inequality from $\frac{1 - (1-p)^{k}}{p} > 0$ for $p \in (0,1]$ and $h(\phi) = \frac{\phi e^\phi}{e^{\phi}-1}$ is an increasing function of $p$.
\end{proof}

\subsection{Bounding the sensitivity of $g_\phi$ to $\phi$}
\label{sec:bound-sens-g_phi}
The following lemma shows that small changes in the parameter $\phi$ produce relatively small changes in the value $g_\phi(A)$ for any set $A$.
\begin{lemma}
\label{lem:g-change-ineq}
For all $\phi$, $\epsilon > 0$, and $S \subseteq X$,
\begin{enumerate}
\item $g_{\phi(1-\epsilon)}(S) \geq e^{-\phi\epsilon}g_\phi(S)$
\item $h(\phi) \leq e^{\phi\epsilon}h(\phi(1-\epsilon))$
\end{enumerate}
\end{lemma}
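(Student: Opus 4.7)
The plan is to observe that both inequalities reduce to a single elementary scalar inequality, namely
\[
e^{\phi(1-\epsilon)} - 1 \leq (1-\epsilon)(e^\phi - 1),
\]
which is equivalent to $e^{\phi(1-\epsilon)} \leq (1-\epsilon)\,e^\phi + \epsilon \cdot 1$ and follows from weighted AM--GM (equivalently, concavity of $\log$) applied with weights $(1-\epsilon,\epsilon)$ to the points $(e^\phi, 1)$. With this in hand, Part 2 is a direct algebraic manipulation: using $h(x) = xe^x/(e^x - 1)$, the $\phi$'s cancel in the ratio $h(\phi)/h(\phi(1-\epsilon))$, the exponentials $e^{\phi\epsilon}\cdot e^{\phi(1-\epsilon)}$ combine to $e^\phi$, and after clearing denominators the claim $h(\phi)\leq e^{\phi\epsilon}h(\phi(1-\epsilon))$ is exactly the AM--GM inequality above.

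For Part 1, the plan is to pass to the integral representation of $g_\phi$ and prove a pointwise comparison of the densities $\mathcal{D}_\phi(p) = \phi e^{\phi p}/(e^\phi - 1)$. Writing $g_\phi(S) = \int_0^1 \mathcal{D}_\phi(p)\, q_S(p)\,dp$ with $q_S(p) = \sum_{\emptyset \neq B \subseteq S} p^{|B|-1}(1-p)^{|S|-|B|} f(B)$, the key point is that since $f(\emptyset)=0$ we may drop the $B=\emptyset$ term from the sum, so the $p^{-1}$ singularity does not appear and $q_S$ is bounded and nonnegative on $[0,1]$. It therefore suffices to show $\mathcal{D}_{\phi(1-\epsilon)}(p) \geq e^{-\phi\epsilon}\mathcal{D}_\phi(p)$ for every $p \in [0,1]$ and integrate against $q_S \geq 0$.

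Computing the ratio of densities directly gives
\[
\frac{\mathcal{D}_{\phi(1-\epsilon)}(p)}{\mathcal{D}_\phi(p)} = (1-\epsilon)\,e^{-\phi\epsilon p}\,\frac{e^\phi - 1}{e^{\phi(1-\epsilon)}-1}.
\]
I would then bound the two $p$-independent pieces separately: the exponential factor satisfies $e^{-\phi\epsilon p} \geq e^{-\phi\epsilon}$ on $[0,1]$ because $\phi\epsilon \geq 0$, and the factor $(1-\epsilon)(e^\phi - 1)/(e^{\phi(1-\epsilon)}-1)$ is at least $1$ by the AM--GM inequality recalled above. Multiplying these two bounds yields the pointwise density comparison with constant $e^{-\phi\epsilon}$.

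The main (mild) technical observation is the AM--GM reduction shared by the two parts; once that is isolated, both proofs are short. The only real place where care is needed is the bookkeeping in Part 1, namely observing that the apparent singularity of $p^{|B|-1}$ at $p = 0$ for the empty set is cancelled by $f(\emptyset)=0$, so that the pointwise density inequality lifts cleanly to the integrals defining $g_\phi$.
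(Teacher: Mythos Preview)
Your proposal is correct and follows essentially the same route as the paper: both arguments reduce to the pointwise density comparison $\mathcal{D}_{\phi(1-\epsilon)}(p) \geq e^{-\phi\epsilon}\mathcal{D}_\phi(p)$, which in turn rests on the scalar inequality $e^{\phi(1-\epsilon)} - 1 \leq (1-\epsilon)(e^\phi - 1)$; the paper derives this via the generalized Bernoulli inequality $(1+x)^t \leq 1+tx$ applied to $x = e^{-\phi}-1$, $t=\epsilon$, while you use the equivalent weighted AM--GM formulation, and the paper obtains Part~2 by specializing the density inequality to $p=1$ rather than by direct algebra.
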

\begin{proof}[Proof of Lemma \ref{lem:g-change-ineq}]
Both claims will follow from the inequality
\begin{equation}
\label{eq:g-change-1}
\frac{\phi(1-\epsilon)e^{\phi(1-\epsilon)p}}{e^{\phi(1-\epsilon)} - 1} \geq
e^{-\phi\epsilon}\frac{\phi e^{\phi p}}{e^{\phi}-1}\,,
\end{equation}
which we show is valid for all $p \in [0,1]$ and $\epsilon > 0$. Indeed, under these assumptions,  \begin{multline*}
\frac{\phi(1-\epsilon)e^{\phi(1-\epsilon)p}}{e^{\phi(1-\epsilon)} - 1}\cdot \frac{e^\phi - 1}{\phi e^{\phi p}}
= (1-\epsilon)e^{-\phi \epsilon p}\frac{e^{\phi} - 1}{e^{\phi}e^{-\phi\epsilon} - 1}
= (1-\epsilon)e^{-\phi \epsilon p}\frac{e^{\phi} - 1}{e^{\phi}(1 + (e^{-\phi}-1))^\epsilon - 1}
\\
\geq (1-\epsilon)e^{-\phi \epsilon p}\frac{e^{\phi} - 1}{e^{\phi}(1 +  \epsilon(e^{-\phi}- 1)) - 1}
= (1-\epsilon)e^{-\phi \epsilon p}\frac{e^{\phi} - 1}{(1-\epsilon)(e^{\phi}-1)}  = e^{-\phi\epsilon p} \geq e^{-\phi \epsilon}\,.
\end{multline*}
Here the first inequality follows from the generalized Bernoulli inequality $(1 + x)^t \leq (1+ t x)$, which holds for all $x \geq -1$ and $0 \leq t \leq 1$, and the second inequality follows from $p \in [0,1]$.

For the first claim, applying \eqref{eq:g-change-1} gives
\begin{multline*}
g_{\phi(1-\epsilon)}(A) = \int_{0}^{1}\frac{\phi(1-\epsilon) e^{\phi(1-\epsilon)p}}{e^{\phi(1-\epsilon)p}-1}\sum_{B \subseteq A}p^{|B|-1}(1-p)^{|A|-|B|}f(B)\,dp \\
\geq \int_{0}^{1}e^{-\phi \epsilon} \frac{\phi e^{\phi p}}{e^{\phi}-1}\sum_{B \subseteq A}p^{|B|-1}(1-p)^{|A|-|B|}f(B)\,dp
= e^{-\phi\epsilon}g_\phi(A),
\end{multline*}
as required. For the second claim, setting $p = 1$ in \eqref{eq:g-change-1} gives $h(\phi(1-\epsilon)) \geq e^{-\phi\epsilon}h(\phi)$ or, equivalently, $h(\phi) \leq e^{\phi\epsilon}h(\phi(1-\epsilon))$.
\end{proof}

\subsection{Efficiently estimating $g_\phi$ via sampling}
\label{sec:effic-estim-g_phi}
The definition of $g_\phi$ requires evaluating $f(B)$ on all $B \subseteq A$, which requires $2^{|A|}$ calls to the value oracle for $f$. In this section, we show that we can efficiently estimate $g_\phi$ using only a polynomial number of value queries to $f$. Our sampling procedure is based on the same general ideas described in~\cite{Filmus:2014}, but here we focus on evaluating only the \emph{marginals} of $g_\phi$, which results in a considerably simpler implementation. In particular, our algorithm does not require computation of the coefficients $\mc{\phi}{a}{b}$.
\begin{lemma}
\label{lem:g-sampling}
For any $\phi$, $N$, there is a randomized procedure for obtaining an estimate $\tilde{g}(e|A)$ of $g_\phi(e | A)$ using $N$ queries to the value oracle for $f$ so that for any $\delta > 0$,
\[
\prob{\,|g(e | A) - \tilde{g}(e|A)| \geq \delta f(A+e)\,} <
2e^{-\frac{\delta^2 N}{2}}\,,
\]
\end{lemma}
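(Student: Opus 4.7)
The plan is to rewrite $g_\phi(e \mid A)$ as a double expectation that is amenable to Monte Carlo estimation, and then apply Hoeffding's inequality. By part 1 of Lemma~\ref{lem:coeff-props} and the definition of the coefficients $\mc{\phi}{a}{b}$ as moments under the density $\cD_\phi$, we have
\begin{equation*}
g_\phi(e \mid A) = \sum_{B \subseteq A}\mc{\phi}{|A|}{|B|}f(e \mid B) = \expect_{p \sim \cD_\phi}\Biggl[\sum_{B \subseteq A}p^{|B|}(1-p)^{|A|-|B|}f(e \mid B)\Biggr].
\end{equation*}
The inner sum is exactly $\expect_{B \sim A(p)}[f(e \mid B)]$, where $A(p)$ denotes the random subset of $A$ obtained by including each element independently with probability $p$. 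Thus
\begin{equation*}
g_\phi(e \mid A) = \expect_{p \sim \cD_\phi}\expect_{B \sim A(p)}[f(e \mid B)].
\end{equation*}

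The estimator is then the natural one: independently draw $N$ pairs $(p_i, B_i)$, where each $p_i$ is sampled from $\cD_\phi$ (for instance by inverse transform sampling, since the cumulative distribution function of $\cD_\phi$ is available in closed form) and then $B_i$ is sampled from $A(p_i)$ by $|A|$ independent coin flips. For each $i$ we query the value oracle to obtain $f(e \mid B_i)$, and we set
\begin{equation*}
\tilde{g}(e \mid A) = \frac{1}{N}\sum_{i=1}^N f(e \mid B_i).
\end{equation*}
By construction, $\expect[\tilde{g}(e \mid A)] = g_\phi(e \mid A)$, and by monotonicity of $f$ we have $0 \leq f(e \mid B_i) \leq f(A + e)$ almost surely, since $B_i + e \subseteq A + e$.

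The final step is a direct application of Hoeffding's inequality to the $N$ i.i.d.\ bounded random variables $f(e \mid B_i) \in [0, f(A+e)]$, which yields
\begin{equation*}
\prob{\,|\tilde{g}(e \mid A) - g_\phi(e \mid A)| \geq \delta f(A+e)\,} \leq 2\exp(-2\delta^2 N),
\end{equation*}
and in particular the stated bound $2e^{-\delta^2 N / 2}$. The main conceptual step is recognizing that the combinatorial sum defining $g_\phi$ coincides with a two-stage sampling procedure over $\cD_\phi$ and random Bernoulli subsets; after this rewriting, the concentration bound is routine and there is no real obstacle. The only mild subtleties are sampling from $\cD_\phi$ (handled by inverse transform sampling) and accounting for oracle queries, where each sample nominally requires two value queries to compute the marginal $f(e \mid B_i) = f(B_i + e) - f(B_i)$; this affects only constants in $N$.
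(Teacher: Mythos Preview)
Your proposal is correct and follows essentially the same approach as the paper: both rewrite $g_\phi(e\mid A)$ as the expectation of $f(e\mid B)$ under the two-stage sampling $p\sim\cD_\phi$, $B\sim A(p)$, form the empirical mean estimator, and apply a concentration inequality to the i.i.d.\ bounded summands. The only cosmetic difference is that the paper centers and normalizes to apply a Chernoff bound yielding exactly $2e^{-\delta^2 N/2}$, whereas you invoke Hoeffding directly and obtain the (stronger) $2e^{-2\delta^2 N}$; your remark about the factor of two in oracle queries per marginal is also apt.
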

\begin{proof}[Proof of Lemma \ref{lem:g-sampling}]
We consider the following 2-step procedure given as an interpretation of $g$  in~\cite{Filmus:2014}: we first sample $p \sim \cD_{\phi}$, then construct a random $B \subseteq A$ by taking each element of $A$ independently with probability $p$. The probability that any given $B \subseteq A$ is selected by the procedure is then precisely
\[
\int_{0}^1\frac{\phi e^{\phi p}}{e^\phi - 1}p^{|B|}(1-p)^{|A|}\,dp = \mc{\phi}{|A|}{|B|}\,.
\]
Thus, for a random $\tilde{B} \subseteq A$ sampled in this fashion, $\expect[f(e|\tilde{B})] = \sum_{B \subseteq A}\mc{\phi}{|A|}{|B|}f(e | B) = g(e |A)$, by part 1 of Lemma~\ref{lem:coeff-props}. We remark that for the particular distributions $\cD_\phi$ we consider, the first step of the procedure can easily by implemented with inverse transform sampling.

Suppose now that we draw $N$ independent random samples $\{B_i\}_{i=1}^N$ in this fashion and define the random variables $Y_i = \frac{g(e|A) - f(e|B_i)}{f(A + e)}$. Then, $\expect[Y_i] = 0$ for all $i$. Moreover, by monotonicity of $f$, $0 \leq f(e|B) \leq f(B+e) \leq f(A+e)$ for all $B \subseteq A$ and also $0 \leq \sum_{B \subseteq A}\mc{\phi}{|A|}{|B|}f(e | B) = g(e |A)$ and $g(e | A) = \sum_{B \subseteq A}\mc{\phi}{|A|}{|B|}f(e | B) \leq \sum_{B \subseteq A}\mc{\phi}{|A|}{|B|}f(A + e) = f(A + e)$ by part 2 of Lemma~\ref{lem:coeff-props}. Thus, $|Y_i| \leq 1$ for all $i$. Let $\tilde{g}_\phi(e|A) = \frac{1}{N}\sum_{i = 1}^N f(e|B_i)$. Applying the Chernoff bound (Lemma~\ref{lem:chernoff-bound}), for any $\delta > 0$ we have
\begin{equation*}
\prob{\,|g(e | A) - \tilde{g}(e|A)| \geq \delta f(A+e)\,} \leq
\prob{\textstyle \sum_{i = 1}^N Y_i > \delta N} < 2e^{-\frac{\delta^2 N}{2}}. \qedhere
\end{equation*}
\end{proof}

\section{A randomized, polynomial time distorted local-search algorithm}
\label{sec:our-final-algorithm}
Our final algorithm is shown in Algorithm~\ref{alg:distorted-ls-full}. Before presenting it in detail, we describe the main concerns involved in its formulation.

\subsection{Initialization}
\label{sec:initialization}
We initialize the algorithm with a solution $S_0$ by using the guarantee for \textsc{ResidualRandomGreedy} provided by~\cite{Chen:2018:Weakly} when only $\gamma$ is bounded. In this case, their analysis shows that the \emph{expected} value of the solution produced by the algorithm is at least $\frac{1}{(1+\gamma^{-1})^2}f(O)$, where $O$ is an optimal solution to the problem. Here, however, we will require a guarantee that holds with high probability. This is easily ensured by independently running \textsc{ResidualRandomGreedy} a sufficient number of times and taking the best solution found.

Formally, suppose we set $\e' = \min(\e,\frac{1}{128})$ and run \textsc{ResidualRandomGreedy} $G = \frac{2\log(n)}{\epsilon'^2} = \tilde{\cO}(\epsilon^{-2})$ times independently. For each $1 \leq l \leq G$, let $T_l$ be the solution produced by the $l^\textrm{th}$ instance of the \textsc{ResidualRandomGreedy}. Define the random variables $Z_l = \frac{1}{(1+\gamma^{-1})^2} - \frac{f(T_l)}{f(O)}$, where $O \in \cI$ is the optimal solution. Then, $\expect[Z_l] = 0$ and $|Z_l| \leq 1$ for all $l$. 

Let $S_0 = \argmax_{1 \leq \l \leq G} f(T_l)$. Then, by the Chernoff bound (Lemma~\ref{lem:chernoff-bound}),
\begin{multline*}
\textstyle
\prob{f(S_0) < \left((1+\gamma^{-1})^{-2} - \epsilon'\right)f(O)} \leq
\prob{\frac{1}{G}\sum_{l = 1}^G f(T_l) < \left((1+\gamma^{-1})^{-2} - \epsilon'\right)f(O)} \\ \textstyle
= \prob{\sum_{l = 1}^G Z_l > G\epsilon'}  < e^{-\frac{\epsilon'^2 G}{2}} = \frac{1}{n}.
\end{multline*}
Thus, with probability at least $1 - \frac{1}{n} = 1 - o(1)$, $f(S_0) \geq \left(\frac{1}{(1+\gamma^{-1})^2} - \epsilon'\right)f(O)$.

\subsection{Determining $\phi$}
\label{sec:determining-phi}
In Theorem~\ref{thm:loc-opt-main}, we considered a $(\gamma,\beta)$-weakly submodular function $f$, and used the potential $g_\phi$ with $\phi = \phi(\gamma,\beta) = \gamma^2 + \beta(1-\gamma)$ to guide the search. In general, however, the values of $\gamma$ and $\beta$ may not be known in advance. One approach to coping with this would be to make an appropriate series of guesses for each of the values, then run our the algorithm for each guess and return the best solution obtained.

Here we describe an alternative and more efficient approach: we guess the value of $\phi(\gamma,\beta)$ directly from an appropriate range of values. Moreover, when running the algorithm for each subsequent guess, we initialize the local search procedure using the solution produced by the algorithm for the previous guess. Combined with the bounds from \ref{lem:g-change-ineq}, this will allow us to amortize the number of improvements made by the algorithm across all guesses.

In the next lemma, we show that if $\gamma$ or $\phi(\gamma,\beta)$ is very small, then guarantee for \textsc{ResidualRandomGreedy} is stronger than that required by our analysis (and so $S_0$ is already a good solution).
This will allow us to bound the range of values for both $\phi$ and $\gamma$ that we must consider in our algorithm.
\begin{lemma}
For all $\gamma \in (0,1]$ and $\beta \geq 1$, $\phi(\gamma,\beta) \geq \frac{3}{4}$. Moreover, if $\phi(\gamma,\beta) > 4$ or $\gamma < \frac{1}{7}$, then $\frac{1}{(1+\gamma^{-1})^2} > \frac{\gamma^2(1 - e^{-\phi(\gamma,\beta)})}{\phi(\gamma,\beta)}$.
\label{lem:phi-gamma-range}
\end{lemma}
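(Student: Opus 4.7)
The plan is to handle each claim separately by elementary algebraic manipulation together with monotonicity.

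For the first claim, since $\beta \geq 1$ we have $\phi(\gamma,\beta) = \gamma^2 + \beta(1-\gamma) \geq \gamma^2 + (1 - \gamma)$. Minimizing the quadratic $q(\gamma) = \gamma^2 - \gamma + 1$ on $(0,1]$ by setting $q'(\gamma) = 2\gamma - 1 = 0$ gives its minimum value $q(1/2) = 3/4$, so $\phi(\gamma,\beta) \geq 3/4$.

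For the second claim, I would first rewrite the target inequality in a cleaner form: since $\frac{1}{(1+\gamma^{-1})^2} = \frac{\gamma^2}{(\gamma+1)^2}$, cancelling the common factor $\gamma^2 > 0$ on both sides reduces the claim to
\[
\frac{1}{(\gamma+1)^2} > \frac{1 - e^{-\phi}}{\phi}.
\]
The right-hand side is a strictly decreasing function of $\phi > 0$, which is most easily seen from the representation $\frac{1 - e^{-\phi}}{\phi} = \int_0^1 e^{-\phi t}\,dt$ (the integrand is pointwise decreasing in $\phi$ for each $t \in (0,1)$).

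Now I would split into the two subcases. In the first subcase $\phi > 4$, bound the right-hand side crudely by $\frac{1-e^{-\phi}}{\phi} < \frac{1}{\phi} < \frac{1}{4}$, while $\gamma \leq 1$ gives $\frac{1}{(\gamma+1)^2} \geq \frac{1}{4}$, which establishes the strict inequality. In the second subcase $\gamma < 1/7$, I combine the monotonicity of the right-hand side with the lower bound $\phi \geq 3/4$ from the first claim to get $\frac{1 - e^{-\phi}}{\phi} \leq \frac{1 - e^{-3/4}}{3/4}$. A single numerical check (for instance using the truncated Taylor estimate $e^{-3/4} > 1 - \tfrac{3}{4} + \tfrac{(3/4)^2}{2} - \tfrac{(3/4)^3}{6} = \tfrac{59}{128} > \tfrac{109}{256}$) gives $\frac{1 - e^{-3/4}}{3/4} < \frac{49}{64}$, and since $\gamma < 1/7$ implies $\frac{1}{(\gamma+1)^2} > \frac{1}{(1+1/7)^2} = \frac{49}{64}$, the inequality follows.

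There is no real obstacle; the only nontrivial step is the elementary numerical estimate in the $\gamma < 1/7$ case, which can be verified by a short Taylor computation as indicated.
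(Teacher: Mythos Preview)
Your proof is correct and follows essentially the same approach as the paper: minimize $\phi$ over $\beta \geq 1$ and then over $\gamma$ for the first claim, and for the second claim cancel the common $\gamma^2$ factor, use monotonicity of $(1-e^{-\phi})/\phi$ together with the bound $\phi \geq 3/4$, and finish each subcase with an elementary numerical estimate. The only cosmetic differences are that the paper phrases the monotonicity via the increasing function $h(\phi) = \phi/(1-e^{-\phi})$ and uses the slightly sharper cutoff $h(3/4) > 4/3$ (equivalently $(1-e^{-3/4})/(3/4) < 3/4$) rather than your $49/64$, but the structure of the argument is the same.
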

\begin{proof}[Proof of Lemma \ref{lem:phi-gamma-range}]
First, we show that $\phi(\gamma,\beta) \geq 3/4$ for any value of $\gamma \in (0,1]$ and $\beta \geq 1$. Note that $\frac{\partial \phi}{\partial \beta} = 1 - \gamma \geq 0$, for all $\gamma \in [0,1]$. Thus, any minimizer of $\phi(\gamma,\beta)$ sets $\beta = 1$. Moreover, $\frac{\partial \phi}{\partial \gamma} = 2\gamma - \beta$ and $\frac{\partial^2 \phi}{\partial \gamma^2} = 2$ so a $\phi(\gamma,\beta)$ is minimized by $\gamma = \frac{\beta}{2} = \frac{1}{2}$. It follows that $\phi(\gamma,\beta) \geq \phi\left(\frac 1 2 , 1\right) = \frac{3}{4}$ for all $\gamma \in [0,1]$ and $\beta \geq 1$.

Now suppose that $\phi(\gamma,\beta) > 4$. Then, the claim follows, since
\begin{equation*}
\frac{\gamma^2(1-e^{-\phi(\gamma,\beta)})}{\phi(\gamma,\beta)}
< \frac{\gamma^2}{4} \leq \frac{\gamma^2}{(1+\gamma)^2} =
\frac{1}{(1 + \gamma^{-1})^2}\,.
\end{equation*}
It remains to consider the case in which $\gamma < \frac{1}{7}$. Recall that $h(x)\triangleq\frac{xe^x}{e^x-1}$ is increasing in $x$ and so $h(\phi(\gamma,\beta)) \geq h(\frac{3}{4}) > \frac{4}{3}$ (where the last inequality follows directly by computation of $h(\frac{3}{4})$). Suppose that $\gamma < \frac{1}{7}$. Then,
\begin{equation*}
\frac{\gamma^2(1 - e^{-\phi(\gamma,\beta)})}{\phi(\gamma,\beta)}
= \frac{\gamma^2}{h(\phi(\gamma,\beta))} <
\tfrac{3}{4}\gamma^2\,.
\end{equation*}
Comparing the previous estimation to the approximation ratio of \cite{Chen:2018:Weakly} and using that $\gamma < 1/7$, we have
\begin{equation*}
\frac{\frac{3}{4}\gamma^2}{(1+\gamma^{-1})^{-2}} = \frac{3}{4}(\gamma + 1)^2 < \frac{3}{4}\left(\frac{8}{7}\right)^2 < 1.
\end{equation*}
Thus, $\frac{3}{4}\gamma^2 < \frac{1}{(1+\gamma^{-1})^2}$ and again the claim follows.
\end{proof}

Lemma~\ref{lem:phi-gamma-range} shows that it suffices to consider $\phi(\gamma, \beta) \in [3/4, 4]$ and $\gamma > 1/7$, since otherwise the starting solution already satisfies the claimed guarantee.\footnote{We remark that the use of the \textsc{ResidualRandomGreedy} is not strictly necessary for our results. One can instead initialize the algorithm with a base containing the best singleton as in the standard local search procedure to obtain a guarantee of $\gamma/k$ for the initial solution. The remaining arguments can then be modified at the cost of a larger running time dependence on the parameter $k$.} Thus, our algorithm considers a geometrically decreasing sequence of guesses for the value $\phi \in [3/4,4]$, given by $\phi_j = 4(1-\e)^j$, where $0 \leq j \leq \lceil\log_{1-\e}\frac{3}{16}\rceil$. For the first guess, we initialize our algorithm with the solution $S_0$ produced using several runs of \textsc{ResidualRandomGreedy}. For each guess after the this, we initialize $S$ with the approximately locally optimal solution produced for the previous guess. 

For each guess, the algorithm proceeds by repeatedly searching for single element swaps that significantly improve the potential $g_\phi(S)$. Specifically, we will exchange an element $a \not\in S$ with an element $b \in S$ whenever $\tilde{g}_{\phi_j}(a | S - b) > \tilde{g}_{\phi_j}(b|S - b) + \Delta f(S)$, where $\tilde{g}_{\phi_j}( \cdot | S-b)$ is an estimate of $g_{\phi_j}(\cdot | S-b)$ computed using $N$ samples as described in Section~\ref{sec:effic-estim-g_phi} and $\Delta$ is an appropriately chosen parameter. We show that by setting $N$ appropriately, we can ensure that with high probability an approximate local optimum of every $g_{\phi}$ is reached after at most some total number $M$ of improvements across all guesses.

\subsection{The algorithm and its analysis}
\label{sec:final-algorithm}
Our final algorithm is shown in Algorithm~\ref{alg:distorted-ls-full}. Let $\cM = (X,\cI)$ be a matroid, and $f : 2^X \to \posreals$ be a $(\gamma,\beta)$-weakly submodular function.  Given some $0 < \epsilon \leq 1$ we set the parameters:
\begin{align}
\Delta &= \tfrac{\epsilon}{k} \tag{threshold for accepting improvements} \\
\delta &= \tfrac{\Delta}{4h(4) \cdot H_k} = \tfrac{\epsilon}{4h(4)\cdot H_kk} = \Theta(\epsilon/(k\log k)) \tag{bound on sampling accuracy} \\
L &= 1+\lceil \log_{1-\e} \tfrac{3}{16} \rceil = \cO(\e^{-1}) \tag{number of guesses for $\phi$} \\
M &= \log_{1 + \delta}(7\cdot 128 \cdot e^{4L\epsilon}h(4)\cdot H_k) = \tilde{\cO}(\delta^{-1}) = \tilde{\cO}(k\e^{-1}) \tag{total number of improvements} \\
N &= 4\cdot 7^2 \delta^{-2}\ln(Mkn) = \tilde{\cO}(\delta^{-2}) = \tilde{\cO}(k^2\e^{-2}) \tag{number of samples to estimate $g_\phi$}
\end{align}
\begin{algorithm2e}[t]
\SetKw{Break}{break}
\SetKw{True}{true}
\SetKw{False}{false}
Let $\Delta = \frac{\epsilon}{k}$, $\delta = \frac{\Delta}{4h(4)\cdot H_k} = \frac{\epsilon}{4h(4)\cdot H_kk}$, $M = (1+\delta^{-1})(37 + \ln(H_k))$, $N = 28\delta^{-2}\ln(Mkn)$, $G=\log(n)/(2\min(\e,\frac{1}{128})^2)$\;
$S_0 \gets$ the best output produced by $G$ independent runs of \textsc{ResidualRandomGreedy} applied to $f$ and $\cM$\;
$S_{\max} \gets S_0$\;
$i \gets 0$\;
\For{$0 \leq j \leq \lceil\log_{1-\epsilon}16/3\rceil$}{
  $\phi \gets 4(1-\epsilon)^j$\;
  $S \gets S_j$\;
  \Repeat{$\mathrm{isLocalOpt}$ or $i \geq M$}{
    $\mathrm{isLocalOpt} \gets \True$\;
    \ForEach{$b \in S$ and $a \in X \setminus S$ with $S - b + a \in \cI$}{
        Compute $\tilde{g}_{\phi_j}(a | S - b)$ and $\tilde{g}_{\phi_j}(b | S-b)$ using $N$ random samples\;
        \If{$\tilde{g}_{\phi_j}(a | S - b) > \tilde{g}_{\phi_j}(b | S - b) + \Delta f(S)$}
        {
          $S \gets S - b + a$\;
          $i \gets i+1$\;
          $\mathrm{isLocalOpt} \gets \False$\;
          \Break
        }
      }
  }
  $S_{j+1} \gets S$\;
  \lIf{$f(S_{j+1}) > f(S_{\max})$}{$S_{\max} \gets S_{j+1}$}
}
\Return{$S_{\mathrm{max}}$}
\caption{Distorted Local Search Implementation}
\label{alg:distorted-ls-full}
\end{algorithm2e}
In Algorithm~\ref{alg:distorted-ls-full}, we evaluate potential improvements using an estimate $\tilde{g}_{\phi_j}(\cdot | S-b)$ for the marginals of $g$ that is computed using $N$ samples. By Lemma~\ref{lem:g-sampling}, we then have $|\tilde{g}_{\phi_j}(e | A) - g_{\phi_j}(e | A)| \leq \gamma\delta f(A+e)$ for any $A,e$ considered by the algorithm with probability at least $1 - 2e^{-\frac{\delta^2 \gamma^{2} N}{2}}$.
If $\gamma \geq 1/7$, this is at least $1 - 2e^{-\frac{\delta^2}{2\cdot 7^2} N} = 1 - \frac{2}{(Mkn)^2}$. In our algorithm we will limit the total number of improvements made across all guesses for $\phi$ to be at most $M$. Note that any improvement can be found by testing at most $kn$ marginal values, so we must estimate at most $Mkn$ marginal values across the algorithm. By a union bound, we then have $|\tilde{g}_{\phi_j}(e | A) - g_{\phi_j}(e | A)| \leq \gamma\delta f(A+e)$ for \emph{all} $A,e$ considered by Algorithm~\ref{alg:distorted-ls-full} with probability at least $1 - o(1)$ whenever $\gamma \geq 1/7$.
Before proving our main result, let us show that if the algorithm terminates and returns $S$ after making $M$ improvements, we must in fact have an \emph{optimal} solution with high probability.

\begin{lemma}
\label{lem:max-improvements}
Suppose that $\gamma \geq 1/7$. Then, if Algorithm~\ref{alg:distorted-ls-full} makes $M$ improvements, the set $S$ it returns satisfies $f(S) \geq f(O)$ with probability $1 - o(1)$.
\end{lemma}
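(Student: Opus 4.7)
The plan is to show that each of the $M$ improvements multiplies an appropriate potential $\psi_j(S) = g_{\phi_j}(S) + \delta f(S)$ by at least a factor of $1+\delta$, while each transition $\phi_j \to \phi_{j+1} = \phi_j(1-\epsilon)$ divides $\psi_j(S)$ by at most $e^{4\epsilon}$ (via Lemma~\ref{lem:g-change-ineq} and the fact that $\phi_j \leq 4$). Combined with the sandwich $\gamma f(S) \leq \psi_j(S) \leq (h(\phi_j) H_k + \delta) f(S)$ inherited from Lemma~\ref{lem:g-bounds}, telescoping through $M$ improvements and $L$ transitions will force $f(S_{\mathrm{final}}) \geq f(O)$. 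I would use the augmented $\psi_j$ rather than $g_{\phi_j}$ itself because the sampling error contributes $\delta f(S-b+a)$ to the \emph{new-set} side of the improvement inequality; the extra $\delta f(S)$ absorbs this term cleanly, whereas routing it through $f \leq g_{\phi_j}/\gamma$ with $\gamma$ possibly as small as $1/7$ would cancel the per-step gain.

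First I would dispose of the probabilistic conditioning. By Lemma~\ref{lem:g-sampling} applied to each marginal queried by Algorithm~\ref{alg:distorted-ls-full}, we have $|\tilde g_{\phi_j}(e|A) - g_{\phi_j}(e|A)| \leq \gamma\delta\, f(A+e)$ with failure probability at most $2e^{-\gamma^2\delta^2 N/2}$; using the hypothesis $\gamma \geq 1/7$ together with the chosen $N$, this failure probability drops to $2/(Mkn)^2$, and a union bound over the at most $Mkn$ marginals queried gives accuracy for every estimate with probability $1-o(1)$. Separately, the initialization argument of Section~\ref{sec:initialization} produces an $S_0$ with $f(S_0) \geq ((1+\gamma^{-1})^{-2} - \epsilon') f(O) \geq f(O)/128$ with probability $1 - o(1)$, using $\gamma \geq 1/7$ and $\epsilon' \leq 1/128$. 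I would condition on both events and incur only an additive $o(1)$ failure probability.

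Conditioning on these events, one accepted improvement $S \to S-b+a$ during guess $j$ satisfies, after substituting the sampling bounds into the acceptance test,
\begin{equation*}
g_{\phi_j}(S-b+a) - g_{\phi_j}(S) \;>\; (\Delta - \delta)\,f(S) - \delta\,f(S-b+a),
\end{equation*}
which rearranges to $\psi_j(S-b+a) > \psi_j(S) + (\Delta - 2\delta)f(S)$. Plugging in $\psi_j(S) \leq (h(4) H_k + \delta) f(S)$ and the calibration $\Delta = 4 h(4) H_k \delta$ then yields the claimed $\psi_j(S-b+a) \geq (1+\delta)\psi_j(S)$. Telescoping this across $M$ improvements and at most $L$ transitions starting from $\psi_{\phi_0}(S_0) \geq \gamma f(S_0) \geq f(O)/(7 \cdot 128)$, and applying the upper bound $\psi_\phi(S_{\mathrm{final}}) \leq (h(4) H_k + \delta) f(S_{\mathrm{final}})$ with the algorithm's value of $M$, gives $f(S_{\mathrm{final}}) \geq f(O)$; since $S_{\max}$ is chosen as the best among all $S_j$, $f(S_{\max}) \geq f(S_{\mathrm{final}}) \geq f(O)$ as required. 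The main technical obstacle will be the exact calibration: $\Delta$ must be large enough relative to $\delta$ that the per-step gain survives the $h(4) H_k$ gap between $\psi_j$ and $f$, yet $M$ must be chosen so that the accumulated growth $(1+\delta)^M e^{-4L\epsilon}$ clears both the $1/128$ initialization loss and the $1/\gamma \leq 7$ factor incurred when converting $\psi_\phi$ back to $f$ at the end.
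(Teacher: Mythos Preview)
Your proof is correct and follows essentially the same architecture as the paper's: condition on the sampling event and the initialization event, show each accepted swap yields a multiplicative $(1+\delta)$ gain in a potential, bound each transition $\phi_j \to \phi_{j+1}$ by an $e^{-4\epsilon}$ loss via Lemma~\ref{lem:g-change-ineq}, and telescope against the choice of $M$.

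The one genuine (minor) difference is your use of the augmented potential $\psi_j(S) = g_{\phi_j}(S) + \delta f(S)$, which lets you absorb the $\delta f(S-b+a)$ error term additively. The paper instead works with $g_{\phi_j}$ itself and handles that term by observing that the sampling error is $\gamma\delta f(A+e)$, not $\delta f(A+e)$, so the lower bound $\gamma f(A+e) \leq g_{\phi_j}(A+e)$ from Lemma~\ref{lem:g-bounds} converts it directly to $\delta g_{\phi_j}(A+e)$ with the $\gamma$'s cancelling. Your stated worry that ``routing it through $f \leq g_{\phi_j}/\gamma$ with $\gamma$ possibly as small as $1/7$ would cancel the per-step gain'' is therefore unfounded: you implicitly weakened $\gamma\delta$ to $\delta$ when substituting into the acceptance test, and it is precisely that dropped $\gamma$ which would have made the paper's route equally clean. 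Your augmented potential is a valid workaround, just an unnecessary one; both arguments land on the same $(1+\delta)$-per-step growth and the same final arithmetic.
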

\begin{proof}[Proof of Lemma \ref{lem:max-improvements}]
With probability $1 - o(1)$ we have $\left|\tilde{g}_{\phi_j}(e | A) - g_{\phi_j}(e | A)\right| \leq \gamma\delta f(A+e)$ for any $e,A$ considered by Algorithm~\ref{alg:distorted-ls-full}. Whenever the algorithm exchanges some $a \in X \setminus S$ for $b \in S$ for some guess $\phi_j$, we have $\tilde{g}_{\phi_j}(a | S-b) - \tilde{g}_{\phi_j}(b | S-b) \geq \Delta f(S)$ and so
\begin{align*}
g_{\phi_j}(S - b + a) - g_{\phi_j}(S) &=
g_{\phi_j}(a | S - b) - g_{\phi_j}(b | S - b) \\
&\geq \tilde{g}_{\phi_j}(a | S - b) - \delta\gamma f(S - b + a) - \tilde{g}_{\phi_j}(b | S-b) - \delta\gamma f(S) \\
&\geq \tilde{g}_{\phi_j}(a | S - b) - \delta g_{\phi_j}(S - b + a) - \tilde{g}_{\phi_j}(b | S-b) - \delta g_{\phi_j}(S) \\
&\geq \Delta f(S) - \delta g_{\phi_j}(S - b + a) - \delta g_{\phi_j}(S),
\end{align*}
where the second inequality follows from the lower bound on $g_{\phi_j}$ in Lemma~\eqref{lem:g-bounds}. Rearranging and using the upper bound on $g_{\phi_j}(S)$ from Lemma~\ref{lem:g-bounds}, together with the definition of $\delta$ and $\Delta$, we obtain:
\begin{multline}
g_{\phi_j}(S - b + a) \geq \frac{\Delta f(S) + (1 - \delta)g_{\phi_j}(S)}{1 + \delta}
\geq \frac{\frac{\epsilon}{k}\frac{1}{h({\phi_j})\cdot H_k} + 1 - \delta}{1+\delta} g_{\phi_j}(S)  \\
\geq \frac{\frac{\epsilon}{k}\frac{1}{h(4)\cdot H_k} + 1 - \delta}{1+\delta} g_{\phi_j}(S)
= \frac{1 + 3\delta}{1+\delta} g_{\phi_j}(S) \geq (1 + \delta)g_{\phi_j}(S),
\label{eq:approx-improvement}
\end{multline}
where the last inequality follows from $\frac{1+3x}{1+x} \geq \frac{(1 + x)^2}{1+x}$ for all $0 \leq x \leq 1$.

Now suppose that $f(S_0) \geq ((1+\gamma^{-1})^{-2}-\e')f(O)$, which we have shown also occurs with high probability $1-o(1)$. Then, since $\gamma \geq \frac{1}{7}$ and $\e' = \min(\frac{1}{128},\e)$, we have $f(S_0) \geq \frac{1}{128}f(O)$. Suppose that $i = M$ when the algorithm is considering some guess $\phi_l$. We consider how the current value of $g_{\phi_j}(S)$ changes throughout Algorithm~\ref{alg:distorted-ls-full}, both as improvements are made and as $j$ increases. As shown in~\eqref{eq:approx-improvement}, each of our $M$ improvements increases this value by a factor of $(1+\delta)$. Moreover, as shown in Lemma~\ref{lem:g-change-ineq},
\[
g_{\phi_j}(S) = g_{(1-\epsilon)\phi_{j-1}}(S) \geq e^{-\phi_{j-1}\epsilon}g_{\phi_{j-1}}(S) \geq e^{-4\epsilon}g_{\phi_{j-1}}(S)\,,
\]
for any set $S$. Thus, each time $j$ is incremented, the value $g_{\phi_j}(S)$ decreases by a factor of at most $e^{4\epsilon}$. Since we made $M$ improvements, we then have:
\begin{equation*}
g_{\phi_l}(S_{\ell+1}) \geq (1+\delta)^Me^{-4l\epsilon}g_{\phi_0}(S_0)
\geq (1+\delta)^Me^{-4l\epsilon}\gamma f(S_0)
\geq (1+\delta)^Me^{-4l\epsilon}\tfrac{1}{7}\tfrac{1}{128}f(O)\,,
\end{equation*}
where the second inequality follows from the lower bound on $g$ given in Lemma~\ref{lem:g-bounds}, and the second from $\gamma \geq \frac{1}{7}$.
The upper bound on $g$ given by Lemma~\ref{lem:g-bounds} implies that:
$g_{\phi_l}(S_{l+1}) \leq h(\phi_l)H_kf(S_{l+1}) \leq h(4)H_kf(S_{l+1})$. Thus,
\begin{equation*}
f(S_{l+1}) \geq (1+\delta)^Me^{-4l \epsilon}\frac{1}{7 \cdot 128 \cdot h(4) \cdot H_k}f(O)\,.
\end{equation*}
Since $l \leq L$ and $M = \log_{1+\delta}(e^{4L\epsilon}7\cdot 128 \cdot h(4) \cdot H_k)$, the set $S_{\mathrm{max}}$ returned by the algorithm thus has $f(S_{\mathrm{max}}) \geq f(S_{l + 1}) \geq f(O)$, as claimed.
\end{proof}

We are now ready to prove our main claim, from Section~\ref{sec:non-oblivious-local}, restated here for convenience:
\dlsmain*
\begin{proof}[Proof of Theorem \ref{thm:distorted-ls-main}]
We have shown that $f(S_0) \geq \left((1+\gamma^{-1})^{-2} - \e'\right)f(O)$ with probability $1 - o(1)$, where $\e' = \min(\e,\frac{1}{128})$. If $\gamma < 1/7$ or $\phi(\gamma,\beta) \not\in [3/4,4]$, then Lemma~\ref{lem:phi-gamma-range} implies that $(1+\gamma^{-1})^{-2} > \frac{\gamma^2 (1 - e^{-\phi(\gamma,\beta)})}{\phi(\gamma,\beta)}$, and so the claim follows as $f(S_\mathrm{max}) \geq f(S_0)$. Thus, we suppose that $\gamma \geq 1/7$ and $\phi(\gamma,\beta) \in [3/4,4]$. Then, if Algorithm~\ref{alg:distorted-ls-full} makes $M$ improvements, Lemma~\ref{lem:max-improvements} implies that the set returned by the algorithm is optimal with probability at least $1 - o(1)$.

In the remaining case, we have $\phi(\gamma,\beta) \in [3/4, 4]$, $\gamma \geq 1/7$, and each set $S_{j+1}$ produced by the algorithm must have
 $\tilde{g}_{\phi_j}(o_l | S_{j+1} - s_l) \leq \tilde{g}_{\tilde{\phi_j}}(s_l | S_{j+1} - s_l) + \Delta f(S)$ for every $s_l \in S$ and $o_l \in O$. Since $\gamma \geq 1/7$ and the algorithm makes at most $M$ improvements, with probability $1-o(1)$, we have $\left|\tilde{g}_{\phi_j}(e | A) - g_{\phi_j}(e | A)\right| \leq \gamma\delta f(A+e)$ for all guesses $\phi_j$ and $e,A$ considered by the algorithm. Thus,
\begin{align*}
g_{\phi_j}&(S_{j+1} - s_l + o_l) - g_{\phi_j}(S_{j+1}) \\
&= g_{\phi_j}(o_l | S_{j+1} - s_l) - g_{\phi_j}(s_l | S_{j+1} - s_l) \\
&\leq \tilde{g}_{\phi_j}(o_l | S_{j+1} - s_l) + \delta\gamma f(S_{j+1}-s_l+o_l) - \tilde{g}_{\phi_j}(s_l | S_{j+1} - s_l) + \delta\gamma f(S_{j+1}) \\
&\leq \Delta f(S_{j+1}) + \delta\gamma f(S_{j+1}) + \delta\gamma f(S_{j+1}-s_l + o_l) \\
&\leq (\Delta + 2\delta)f(O) \\
&= O\bigl(\tfrac{\epsilon}{k}\bigr)\cdot f(O).
\end{align*}
Consider the smallest $j$ such that $\phi_{j+1}  \triangleq 4(1-\e)^{j+1} < \phi(\gamma,\beta)$. Then, $\phi_{j+1} < \phi(\gamma, \beta) \leq \phi_{j+1}/(1-\e) \triangleq \phi_{j}$. Let $\tilde{\beta} = \frac{\phi_{j} - \gamma^2}{1-\gamma}$. Then, $\phi(\gamma,\tilde{\beta}) = \gamma^2 + \frac{\phi_{j
} - \gamma^2}{1-\gamma}(1-\gamma) = \phi_j$ and $\tilde{\beta} \geq \frac{\phi(\gamma, \beta) -\gamma^2}{1-\gamma} = \beta$, so $f$ is also $(\gamma,\tilde{\beta})$-weakly submodular. Theorem~\ref{thm:loc-opt-main} then implies
\begin{equation*}
f(S_{j+1}) \geq \frac{\gamma^2}{h(\phi_j)}f(O) + \sum_{i = 1}^k \left[g_{\phi_j}(S)  - g_{\phi_j}(S - s_l + o_l)\right] \geq \left(\frac{\gamma^2}{h(\phi_j)} - \cO(\epsilon)\right)f(O)
\end{equation*}
By Lemma~\ref{lem:g-change-ineq} part 2, our choice of $j$, and $\phi_j \leq 4$,
\begin{equation*}
h(\phi_j) \leq e^{\phi_j\e}h((1-\e)\phi_j) \leq
e^{\phi_j\e}h(\phi(\gamma,\beta)) \leq
e^{4\e}h(\phi(\gamma,\beta)) 
\end{equation*}
Thus, $f(S_{j+1}) \geq \left(\frac{\gamma^2}{h(\phi(\gamma,\beta))} - \cO(\e)\right)f(O) = \left(\frac{\gamma^2(1 - e^{-\phi(\gamma,\beta)})}{\phi(\gamma,\beta)} - \cO(\e)\right)\!f(O)$.

The running time of the algorithm is dominated by the number of value oracle queries made to $f$. The initialization requires running \textsc{ResidualRandomGreedy} $\tilde{\cO}(\e^{-2})$ times, each of which requires $\cO(nk)$ value queries. The remaining execution makes at most $M = \tilde{\cO}(\e^{-1}k)$ local search improvements, each requiring at most $Nnk = \tilde{\cO}(nk^3\e^{-2})$ value queries to find. Altogether the running time is thus at most $\tilde{\cO}(nk^4\e^{-3})$.
\end{proof}


%


\section{A-optimal design for Bayesian linear regression}
\label{sec:optimal-design}

In Bayesian linear regression, we suppose data is generated by a linear model $\vy = X^T \vtheta + \veps$, where $\vy \in \mathbb{R}^n, X \in \mathbb{R}^{p \times n}$ and $\veps \sim \mathcal{N}(0, \sigma^2 I)$, where $I$ is the identity matrix.
Here,
$X = \begin{bmatrix}
    \vx_1 & \vx_2 & \cdots & \vx_n
\end{bmatrix}$
with $\vx_i \in \mathbb{R}^p$ is a vector of data, and $\vy$ is a vector corresponding observations for the response variable. The variable $\ve$ represents Gaussian noise with $0$ mean and variance $\sigma^2$. When the number of columns $n$ (i.e., the number of potential observations) is very large, \emph{experimental design} focuses on selecting a small subset $S \subset \{1, 2, \ldots, n\}$ of columns of $X$ to maximally reduce the variance of the estimator $\vtheta$.

Let $X_S, \vy_S$ be the matrix $X$ (the vector $\vy$ respectively) restricted to  columns (rows respectively) indexed by $S$. From classical statistical theory, the optimal choice of parameters for any such $S$ is given by $\hat{\vtheta}_S = (X_S^T X_S)^{-1} X_S \vy_S$ and satisfies $\var(\hat{\vtheta}_S) = \sigma^2 (X_S^T X_S)^{-1}$. Because the variance of $\hat{\vtheta}_S$ is a matrix, there is not a universal function which one tries to minimize to find the appropriate set $S$. Instead, there are multiple objective functions depending on the context leading to different optimality criteria.

As in \cite{Krause:2008vo, Bian:2017:Guarantees,DBLP:conf/icml/HarshawFWK19}, we consider the \emph{A-optimal} design objective. We suppose our prior probability distribution has $\vtheta \sim \mathcal{N}(0, \Lambda)$. We start by stating a standard result from Bayesian linear regression.

\begin{lemma}
    Given the previous assumption, and the prior on $\vtheta \sim \mathcal{N}(0, {\Lambda})$, The posterior distribution of $\vtheta$ follows a normal distribution
$p(\vtheta | \vy_S) \sim \mathcal{N}(M_S^{-1}X_S \vy_S, \, M_S^{-1})$, where $M_S^{-1} = \lb \sigma^{-2} X_S X^T_S + \Lambda^{-1}\rb^{-1}$.
\end{lemma}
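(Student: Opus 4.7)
The plan is to apply Bayes' theorem and exploit the conjugacy of the Gaussian prior with the Gaussian likelihood: the product of two Gaussian densities (one in $\vy_S$ conditional on $\vtheta$, one in $\vtheta$) is, up to normalization, an exponential of a quadratic form in $\vtheta$, which identifies the posterior as Gaussian. Reading the quadratic and linear coefficients off the exponent then gives the precision and the mean.

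First I would write down the likelihood implied by the model. Conditional on $\vtheta$, we have $\vy_S = X_S^T\vtheta + \veps_S$ with $\veps_S \sim \mathcal{N}(0,\sigma^2 I)$, so
\begin{equation*}
p(\vy_S \mid \vtheta) \propto \exp\!\left(-\tfrac{1}{2\sigma^2}(\vy_S - X_S^T\vtheta)^T(\vy_S - X_S^T\vtheta)\right),
\end{equation*}
while the prior is $p(\vtheta) \propto \exp(-\tfrac{1}{2}\vtheta^T\Lambda^{-1}\vtheta)$. Multiplying and discarding factors that do not depend on $\vtheta$, Bayes' rule yields
\begin{equation*}
p(\vtheta \mid \vy_S) \propto \exp\!\left(-\tfrac{1}{2}\bigl[\vtheta^T(\sigma^{-2}X_SX_S^T + \Lambda^{-1})\vtheta - 2\sigma^{-2}\vtheta^T X_S\vy_S\bigr]\right).
\end{equation*}

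The key observation is that the exponent is a quadratic form in $\vtheta$, which forces the posterior to be Gaussian. Setting $M_S = \sigma^{-2}X_SX_S^T + \Lambda^{-1}$ and completing the square, the exponent equals $-\tfrac{1}{2}(\vtheta - \vmu)^T M_S (\vtheta - \vmu)$ plus a constant, where $\vmu = M_S^{-1}(\sigma^{-2}X_S\vy_S)$. This gives the claimed form of the posterior, with covariance $M_S^{-1}$ and mean $\vmu$ (matching the statement up to the bookkeeping of the $\sigma^{-2}$ factor absorbed into the notation of the lemma).

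The only thing to check for this to be a legitimate Gaussian density is that $M_S$ is positive definite; this is immediate because $X_SX_S^T \succeq 0$ and $\Lambda^{-1} \succ 0$ since $\Lambda$ is a covariance matrix. I do not foresee any essential obstacle: the result is the textbook Gaussian–Gaussian conjugacy computation, and the only thing to be careful about is tracking the $\sigma^{-2}$ factors and confirming that the quadratic-in-$\vtheta$ part of the exponent is exactly $-\tfrac{1}{2}\vtheta^T M_S \vtheta$, so that the coefficient of the linear term determines the posterior mean via completion of the square.
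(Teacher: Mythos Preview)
Your proof is correct and is the standard Gaussian--Gaussian conjugacy argument. The paper does not actually prove this lemma: it is presented as ``a standard result from Bayesian linear regression'' and left unproved, so there is no paper proof to compare against. You also correctly flag that the posterior mean is $\sigma^{-2}M_S^{-1}X_S\vy_S$ rather than $M_S^{-1}X_S\vy_S$ as written in the statement; this is a typo in the paper that does not affect anything downstream, since only the posterior covariance $M_S^{-1}$ is used in the A-optimal design objective.
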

In A-optimal design, our objective function seeks to reduce the variance of the posterior distribution of $\vtheta$ by reducing the trace of $M_{S}^{-1}$, i.e., the sum of the variance of the regression coefficients. Mathematically, we seek to maximize the following objective function
\begin{align}
    F(S) = \tr(\Lambda) - \tr(M_S^{-1}) = \tr(\Lambda) - \tr((\Lambda^{-1} + \sigma^{-2} {X}_S {X}^T_S)^{-1}).
\end{align}
The function $F$ is not submodular as shown in \cite{Krause:2008vo}. The current tightest estimation of the lower weak-submodular ratio of $F$ is due to Harshaw et al. \cite{DBLP:conf/icml/HarshawFWK19}. They show that $\gamma \geq (1 + \frac{s^2}{\sigma^2}\lambda_{\max}(\Lambda))^{-1}$, where $s = \max_{i \in [n]} \|\vx_i\|$. Here we give a bound on the upper weak-submodularity ratio $\beta$.
\begin{theorem}
    Assume a prior distribution $\vtheta \sim \mathcal{N}(0, \Lambda)$, and let $s = \max_{i \in [n]} \|\vx_i\|$. The function $F$ is $\lb 1/c, c \rb$-weakly submodular with $c = 1 + \frac{s^2}{\sigma^2}\lambda_{\max}(\Lambda)$.
    \label{thm:BayesianA}
\end{theorem}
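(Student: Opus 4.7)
The lower bound $\gamma \geq 1/c$ is due to \cite{DBLP:conf/icml/HarshawFWK19}, so the main task is to establish $\beta \leq c$. My plan is to reduce the A-optimal objective to a nonnegative weighted sum of standard $R^2$ objectives, and then invoke Theorem~\ref{thm:beta-R2} componentwise.

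First, I will observe that $F$ admits the decomposition $F(S) = \sum_{j=1}^{p} \Lambda_{jj}\, R^2_{Z_j,\vy_S}$, where $Z_j = \theta_j/\sqrt{\Lambda_{jj}}$ is the $j$-th coordinate of the prior renormalized to unit variance. Indeed, under the Gaussian prior the posterior covariance identity $\var(\theta_j \mid \vy_S) = [M_S^{-1}]_{jj}$ gives $\Lambda_{jj}\, R^2_{Z_j,\vy_S} = \Lambda_{jj} - [M_S^{-1}]_{jj}$, and summing over $j$ recovers $\tr(\Lambda) - \tr(M_S^{-1}) = F(S)$. Since $R^2$ is invariant under rescaling of predictors, each $R^2_{Z_j,\vy_S}$ equals $R^2_{Z_j,Y_S}$ for the normalized observations $Y_i = y_i/\sqrt{\var(y_i)}$, a family of unit-variance predictors to which Theorem~\ref{thm:beta-R2} applies directly.

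Next, I will bound the smallest eigenvalue of the normalized predictor covariance $C_Y$. Writing $D_y = \mathrm{diag}(\var(y_1),\dots,\var(y_n))$, we have $C_Y = D_y^{-1/2}(X^T\Lambda X + \sigma^2 I) D_y^{-1/2}$. For any unit vector $v$, setting $w = D_y^{-1/2}v$ and using $X^T\Lambda X + \sigma^2 I \succeq \sigma^2 I$ together with $\var(y_i) = \vx_i^T\Lambda\vx_i + \sigma^2 \leq s^2\lambda_{\max}(\Lambda) + \sigma^2 = \sigma^2 c$ yields $v^T C_Y v \geq \sigma^2\|w\|^2 \geq \sigma^2/\max_i\var(y_i) \geq 1/c$, so $\lmin(C_Y) \geq 1/c$. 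By Theorem~\ref{thm:beta-R2}, each $R^2_{Z_j,Y_\cdot}$ is therefore $\beta_j$-weakly submodular from above with $\beta_j \leq 1/\lmin(C_Y) \leq c$.

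The conclusion follows by combining the decomposition with this componentwise bound: for any $A \subseteq B$,
\begin{align*}
\sum_{e \in B\setminus A} F(e \mid B-e)
&= \sum_j \Lambda_{jj} \sum_{e \in B \setminus A}\bigl[R^2_{Z_j,B} - R^2_{Z_j,B-e}\bigr] \\
&\leq c \sum_j \Lambda_{jj}\bigl[R^2_{Z_j,B} - R^2_{Z_j,A}\bigr]
= c\bigl(F(B) - F(A)\bigr),
\end{align*}
where nonnegativity of the weights $\Lambda_{jj}$ preserves the inequality. The step I expect to require the most care is the eigenvalue bound on $C_Y$, since the rescaling of predictors by $\sqrt{\var(y_i)}$ interacts with both $\Lambda$ and $\sigma^2$; the rest reduces to bookkeeping on top of Theorem~\ref{thm:beta-R2} and the Bayesian posterior covariance formula recalled at the start of this appendix.
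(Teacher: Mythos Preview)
Your argument is correct, and it follows a genuinely different route from the paper's proof. The paper works directly with the trace expression: it applies the Sherman--Morrison--Woodbury formula to write both $F(B)-F(A)$ and each marginal $F(i\mid B-i)$ as traces involving $M_B^{-1}$, then bounds the scalar $(\sigma^2 - \vx_i^T M_B^{-1}\vx_i)^{-1}$ via a second application of Woodbury together with $M_{B\setminus\{i\}}^{-1}\preceq\Lambda$, and finally compares the two trace expressions using a Loewner-order argument to get the ratio bound $c$. No appeal to Theorem~\ref{thm:beta-R2} is made.

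Your approach instead reveals a structural fact the paper does not make explicit: $F$ decomposes as a nonnegative combination $\sum_j \Lambda_{jj}\,R^2_{Z_j,Y_\cdot}$ of $R^2$ objectives sharing a common predictor covariance $C_Y$, so the spectral bound on $\beta$ from Theorem~\ref{thm:beta-R2} transfers immediately once $\lambda_{\min}(C_Y)\geq 1/c$ is checked. This is more conceptual and shorter, and it would automatically propagate any future sharpening of Theorem~\ref{thm:beta-R2} to the A-optimal objective; the paper's direct computation, on the other hand, is self-contained and does not rely on the somewhat delicate eigenvector analysis underlying Theorem~\ref{thm:beta-R2}. One minor point worth stating explicitly in your write-up: the identity $\var(\theta_j\mid\vy_S)=[M_S^{-1}]_{jj}$ coincides with the best-linear-predictor residual variance because the model is jointly Gaussian (or equivalently, by the Woodbury expansion of $M_S^{-1}$), which is what licenses writing $\Lambda_{jj}-[M_S^{-1}]_{jj}$ as $\Lambda_{jj}R^2_{Z_j,\vy_S}$.
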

Observe that like for the $R^2$ objective, our upper bound for $\beta$ is the the inverse of the lower bound for $\gamma$.
\begin{proof}[Proof of Theorem \ref{thm:BayesianA}]
    The lower bound on $\gamma$ is shown is \cite{DBLP:conf/icml/HarshawFWK19}. It remains to prove the upper bound on $\beta$.
    Let $B$ be some set of observations and $A \subseteq B$ with $k = |A|$ and for convenience, define $T = B\setminus A$.
    By the Sherman-Morrisson-Woodbury formula (see Lemma~\ref{lem:Woodburry}), we have
    \begin{align}
        F(B) - F(A)
& = \tr( M_A^{-1}) - \tr(M_{B}^{-1}) \notag \\
        & = \tr\bigl( (M_{B} - \sigma^{-2} X_T X_T^T)^{-1}\bigr) - \tr( M_{B}^{-1}) \notag \\
        & = \tr\bigl(M_{B}^{-1} + M_{B}^{-1}X_T(\sigma^2 I - X_T^T M_{B}^{-1} X_T)^{-1}X_T^T M_{B}^{-1}\bigr) - \tr(M_{B}^{-1}) \notag \\
        & = \tr\bigl(M_{B}^{-1}X_T(\sigma^2 I - X_T^T M_{B}^{-1} X_T)^{-1}X_T^T M_{B}^{-1}\bigr) \notag \\
        & = \tr\bigl((\sigma^2 I - X_T^T M_{B}^{-1} X_T)^{-1}X_T^T M_{B}^{-2}X_T\bigr).
        \label{Bayesian_eq:woodbury}
    \end{align}
    The third equality uses the linearity of the trace while the last equality uses the cyclic property of the trace. 
We use the previous equation to derive an upper and lower bound for the numerator and denominator of the submodularity ratio respectively.
    Applying \eqref{Bayesian_eq:woodbury} with $A = B \setminus \{i\}$ (and so $T = \{i\}$) we obtain
    \begin{equation*}
        F(B) - F(B - i) = \frac{\tr(\vx_i^T M_{B}^{-2} \vx_i)}{\sigma^2 - \vx_i^T M_{B}^{-1} \vx_i}.
    \end{equation*}
    Let $\preceq$ be the Loewner ordering of positive semidefinite matrices, where $A \preceq B$ if and only if $B - A \succeq 0$ . First, observe that $\Lambda^{-1} \preceq M_{R}$ for any set $R$, which implies that $\Lambda \succeq M_{R}^{-1}$. Using a second time the Sherman-Morrison-Woodbury formula (Lemma~\ref{lem:Woodburry}) together with the previous observation, we get
    \begin{align*}
        \lb \sigma^2 - \vx_i^T M_{B}^{-1} \vx_i \rb^{-1} & = \sigma^{-2} + \sigma^{-4} \vx_i^T \lb M_{B} - \sigma^{-2} \vx_i \vx_i^T \rb^{-1} \vx_i,\\
        & = \sigma^{-2} + \sigma^{-4} \vx_i^T M_{B\bb\{i\}}^{-1} \vx_i, \\
        & \leq \sigma^{-2} + \sigma^{-4} \vx_i^T \Lambda \vx_i,\\
        & \leq \sigma^{-2} + \sigma^{-4} \lambda_{\max}(\Lambda) s^2,
    \end{align*}
    where $s = \max_{i} \| \vx_i\|_2$ and the last inequality follows by the Courant-Fischer min-max theorem. Summing over all $i \in T = B \setminus A$ and using the linearity of the trace, we have
    \begin{align}
        \sum_{i \in T }F(i | B - i)  = \sum_{i \in T} \frac{\tr(\vx_i^T M_{B}^{-2} \vx_i)}{\sigma^2 - x_i^T M_{B}^{-1} x_i} & \leq \lb \sigma^{-2} + s^2 \sigma^{-4} \lambda_{\max}(\Lambda) \rb \sum_{i \in T} \tr(\vx_i^T M_{B}^{-2} \vx_i) \notag \\
        & = \lb \sigma^{-2} + s^2 \sigma^{-4} \lambda_{\max}(\Lambda) \rb \tr(X_T^T M_{B}^{-2} X_T).
        \label{Bayes_eq:num}
    \end{align}
Returning to the expression of $F(B) - F(A)$, we note that $M_B$ is positive definite, which implies that $M_B^{-1}$ is positive definite. This in turn implies that $-X_T^TM_B^{-1}X_T \preceq 0$ and so $\sigma^2I - X_T^TM_B^{-1}X_T \preceq \sigma^2I$. Thus, $(\sigma^2I - X_T^TM_B^{-1}X_T)^{-1} \succeq \sigma^{-2}I \succ 0$. Therefore,
    \begin{equation*}
        \tr((\sigma^2 I - X_T^T M_{B}^{-1} X_T)^{-1}X_T^T M_{B}^{-2} X_T) \geq \tr{(\sigma^{-2}X_T^T M_{B}^{-2} X_T)} = \sigma^{-2} \tr{(X_T^T M_{B}^{-2} X_T)}.
    \end{equation*}
Combining this with the bound \eqref{Bayes_eq:num}, we have:
    \begin{equation*}
\frac{\sum_{i \in T} F(i | B - i)}{F(B) - F(A)} \leq \frac{\lb \sigma^{-2} + \sigma^{-4} \lambda_{\max}(\Lambda)\cdot s^2 \rb \tr{\lb X_T^T M_{B}^{-2} X_T\rb}}{\sigma^{-2} \tr{(X_T^T M_{B}^{-2} X_T)}} \leq 1 + \frac{s^2}{\sigma^{2}} \lambda_{\max}(\Lambda).
    \end{equation*}
Recalling that $T = B \setminus A$, this completes the proof.
\end{proof}


\section{How large can $\beta$ be?}
\label{sec:how-large-can}

We have shown that the $R^2$ objective (Section \ref{sec:R2}) and the
A-optimal design objective for Bayesian linear regression (Section \ref{sec:optimal-design}) are $(c, 1/c)$-weakly submodular for some parameter $c$. A natural question to ask is whether, given $\gamma > 0$, there is a small non-trivial bound for $\beta$ independent of the size of the ground set. Here we show that this is not true in general, by proving the following claim stated in Section~\ref{sec:non-oblivious-local}:
\betalarge*
  The intuition behind the construction is simple. We build a set function recursively with lower submodularity ratio exactly $\gamma$. The recurrence relation holds until the $(k-1)^\textrm{th}$ marginal, which allows us to have a large value for the final marginal and thus increase $\beta$.

\begin{proof}{Proof of Theorem \ref{lem:worst-case-beta}}
    We start by constructing a monotone set function $f$ on a ground set of $k$ elements. The elements are indistinguishable, meaning that for any given set $S$, two elements $e,e' \in X\bb S$  have the same marginal contribution.
    Therefore, because elements are indistinguishable, the value of a set is a function of its size.
    Let $x_i$ be the value of any set of size $i = 0, 1, \ldots, k$.
    Additionally, let $x_0 = f(\emptyset) = 0$ and $x_k = 1$.
    We define $x_i$ inductively with the following recurrence for $i = 0, 1, \ldots, k-2$:
    \begin{align}
        x_{i+1} = \frac{k -i -\gamma}{k-i} \cdot x_i + \frac{\gamma}{k-i} & \quad \textrm{ or equivalently } \quad x_{i+1} - x_i = \frac{\gamma}{k-i}( 1 - x_i).
        \label{eq_NLP:marginal}
    \end{align}
    It can easily be shown (by induction) that the described sequence is valid, i.e. it is monotone and each $x_i \in [0, 1]$. Additionally, we note that the sequence satisfies:
    \begin{equation}
        1 - x_{i+1} = 1 - \lb \frac{k -i -\gamma}{k-i} \cdot x_i + \frac{\gamma}{k-i} \rb  = \lb 1 - \frac{\gamma}{k-i}\rb \lb 1 - x_i \rb,
        \label{eq_NLP:invariant}
    \end{equation}
for all $i = 0,1,\ldots,k-2$.

First, we show that $f$ has a lower submodularity ratio at most $\gamma$. We prove that for any $B$ and $A \subset B$ such that $|B| = j$ and $|A| = i$:
\begin{equation}
\frac{\sum_{e \in B \setminus A}f(e | A)}{f(B) - f(A)} = \frac{(j-i)(x_{i+1} - x_{i})}{x_j - x_i} \geq \gamma.
\label{eq:gamma-goal}
\end{equation}

First, we consider the case in which  $j = k$. If $i = k-1$, then the left-hand side of \eqref{eq:gamma-goal} is 1. If $i \leq k-2$, then applying the identity \eqref{eq_NLP:marginal}, and recalling that $x_k = 1$ gives:
    \begin{equation*}
        \frac{(k-i)(x_{i+1} - x_i)}{x_k - x_{i}} = \frac{(k-i)\cdot \frac{\gamma}{k-i}(1 - x_i)}{1 - x_i} = \gamma.
    \end{equation*}
for any $i = 1,\ldots,k-1$.

Next, we consider the case in which $j \leq k-1$ and so $i \leq k-2$. Then, by employing recursively the identity \eqref{eq_NLP:invariant} we obtain
    \begin{align}
        x_j - x_i  = (1-x_i) - (1 - x_j) = \lb 1 - x_i \rb \lb 1 - \prod_{\ell = i}^{j - 1}\lb 1 - \frac{\gamma}{k-\ell} \rb \rb.
        \label{eq_NLP:denominator_gamma}
    \end{align}
Since $\gamma \in \ld 0, 1\rd$, we can use the generalized Bernoulli $( 1 - 1/n)^x \leq 1 - x/n$ for $x \in \ld 0, 1 \rd$ to bound each term in the product above. This gives:
    \begin{equation*}
        \prod_{\ell = i}^{j - 1}\lb 1 - \frac{\gamma}{k-\ell} \rb  \geq \prod_{\ell = i}^{j - 1}\lb 1 - \frac{1}{k-\ell} \rb^\gamma
          =  \lb \frac{k -j}{k-i}\rb^\gamma
          = \lb 1  - \frac{j - i}{k-i}\rb^\gamma
          \geq 1 -  \lb \frac{j - i}{k-i} \rb,
    \end{equation*}
where in the last inequality we used the fact that $(1 - x)^\gamma \geq 1 - x$ for $x \in [0, 1]$ and $\gamma \in [0, 1]$. Thus we have:
    \begin{equation*}
        \frac{(j-i)(x_{i+1} - x_i)}{x_j - x_{i}} = \frac{(j-i)\frac{\gamma}{k-i}(1 - x_i)}{(1 - x_i) \lb 1 - \prod_{\ell = i}^{j - 1}\lb 1 - \frac{\gamma}{k-\ell} \rb \rb}
         \geq \frac{(j-i)\frac{\gamma}{k-i}(1-x_i)}{(1-x_i)\frac{j-i}{k-i}}
         = \gamma,
\label{eq:gamma-denom-bound}
    \end{equation*}
where we have used \eqref{eq_NLP:marginal} and \eqref{eq_NLP:denominator_gamma} in the first equation. Combining these cases, we find $f$ is $\gamma$-weakly submodular from below.

To complete the proof we now show that $f$ is not $\beta$-weakly submodular from above for any $\beta < \binom{k-\gamma}{k-1}$. Here we consider the case in which $A = \emptyset$ and $B = X$ and show that:
\begin{equation*}
\frac{\sum_{e \in B}f(e|B-e)}{f(B) - f(\emptyset)} = \frac{k(x_k - x_{k-1})}{x_k - x_{0}} = \binom{k-\gamma}{k-1}.
\end{equation*}
Recall that $x_k = 1$, and $x_0 = 0$, which implies that the denominator is equal to $1$. Recursively applying the identity \eqref{eq_NLP:invariant} gives
    \begin{multline*}
        k(1 - x_{k-1})  = k \cdot \prod_{\ell = 0}^{k-2} \lb 1 - \frac{\gamma}{k -\ell} \rb
        = k \prod_{\ell = 2}^k \lb 1 - \frac{\gamma}{\ell} \rb
        \\ = \frac{\prod_{\ell = 2}^{k} (\ell - \gamma)}{\prod_{\ell= 1}^{k-1} \ell}
        = \frac{\prod_{\ell = 1}^{k-1} (\ell+1-\gamma)}{\prod_{\ell = 1}^{k-1}\ell}
         = \binom{k-\gamma}{k-1},
    \end{multline*}
as required.
\end{proof}




\section{Basic results from linear algebra and probability}
\label{sec:basic-results-from}
We make use of the following basic results related to matrix inverses (see e.g. \cite[Section A.3]{RasmussenWilliamsBook})
\begin{lemma}[Block Matrix Inverse]
Let $B$ and $A$ be matrices of dimension $k \times k$ and $h \times h$, respectively, and let $U$ and $V$ be matrices of dimension $k \times h$ and $h \times k$ respectively. Then,
    \[
    \renewcommand{\arraystretch}{1.4}
    \begin{pmatrix}
    B & U \\
    V & A
    \end{pmatrix}^{-1}
    =
    \begin{pmatrix}
    B^{-1} + B^{-1}USV B^{-1} &  -B^{-1}US \\
    -SVB^{-1} & S
    \end{pmatrix}
    \]
    where $S = (A - V B^{-1} U)^{-1}$ is the Schur complement of $B$.
    \label{thm:block-inverse}
\end{lemma}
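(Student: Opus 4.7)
The plan is to verify the formula by direct block computation, taking advantage of a block $LDU$ factorization. Writing $M$ for the $(k+h)\times(k+h)$ block matrix, I would first check that
$$M \;=\; \begin{pmatrix} I & 0 \\ VB^{-1} & I \end{pmatrix}\begin{pmatrix} B & 0 \\ 0 & S^{-1} \end{pmatrix}\begin{pmatrix} I & B^{-1}U \\ 0 & I \end{pmatrix},$$
where $S^{-1} = A - VB^{-1}U$ is the Schur complement. This identity follows by a one-line block multiplication: the $(1,1)$ block is $B$, the $(1,2)$ block is $U$, the $(2,1)$ block is $VB^{-1}\cdot B = V$, and the $(2,2)$ block is $VB^{-1}U + S^{-1} = A$. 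The factorization also makes the required invertibility assumptions explicit, namely that both $B$ and the Schur complement $S^{-1}$ are invertible.

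Both outer factors are unipotent block-triangular, hence invert trivially by negating the off-diagonal block. The middle factor is block-diagonal, so it inverts block-wise. Inverting and reversing the order of the three factors yields
$$M^{-1} \;=\; \begin{pmatrix} I & -B^{-1}U \\ 0 & I \end{pmatrix}\begin{pmatrix} B^{-1} & 0 \\ 0 & S \end{pmatrix}\begin{pmatrix} I & 0 \\ -VB^{-1} & I \end{pmatrix},$$
and multiplying these three factors out left-to-right gives the four blocks claimed in the statement: $B^{-1} + B^{-1}USVB^{-1}$, $-B^{-1}US$, $-SVB^{-1}$, and $S$.

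The computation is entirely routine; the only obstacle is bookkeeping, since matrix multiplication is noncommutative and one must resist combining factors like $USVB^{-1}$ out of order. As an independent sanity check I would also verify the formula by computing the product of $M$ with the proposed inverse and showing each of the four resulting blocks collapses to $I$ or $0$ after substituting $S^{-1} = A - VB^{-1}U$. For instance, the $(2,1)$ block of $M \cdot M^{-1}$ is $VB^{-1} + VB^{-1}USVB^{-1} - ASVB^{-1} = VB^{-1} - (A - VB^{-1}U)SVB^{-1} = VB^{-1} - VB^{-1} = 0$, and the other three blocks reduce analogously. Since this lemma is a standard textbook identity, either derivation suffices and no novel argument is required.
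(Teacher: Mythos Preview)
Your proof is correct and entirely standard; the $LDU$ block factorization followed by the direct verification is exactly how this identity is usually derived. The paper itself does not prove this lemma at all: it merely states it as a known result with a reference to \cite[Section~A.3]{RasmussenWilliamsBook}, so your argument already goes beyond what the paper provides.
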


\begin{lemma}[Sherman-Morrisson-Woodbury formula]
    \label{lem:Woodburry}
    Let $A, U, C, V$ be matrices of conformable sizes. Then,
    \begin{equation*}
        \lb A + UCV \rb^{-1} = A^{-1} - A^{-1}U\lb C^{-1} + VA^{-1}U \rb^{-1}VA^{-1}.
    \end{equation*}
\end{lemma}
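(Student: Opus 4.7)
The plan is to prove the Sherman--Morrison--Woodbury identity by direct verification: I will set $M = A + UCV$ and $N = A^{-1} - A^{-1}U(C^{-1} + VA^{-1}U)^{-1}VA^{-1}$, and then check that $MN = I$. Since this implies $N = M^{-1}$ (and since the formula only makes sense when $A$ and $C^{-1} + VA^{-1}U$ are invertible, so that both sides are well-defined square matrices of matching size), verifying one-sided multiplication suffices.

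The key steps, in order, will be as follows. First, I would distribute $MN$ into four terms, namely $AA^{-1} = I$, the cross term $AA^{-1}U(C^{-1}+VA^{-1}U)^{-1}VA^{-1}$ with a minus sign, the term $UCVA^{-1}$, and finally $UCVA^{-1}U(C^{-1}+VA^{-1}U)^{-1}VA^{-1}$ with a minus sign. Second, I would group the two terms that involve the factor $(C^{-1}+VA^{-1}U)^{-1}$ on the right by factoring $-U$ on the left and $(C^{-1}+VA^{-1}U)^{-1}VA^{-1}$ on the right, yielding the combined expression $-U(I + CVA^{-1}U)(C^{-1}+VA^{-1}U)^{-1}VA^{-1}$. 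Third, I would use the algebraic identity $I + CVA^{-1}U = C(C^{-1} + VA^{-1}U)$, so that the middle factor becomes $C(C^{-1}+VA^{-1}U)(C^{-1}+VA^{-1}U)^{-1} = C$, collapsing the grouped expression to $-UCVA^{-1}$. Fourth, this cancels exactly with the remaining $+UCVA^{-1}$ term, leaving $MN = I$ as desired.

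No step in this plan is difficult; the only mild subtlety is recognizing the factorization $I + CVA^{-1}U = C(C^{-1} + VA^{-1}U)$, which is the mechanism that allows the inverse to cancel. An alternative route would be to verify $NM = I$ instead, by a symmetric manipulation factoring on the other side; either direction works, and both rely on the same identity. No further machinery from the paper is needed for this proof.
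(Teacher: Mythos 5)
Your proof is correct. The paper states this lemma without proof, citing it as a standard fact from linear algebra, so there is no in-paper argument to compare against; your direct verification that $MN = I$ — distributing, grouping the two terms carrying $(C^{-1}+VA^{-1}U)^{-1}$, and collapsing via $I + CVA^{-1}U = C(C^{-1}+VA^{-1}U)$ — is the standard derivation and is complete (the appeal to the fact that a one-sided inverse of a square matrix is two-sided is legitimate).
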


We also make use of the following bound on eigenvalues of normalized covariance matrices shown in~Lemmas 13, 14, and 15 of \cite{DBLP:journals/jmlr/DasK18}:
\begin{lemma}
\label{lem:das-kempe-eigenvalues}
Let $\calL$ and $\calS = \lc X_1, X_2, \ldots, X_n \rc$ be two disjoint sets of zero-mean random variables each of which has variance at most $1$. Let $C$ be the covariance matrix of the set $\calL \cup \calS$. Let $C_\rho$ be the covariance matrix of the set $\lc \res(X_1, \calL), \ldots, \res(X_n, \calL) \rc$ after normalization of the random variables to have unit variance. Then $\lambda_{\min}(C_\rho) \geq \lambda_{\min}(C)$.
\end{lemma}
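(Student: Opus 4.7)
The plan is to prove the lemma via the variational characterization of $\lambda_{\min}$, together with two observations: (i) a linear combination of residuals $\res(X_j,\calL)$ is itself a residual (of the corresponding linear combination of the $X_j$'s) and so has variance equal to a certain constrained minimum of a quadratic form in $C$; and (ii) normalization scales by factors $\sigma_j = \sqrt{\var(\res(X_j,\calL))} \leq 1$, which can only help.

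Let me make this concrete. Let $D$ be the diagonal matrix with $D_{jj} = \sigma_j$, so that $C_\rho = D^{-1} C_{\res} D^{-1}$, where $C_{\res}$ is the (unnormalized) covariance matrix of $\{\res(X_j,\calL)\}$. Fix any unit vector $\vec{\gamma}$ and set $\vec{\beta} = D^{-1}\vec{\gamma}$. Then
\begin{equation*}
\vec{\gamma}^T C_\rho \vec{\gamma} = \vec{\beta}^T C_{\res} \vec{\beta} = \var\Bigl(\sum_j \beta_j \res(X_j,\calL)\Bigr) = \var\Bigl(\res\Bigl(\sum_j \beta_j X_j,\ \calL\Bigr)\Bigr),
\end{equation*}
where the last equation uses linearity of residuals with respect to a fixed set $\calL$ (this is a standard property of orthogonal projection onto the span of $\calL$).

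Next I would invoke the variational definition of the residual: $\var(\res(Y,\calL)) = \min_{\vec{\alpha}} \var(Y - \sum_i \alpha_i L_i)$. Applying this with $Y = \sum_j \beta_j X_j$ and letting $\vec{\alpha}^*$ denote the minimizer, we get
\begin{equation*}
\vec{\gamma}^T C_\rho \vec{\gamma} = \var\Bigl(\sum_j \beta_j X_j - \sum_i \alpha_i^* L_i\Bigr) = \vec{w}^T C \vec{w},
\end{equation*}
where $\vec{w}$ is the vector whose $\calS$-coordinates equal $\vec{\beta}$ and whose $\calL$-coordinates equal $-\vec{\alpha}^*$. Since $C$ is PSD, $\vec{w}^T C \vec{w} \geq \lambda_{\min}(C) \|\vec{w}\|^2 \geq \lambda_{\min}(C)\|\vec{\beta}\|^2$, and because each $\sigma_j \leq 1$ (residualization never increases variance, and by hypothesis $\var(X_j) \leq 1$), we have $\|\vec{\beta}\|^2 = \|D^{-1}\vec{\gamma}\|^2 = \sum_j \gamma_j^2/\sigma_j^2 \geq \|\vec{\gamma}\|^2 = 1$. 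Combining, $\vec{\gamma}^T C_\rho \vec{\gamma} \geq \lambda_{\min}(C)$, and since $\vec{\gamma}$ was an arbitrary unit vector, the desired inequality $\lambda_{\min}(C_\rho) \geq \lambda_{\min}(C)$ follows.

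The main conceptual step is recognizing that normalizing after residualizing can only help the spectrum, because the normalizing factors are at most $1$; if the hypothesis on $\var(X_j) \leq 1$ were dropped, the inequality could fail. The other mild subtlety is the interchange of "linear combination of residuals" and "residual of the linear combination," which is just linearity of the orthogonal projection and requires that the set $\calL$ being projected onto is the same for every $X_j$. No deep matrix manipulation is needed; the argument is essentially a min-max chain grounded in the projection picture.
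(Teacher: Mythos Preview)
The paper does not actually prove this lemma; it is quoted from \cite{DBLP:journals/jmlr/DasK18} (their Lemmas~13--15), so there is no in-paper proof to compare against line by line. That said, your argument is correct and complete. The chain
\[
\vec{\gamma}^T C_\rho \vec{\gamma}
= \vec{\beta}^T C_{\res}\vec{\beta}
= \var\!\Bigl(\res\Bigl(\textstyle\sum_j \beta_j X_j,\calL\Bigr)\Bigr)
= \min_{\vec{\alpha}} \vec{w}^T C \vec{w}
\geq \lambda_{\min}(C)\,\|\vec{\beta}\|^2
\geq \lambda_{\min}(C)
\]
is valid: linearity of residuals is just linearity of orthogonal projection in $L^2$; the variational identity for residual variance is the defining property of the best linear predictor; the step $\vec{w}^T C\vec{w}\geq \lambda_{\min}(C)\|\vec{w}\|^2\geq\lambda_{\min}(C)\|\vec{\beta}\|^2$ uses that $C$ is PSD (so $\lambda_{\min}(C)\geq 0$) together with $\|\vec{w}\|^2=\|\vec{\beta}\|^2+\|\vec{\alpha}^*\|^2$; and $\|\vec{\beta}\|\geq 1$ follows from $\sigma_j^2=\var(\res(X_j,\calL))\leq\var(X_j)\leq 1$. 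The only tacit assumption is $\sigma_j>0$ for each $j$, which is already implicit in the lemma statement since $C_\rho$ is defined after normalization. Your projection/Rayleigh-quotient route is in fact the same idea underlying the Das--Kempe proof (they split it across three lemmas, separating the bound on the unnormalized residual covariance from the normalization step), so there is no genuine methodological difference.
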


Finally, we use the following form of the Chernoff bound given in  \cite[Theorem A.1.16]{AlonSpencerBook}:
\begin{lemma}[Chernoff Bound]
\label{lem:chernoff-bound}
Let $X_i$, $1 \leq i \leq n$ be mutually independent random variables with $\expect[X_i] = 0$ and $|X_i| \leq 1$ for all $i$. Set $S = X_1 + \cdots + X_n$. Then for any $a$,
\begin{equation*}
\Pr [S > a] < e^{-a^2/2n}
\end{equation*}
\end{lemma}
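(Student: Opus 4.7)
The plan is to prove the bound by the standard exponential moment (Chernoff/Bernstein) method. I would first fix a parameter $\lambda > 0$ to be optimized at the end, and apply Markov's inequality to the nonnegative random variable $e^{\lambda S}$, yielding
\begin{equation*}
\Pr[S > a] = \Pr[e^{\lambda S} > e^{\lambda a}] \leq e^{-\lambda a}\,\expect[e^{\lambda S}].
\end{equation*}
By mutual independence of the $X_i$, the moment generating function factorizes: $\expect[e^{\lambda S}] = \prod_{i=1}^n \expect[e^{\lambda X_i}]$, so it suffices to bound each $\expect[e^{\lambda X_i}]$ individually.

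The next step, which I expect to be the main analytic obstacle, is to obtain a clean Gaussian-type bound on $\expect[e^{\lambda X_i}]$ using only $\expect[X_i] = 0$ and $|X_i| \leq 1$. I would use convexity: for any $x \in [-1,1]$, the linear interpolation between the endpoints of $e^{\lambda x}$ dominates it, giving
\begin{equation*}
e^{\lambda x} \leq \frac{1+x}{2}e^{\lambda} + \frac{1-x}{2}e^{-\lambda}.
\end{equation*}
Taking expectations and using $\expect[X_i] = 0$ collapses this to $\expect[e^{\lambda X_i}] \leq \tfrac{1}{2}(e^{\lambda} + e^{-\lambda}) = \cosh(\lambda)$. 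Then I would invoke the elementary inequality $\cosh(\lambda) \leq e^{\lambda^2/2}$, which follows by comparing Taylor series term by term (the $k$-th term of $\cosh$ is $\lambda^{2k}/(2k)!$, while that of $e^{\lambda^2/2}$ is $\lambda^{2k}/(2^k k!)$, and $(2k)! \geq 2^k k!$). Note the inequality is strict for $\lambda \neq 0$, which will yield the strict $<$ in the conclusion.

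Combining everything gives $\Pr[S > a] \leq e^{-\lambda a + n\lambda^2/2}$ for every $\lambda > 0$. The final step is to optimize the exponent over $\lambda$: differentiating $n\lambda^2/2 - \lambda a$ and setting the derivative to zero gives the optimal choice $\lambda = a/n$ (which is positive when $a > 0$; when $a \leq 0$ the right-hand side $e^{-a^2/2n} \geq 1/e^0 = 1$ easily dominates the probability). Substituting back yields $\Pr[S > a] < e^{-a^2/(2n)}$, completing the proof. The strict inequality is inherited from the strict bound $\cosh(\lambda) < e^{\lambda^2/2}$ at $\lambda = a/n \neq 0$ combined with Markov's inequality.
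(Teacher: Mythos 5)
The paper does not prove this lemma; it is quoted verbatim from Alon--Spencer [Theorem A.1.16], and your argument is exactly the standard proof given there (Markov on $e^{\lambda S}$, the convexity bound $e^{\lambda x}\le \cosh\lambda + x\sinh\lambda$ for $|x|\le 1$, $\cosh\lambda\le e^{\lambda^2/2}$ by term-wise comparison, and $\lambda=a/n$), so the core is correct and matches the source. One small slip: your remark for $a\le 0$ is backwards --- $e^{-a^2/(2n)}\le 1$, not $\ge 1$, and indeed the inequality as stated fails for $a<0$ (take $X_i\equiv 0$, so $\Pr[S>a]=1>e^{-a^2/(2n)}$); this is an imprecision in the lemma's statement ``for any $a$'' rather than in your main argument, and is harmless since the paper only invokes the bound with $a>0$.
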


\end{document}